\pdfoutput=1
\documentclass[12pt,a4paper]{amsart}
\usepackage[a4paper,inner=2.5cm,outer=2.5cm,top=2.5cm,bottom=2.5cm]{geometry}
\usepackage{amsmath,amssymb,amsthm,enumerate,mathtools,stmaryrd,longtable
}
\usepackage{hyperref}
\hypersetup{colorlinks=true,linkcolor=blue,citecolor=teal,filecolor=magenta,urlcolor=cyan}

\usepackage{makecell}
\usepackage{graphicx}

\usepackage{float}

\usepackage{extarrows}

\usepackage{xcolor}

\newcommand{\C}{\mathbb{C}}

\usepackage[all]{xy}


\theoremstyle{plain}
\newtheorem{theorem}{Theorem}[section]
\newtheorem{proposition}[theorem]{Proposition}
\newtheorem{corollary}[theorem]{Corollary}
\newtheorem{lemma}[theorem]{Lemma}

\theoremstyle{definition}
\newtheorem{definition}[theorem]{Definition}

\makeatletter
\@addtoreset{proofpart}{theorem}
\makeatother
\theoremstyle{remark}
\newtheorem{remark}[theorem]{Remark}

\newcommand{\Z}{\mathbb{Z}}
\newcommand{\cD}{\mathcal{D}}
\newcommand{\cS}{\mathcal{S}}

\newcommand{\cE}{\mathcal{E}}

\newcommand{\cW}{\mathcal{W}}
\newcommand{\cT}{\mathcal{T}}

\newcommand{\VEV}[1]{{\big\langle 0 \big| {#1} \big| 0 \big\rangle}}

\newcommand{\res}{\mathop{\rm res}}

\newcommand{\vev}[1]{\langle 0 | {#1} | 0 \rangle}

\newcommand{\Res}{\mathop{\rm Res}}

\title{Symplectic duality for topological recursion}

\author[B.~Bychkov]{Boris~Bychkov}
\address{B.~B.: current affiliation: Department of Mathematics, University of Haifa, Mount Carmel, 3498838, Haifa, Israel; and Faculty of Mathematics, National Research University Higher School of Economics, Usacheva 6, 119048 Moscow, Russia}
\email{bbychkov@hse.ru}

\author[P.~Dunin-Barkowski]{Petr~Dunin-Barkowski}
\address{P.~D.-B.: Faculty of Mathematics, National Research University Higher School of Economics, Usacheva 6, 119048 Moscow, Russia; HSE--Skoltech International Laboratory of Representation Theory and Mathematical Physics, Skoltech, Bolshoy Boulevard 30 bld. 1, 121205, Moscow, Russia; and NRC ``Kurchatov Institute'' -- ITEP, 117218 Moscow, Russia}
\email{ptdunin@hse.ru}

\author[M.~Kazarian]{Maxim~Kazarian}
\address{M.~K.: Faculty of Mathematics, National Research University Higher School of Economics, Usacheva 6, 119048 Moscow, Russia; and Center for Advanced Studies, Skoltech, Bolshoy Boulevard 30 bld. 1, 121205, Moscow, Russia}
\email{kazarian@mccme.ru}

\author[S.~Shadrin]{Sergey~Shadrin}
\address{S.~S.: Korteweg-de Vries Institute for Mathematics, University of Amsterdam, Postbus 94248, 1090 GE Amsterdam, The Netherlands}
\email{S.Shadrin@uva.nl}

\begin{document}

\begin{abstract} We consider weighted double Hurwitz numbers, with the weight given by arbitrary rational function times an exponent of the completed cycles. Both special singularities are arbitrary, with the lengths of cycles controlled by formal parameters (up to some maximal length on both sides), and on one side there are also distinguished cycles controlled by degrees of formal variables. In these variables the weighted double Hurwitz numbers are presented as coefficients of expansions of some differentials that we prove to satisfy topological recursion.
	
Our results partly resolve a conjecture that we made in~\cite{BDKS-fullysimple} and are based on a system of new explicit functional relations for the more general $(m,n)$-correlation functions, which correspond to the case when there are distinguished cycles controlled by formal variables in both special singular fibers. These  $(m,n)$-correlation functions are the main theme of this paper and the latter explicit functional relations are of independent interest for combinatorics of weighted double Hurwitz numbers. We also put our results in the context of what we call the ``symplectic duality'', which is a generalization of the $x-y$ duality, a phenomenon known in the theory of topological recursion.
\end{abstract}
	
	\maketitle

\setcounter{tocdepth}{3}
\tableofcontents

\section{Introduction}

The main results of this paper concern new cases of topological recursion~\cite{EynardOra-TopoRec} and blobbed topological recursion~\cite{BorotSha-Blobbed} that simultaneously generalize the results on topological recursion in the theory of two-matrix model~\cite{CEO,EO-x-y-1st,EynardBook,DOPS} (bi-colored maps) and double weighted Hurwitz numbers~\cite{ACEH,bychkov2020topological}, as well as their fully simple analogs~\cite{borot2018simple,BCGF21,BDKS-fullysimple}.

Basically, we consider enumeration of double weighted Hurwitz numbers with a possibility that some cycles over $0$ and $\infty$ are not marked and are controlled by extra formal variables. This problem emerged naturally in~\cite{BDKS-fullysimple}, see Conjecture 4.4 there, as a framework for a natural generalization of the so-called $x-y$ duality in the topological recursion~\cite{EO-x-y-1st,eynard2013xy,borot2018simple,BCGFLS-freeprob,Hock-x-y}. In the present paper we generalize this $x-y$ duality to something that we call ``symplectic duality'' which make sense for general double weighted Hurwitz problems


We obtain these new cases of topological recursion and blobbed topological recursion and the respective symplectic duality as straightforward corollaries of our study of the so-called \emph{$(m,n)$-point correlators}. The main body of our paper is devoted to proving a recursion formula for these $(m,n)$-point correlators and then proving loop equations and the projection property for them.

In the rest of the introduction we recall the basic definitions related to topological recursion and the context for the $x-y$ duality which emerges from the theory of two-matrix model and enumeration of maps. Then we introduce the general $(m,n)$-point functions and describe our results on them, namely the recursion formulas and the loop equations and the projection property for them. After that we describe the implications for the new cases of topological recursion, and then we introduce the notion of symplectic duality and discuss 
certain (partly conjectural) chains of symplectic dualities. 

\subsection{Topological recursion and the \texorpdfstring{$x-y$}{x-y} duality} Let us first recall the formulation of topological recursion and some of its basic properties.

\subsubsection{Topological recursion}

Topological recursion of Chekhov, Eynard, and Orantin
\cite{ChekhovEynard,EynardOra-TopoRec,EynardBook} is a recursive computational procedure. It associates to a small input data that consists of a Riemann surface $\Sigma$, a symmetric bi-differential $B$ on $\Sigma^2$ (the Bergman kernel), and two functions $x$ and $y$ on $\Sigma$, subject to some conditions, a system of symmetric differentials $\omega^{(g)}_{n}$ on $\Sigma^n$, $g\geq 0$, $n\geq 1$, as well as some constants $\omega^{(g)}_{0}$, $g\geq 0$.

We assume that $dx$ is meromorphic and has isolated simple zeros $p_1,\dots,p_N\in \Sigma$, $y$ is holomorphic near $p_i$ and $dy|_{p_i}\not=0$, $i=1,\dots,N$.  We also assume that $B$ is meromorphic with the only pole being the order $2$ pole on the diagonal of $\Sigma^2$ with the bi-residue equal to $1$.

The symmetric differentials $\omega^{(g)}_n$, $n\geq 1$, $2g-2+n$, are produced from the initial unstable differentials $\omega^{(0)}_1$ and $\omega^{(0)}_2$ by an explicit recursive procedure:
\begin{align}\label{eq:toprec}
	\omega^{(g)}_{n+1}(z_0,z_{\llbracket n \rrbracket}) \coloneqq \frac 12 \sum_{i=1}^N \res_{w\to p_i} \frac{\int_w^{\sigma_i(w)} \omega^{(0)}_2(z_0,\cdot)}{\omega^{(0)}_{1}(\sigma_i(w))-\omega^{(0)}_{1}(w)} \Bigg( \omega^{(g-1)}_{n+2}(w, \sigma_i(w), z_{\llbracket n \rrbracket})+
	\\ \notag
	\sum_{\substack{g_1+g_2=g,\ I_1\sqcup I_2 = \llbracket n \rrbracket \\ (g_1,|I_1|),(g_2,|I_2|)\not=(0,0)
	}}
	\omega^{(g_1)}_{|I_1|+1} (w,z_{I_1})\omega^{(g_2)}_{|I_2|+1} (\sigma_i(w),z_{I_2})
	\Bigg).
\end{align}
Here $\sigma_i$ is a deck transformation of $x$ near the point $p_i$, $i=1,\dots,N$, by $\llbracket n \rrbracket$ we denote the set $\{1,\dots,n\}$, 
and $z_I$ denotes $\{z_i\}_{i\in I}$ for any $I\subseteq \llbracket n \rrbracket$. If $g=0$, then we assume that $\omega^{(g-1)}_{n+2}=0$. Here and everywhere below, if not specified otherwise, a sum of the form $\sum_{I_1\sqcup\dots\sqcup I_k = A}$ is understood as a sum over ordered collections of sets which are allowed to be empty. The initial differentials are typically chosen as $\omega^{(0)}_1=ydx$ and $\omega^{(0)}_2=B$. Note however that there is a variety of choices of $\omega^{(0)}_1$ and $\omega^{(0)}_2$ which do not affect the recursion~\eqref{eq:toprec} that we discuss below.

The constants $\omega^{(g)}_0$ are defined by the inversion of the boundary insertion operator, and in the simplest cases for $g\geq 2$ the definition is
\begin{equation}
	\omega^{(g)}_0 \coloneqq \frac{1}{2-2g}\sum_{i=1}^N \res_{w\to p_i} \omega^{(g)}_1 (w) \int_{p_i}^w \omega^{(0)}_1,
\end{equation}
but it is different in genera $0$ and $1$, see~\cite[Section 4.2]{EynardOra-TopoRec} and needs a correction in some cases, see a discussion in~\cite{eynard2013xy} for the case of algebraic curves.

\subsubsection{Possible choices for \texorpdfstring{$x$}{x}, \texorpdfstring{$y$}{y}, and unstable differentials} In general, the choice of $\omega^{(0)}_1$ and $\omega^{(0)}_2$ and even the functions $x$ and $y$ is just a matter of convention. Assume we want to let the given $\omega^{(g)}_n$, $2g-2+n>0$, satisfy the recursion~\eqref{eq:toprec}, and we allow any choice of $x$, $y$, $\omega^{(0)}_1$ and $\omega^{(0)}_2$ such that Equation~\eqref{eq:toprec} is satisfied.

The function $x$ is used only to determine the points $p_1,\dots,p_N$, and the same points can be recovered quite often as the critical points of $x^{-1}$, $\log x$, $e^x$, or any other holomorphic function in~$x$. In these cases, $y$ can be changed accordingly such that $\omega^{(0)}_1$ is preserved. For instance, $ydx = (-y)d(-x)=ye^{-x} de^x = yx d\log x = -yx^2 dx^{-1}$ gives in some situations (always, if we stick to the chosen critical points $p_1,\dots,p_N$) alternative choices for $(x,y)$ that preserve  $\omega^{(0)}_1=ydx$, the involutions $\sigma$, and thus the recursion~\eqref{eq:toprec}.

But most important for this paper is that the convention $\omega^{(0)}_1 = y(z)dx(z)$ is not necessary, and often inconvenient. We can change $y$ to $y+F(x)$ for any function $F(x)$, and the recursion~\eqref{eq:toprec} is still satisfied for $\omega^{(0)}_1 = (y(z)+F(x(z))dx(z)$. One can consider it either as a change of $y$ with a subsequent change of $\omega^{(0)}_1$ such that the convention $\omega^{(0)}_1 = ydx$ is preserved, or, alternatively, one can think that $y$ is fixed but $\omega^{(0)}_1$ is chosen to be $(y(z)+F(x(z))dx(z)$ instead of $y(z)dx(z)$.

A similar freedom exists in a choice of $\omega^{(0)}_2$. It can be deformed by any holomorphic bi-differential in the variables $x(z_1)$ and $x(z_2)$ without affecting the result of the recursion. It is convenient sometimes to chose it in the form $\omega^{(0)}_2 = B(z_1,z_2)-dx(z_1)dx(z_2)/(x(z_1)-x(z_2))^2$ instead of just $\omega^{(0)}_2 = B(z_1,z_2)$.

Final remark concerns the rescaling $y\to \alpha y$ with a nonzero constant~$\alpha$. This implies the corresponding rescaling $\omega^{(g)}_n \to \alpha^{2-2g-n} \omega^{(g)}_n$ for all $g,n\geq 0$.

\subsubsection{Loop equations and blobbed topological recursion} The differential obtained by~\eqref{eq:toprec} and considered as a differential in the first argument is a global meromorphic $1$-form on~$\Sigma$. Moreover, all its possible poles are at $p_1,\dots,p_N$. The principal parts of the poles are determined by the so called linear and quadratic loop equations which are essentially an equivalent form of~\eqref{eq:toprec}. Assume that $\Sigma=\C P^1$ is rational. In this case, the requirement that~$\omega^{(g)}_n$ has no poles other than $p_1,\dots,p_N$ is called projection property. If it is satisfied then $\omega^{(g)}_n$ being a rational $1$-form is uniquely recovered from the principal parts of its poles, and the residual expression~\eqref{eq:toprec} is intended to realize the recovery procedure in a closed form, see~\cite{BorotSha-Blobbed,BEO}, and also a short exposition in~\cite[Section 1.1.3]{bychkov2020topological}.

One can consider a situation when for a given system of differentials $\omega^{(g)}_n$ the loop equations are satisfied but the projection property fails. These conditions taken together are called the \emph{blobbed topological recursion}~\cite{BorotSha-Blobbed}. Remark that the blobbed topological recursion is not formally speaking a recursion: there is no control on the principal parts of the poles of these forms outside $p_1,\dots,p_N$, so that $\omega^{(g)}_n$ is not uniquely determined by the imposed restrictions.

\subsubsection{Classical \texorpdfstring{$x-y$}{x-y} duality}
One of the recurring topics in the literature on topological recursion is the so-called $x-y$ duality. Namely, one can replace functions $x$ and $y$ by $\tilde x = y$ and $\tilde y = -x$, and use the same $\tilde \Sigma = \Sigma$ and $\tilde B = B$ to compute via the same procedure the differentials $\tilde\omega^{(g)}_{n}$, $g\geq 0$, $n\geq 1$, as well as the constants $\tilde\omega^{(g)}_{0}$, $g\geq 0$.

Under some conditions, see~\cite{EynardOra-TopoRec,EO-x-y-1st,eynard2013xy}, the statement on the $x-y$ duality reads $\omega^{(g)}_{0}=\tilde \omega^{(g)}_{0}$. In this form, it is often presented as a statement on ``symplectic invariance'' of the constants $\omega^{(g)}_{0}$. The latter interpretation refers to the ``symplectic structure'' $\Omega = dy \wedge dx = d\tilde y \wedge d \tilde x$, for which two different ways to integrate it to a $1$-differential are chosen: $ydx$ and $\tilde y d\tilde x$. Note that
these two choices for the $1$-differentials integrating $\Omega$ determine two choices of the setup for topological recursion mentioned above.

One more often occurring addendum to the theory of topological recursion in the context of the $x-y$ duality is the so-called $(m,n)$-differentials, unifying the systems of differentials constructed above: $\omega^{(g)}_{m,n}$ are $(m+n)$-differentials on $\Sigma^{m+n}$, $g,m,n\geq 0$, with $\omega^{(g)}_{m,0} = \omega^{(g)}_{m}$,  $m\geq 1$, $\omega^{(g)}_{0,n} = \tilde \omega^{(g)}_{n}$, $n\geq 1$, and $\omega^{(g)}_{0,0} = \omega^{(g)}_{0} = \tilde \omega^{(g)}_{0}$. These $(m+n)$-differentials are subject to a system of loop equations and play a crucial role in connecting the two sides of the $x-y$ duality. See more details below.

\subsubsection{Example: bi-colored maps and the two-matrix model}\label{sec:bmaps} Let $t=(t_1,\dots,t_d,0,0,\dots)$ and $s=(s_1,$ $\dots,s_e,0,0,\dots)$ be two sets of formal parameters.  Consider the partition function of the formal Hermitian two-matrix model:
\begin{align}
	\int_{\mathcal{H}_N} dM_1 dM_2 e^{-N (Tr(M_1M_2) -Tr V_1(M_1)-Tr V_2(M_2))}
\end{align}
Here $\mathcal{H}_N$ is the product of two spaces of Hermitian $N\times N$ matrices, $dM_1$ and $dM_2$ are the properly normalized Haar measures, and $V_1(M_1)\coloneqq \sum_{i=1}^{d} \frac{t_i}i M_1^i$ and   $V_2(M_2)\coloneqq \sum_{i=1}^{e} \frac{s_i}i M_2^i$ are two polynomials. See e.g. \cite{EO-x-y-1st} for more details.

The logarithm of the partition function of this matrix model enumerates bi-colored maps on genus $g$ surfaces (the genus is controlled by the parameter $N^{2-2g}$), that is, the ways to glue a genus $g$ surface out of black and white polygons along their sides, such that each edge of the resulting embedded graph is bounding a black and a white polygon. Each black (resp., white) $p$-gon carries the variable $t_p$, $p=1,\dots,d$ (resp., $s_p$, $p=1,\dots,s$), and black (resp., white) $p$-gons with more than $d$ (resp., $e$) sides are forbidden. Each of the resulting polygonal decompositions of a genus $g$ surface is considered up to isomorphisms and is counted with the weight equal to the inverse order of its automorphism group.

Define $\omega^{\bullet}_{m,n}(\xi_1,\dots,\xi_m,\zeta_{m+1},\dots,\zeta_{m+n})$ as the cumulants of this matrix model by
\begin{align} \label{eq:cumulants-2MM}
	\omega^{\bullet}_{m,n}\coloneqq \frac{\int_{\mathcal{H}_N} \prod_{i=1}^m Tr \frac{-d\xi_i}{\xi_i-M_1} \prod_{j=1}^n Tr \frac{-d\zeta_{m+j}}{\zeta_{m+j}-M_2} dM_1 dM_2 e^{-N (Tr(M_1M_2) -Tr V_1(M_1)-Tr V_2(M_2))}}{	\int_{\mathcal{H}_N} dM_1 dM_2 e^{-N (Tr(M_1M_2) -Tr V_1(M_1)-Tr V_2(M_2))}}
\end{align}
These $\omega^{(g),\bullet}_{m,n}$ are the expansions of some $(m,n)$-differentials as above corresponding to one of the possible choices of the $x-y$ duality in this case. They have a clear combinatorial interpretation: we consider bi-colored maps on possibly disconnected surfaces (the parameter $N$ controls the Euler characteristic), with $m$ (resp., $n$) distinguished white (resp., black) polygons. Distinguished polygons are ordered, one of their sides is marked (so these are the so-called rooted polygons), and the $i$-th distinguished $p$-gon is white and labeled by $-d\xi_i/\xi_i^{p+1}$ for $i=1,\dots,m$ (resp., is black and labeled by $-d\zeta_i/\zeta_i^{p+1}$ for $i=m+1,\dots,m+n$) instead of $t_p$ (resp., $s_p$).

The connected two-matrix model cumulants (with a sign twist), $(-1)^n\omega^{(g)}_{m,n}$, are exactly the system of $(m,n)$-differen\-tials for a certain particular $x-y$ duality~\cite{EO-x-y-1st,CEO,EynardBook,DOPS}. In this case, we have
\begin{align} \label{eq:x-y-FirstExample}
	x & = \xi; &
	y & = -\zeta; &
	\omega^{(0)}_1 & = ydx + \Big(\frac 1x + \sum_{i=1}^d t_i x^{i-1}\Big)dx;
	\\ \notag
	 \tilde x & = -\zeta
	 &  \tilde y & = -\xi
	 & \tilde\omega^{(0)}_1 & = \tilde yd\tilde x -\Big(-\frac 1{\tilde x} + \sum_{i=1}^e s_i (-\tilde x)^{i-1}\Big)d\tilde x,
\end{align}
where $x$ and $y$ are some global rational functions on $\mathbb{C}\mathrm{P}^1$.

\subsection{Results of the paper}
To introduce the objects studied in this paper, we pass to the language of vacuum expectation values and KP integrability (see Sect.~\ref{sec:Fock} below for more details; see also Section~\ref{sec:mnpoint} for the definitions directly in terms of Schur polynomials, from which the relation to Hurwitz numbers is clear).

\subsubsection{Vacuum expectation values and hypergeometric type correlator functions} \label{Sec:vaccum} Consider the charge~$0$ Fock space~$\mathcal{V}_0$ whose bosonic realization is presented as $\mathcal{V}_0\cong \mathbb{C}[[q_1,q_2,q_3,\dots]]$. Define the operators $J_k$, $k\in\Z$, acting on $\mathcal{V}_0$ as $J_k=k\partial_{q_k}$, $J_{-k}=q_k$  (the operator of multiplication by $q_k$) if $k>0$, and $J_0=0$. Given a formal power series $\hat\psi(\theta,\hbar)$ in $\theta$ and $\hbar^2$ such that $\hat\psi(0,0)=0$, we introduce also the operator $\mathcal{D}_{\hat\psi}$ acting diagonally in the basis of Schur functions indexed by partitions $\lambda$, $|\lambda| \geq 0$,
\begin{equation}
\mathcal{D}_{\hat\psi}\colon s_\lambda \mapsto e^{\sum_{(i,j)\in \lambda} \hat\psi(\hbar\,(j-i),\hbar^2)} s_\lambda. 	
\end{equation}
Let $| 0 \rangle$ denote $1\in \mathcal{V}_0$ and let $ \langle 0 | \colon \mathcal{V}_0 \to \mathbb{C}$ be the extraction of the constant term.

In these terms, the disconnected $(m,n)$-point functions $H_{m,n}^\bullet$, and differentials $\omega_{m,n}^\bullet$ of our interest are defined as the following vacuum expectation values (VEVs):
\begin{align}\label{eq:Hfirstmn}
	H^{\bullet}_{m,n} &= \VEV{\prod_{i=1}^m \Big(\sum_{k=1}^\infty \tfrac{X_i^{k}}{k} J_k \Big)e^{\sum_{i=1}^\infty \frac{t_iJ_i}{i\hbar}}\cD_{\hat\psi}
	e^{\sum_{i=1}^\infty \frac{s_iJ_{-i}}{i\hbar}}
	\prod_{j=m+1}^{m+n} \Big(\sum_{k=1}^\infty \tfrac{Y_{j}^k}{k}J_{-k}  \Big)
	},\\
	\omega^{\bullet}_{m,n}
   &=d_1\dots d_{m+n}H_{m,n}^\bullet\\
   &= \VEV{\prod_{i=1}^m \Big(\sum_{k=1}^\infty {X_i^{k}} J_k \Big)e^{\sum_{i=1}^\infty \frac{t_iJ_i}{i\hbar}}\cD_{\hat\psi}
	e^{\sum_{i=1}^\infty \frac{s_iJ_{-i}}{i\hbar}}
	\prod_{j=m+1}^{m+n} \Big(\sum_{k=1}^\infty {Y_{j}^k}J_{-k}  \Big)
	} \prod_{i=1}^m \frac{dX_i}{X_i} \prod_{j=m+1}^{m+n} \frac{dY_{j}}{Y_{j}}.\notag
\end{align}

Using inclusion-exclusion formulas one can pass to the so-called connected VEVs denoted by $\vev{\cdots}^\circ$, which are related to connected correlator functions $H_{m,n}$ (resp.,  forms $\omega_{m,n}$) in exactly the same way. There are automatic restrictions on the exponent of $\hbar$ entering the connected functions so that they admit the genus expansion, $H_{m,n}=\sum_{g\ge0}\hbar^{2g-2+m+n}H_{m,n}^{(g)}$ and we have
\begin{align}\label{eq:Hfirstmncon}
	\sum_{g\ge0}\hbar^{2g-2+m+n}H_{m,n}^{(g)} &= \vev{\prod_{i=1}^m \Big(\sum_{k=1}^\infty \tfrac{X_i^{k}}{k} J_k \Big)e^{\sum_{i=1}^\infty \frac{t_iJ_i}{i\hbar}}\cD_{\hat\psi}
	e^{\sum_{i=1}^\infty \frac{s_iJ_{-i}}{i\hbar}}
	\prod_{j=m+1}^{m+n} \Big(\sum_{k=1}^\infty \tfrac{Y_{j}^k}{k}J_{-k}  \Big)
	}^\circ\\\label{eq:Hfirstmnend}
	\omega^{(g)}_{m,n}
   &=d_1\dots d_{m+n}H_{m,n}^{(g)}
\end{align}

The functions $H_{m,n}^{(g)}$ and the forms $\omega_{m,n}^{(g)}$ are the main objects of our research. These are formal power series in the variables $X_1,\dots,X_m,Y_{m+1},\dots,Y_{m+n}$, while the quantities $t_1,t_2,\dots$, $s_1,s_2,\dots$, as well as the coefficients of the series $\hat\psi$ are regarded as parameters of the problem.

The cumulants of the matrix model~\eqref{eq:cumulants-2MM} correspond to the special case $\hat\psi(\theta)=\log(1+\theta)$ and $t_i=0$, $i>d$, $s_j=0$, $j>e$ (with the substitution $\xi = X^{-1}$, $\zeta = Y^{-1}$, and $N=\hbar^{-1}$), see~\cite{goulden2008kp,guaypaquet20152d,HarnadOrlov,Kazarian_2015,ALS}. In general, the coefficients of $H_{m,n}^{(g)}$ have the combinatorial meaning, dual to that of mentioned in Section \ref{sec:bmaps}, 
of generalized weighted double Hurwitz numbers enumerating ramified coverings of the sphere. The coverings in question have two distinguished ramification points, zero and infinity.  There are $m$ marked preimages of $0$ and $n$ marked preimages of $\infty$ whose ramification orders correspond to the exponents of $X$ and $Y$ variables, respectively. Besides, there are unmarked preimages of $0$ and $\infty$ whose ramification orders correspond to the indices of $t$ and $s$ variables, respectively. The ramification type over the points different from $0$ and $\infty$ is encoded in the coefficients of the series $\hat\psi$. This relation to counting ramified coverings becomes clear when the definitions are rephrased directly in terms of Schur polynomials, see Section~\ref{sec:mnpoint}.

\subsubsection{The results on \texorpdfstring{$(m,n)$}{(m,n)}-functions and topological recursion} Let us impose certain restrictions on the parameters of the problem. Namely, we assume that the $t$ and $s$ parameters are specialized such that they have only finitely many nonzero entries, $t=(t_1,\dots,t_d,0,0,\dots)$, $s=(s_1,\dots,s_e,0,0,\dots)$. Besides, we assume that $\partial_\theta\hat\psi(\theta,0)$ and the coefficient of every positive power of $\hbar$ in $\hat\psi$ are rational. With these assumptions we introduce the notion of \emph{spectral curve} of the problem (see~Sect.~\ref{def:spectralcurve}) which is $\Sigma=\C P^1$ with an affine coordinate~$z$ and holomorphic functions $X,Y$ on it possessing the following properties: $X$ is a local coordinate at $z=0$, $Y$ is a local coordinate at $z=\infty$, and the forms $dX/X$ and $dY/Y$ extend as global meromorphic (i.e. rational) $1$-forms on $\Sigma$. We assume that the parameters are chosen in a generic way so that the forms $dX/X$ and $dY/Y$ have only simple zeroes. In fact, the spectral curve is determined by the restriction $\psi(\theta)=\hat\psi(\theta,0)$ only. We may think, therefore, of $\hat\psi$ as an $\hbar^2$-deformation of the function~$\psi$ that does not affect the equation of the spectral curve.

We prove that for any triple $(g,m,n)$ with $2g-2+m+n>0$ the form $\omega_{m,n}^{(g)}$ written at the corresponding local coordinates and regarded as an $(m+n)$-differential on $\Sigma^{m+n}$ extends as a global meromorphic $(m+n)$-differential (Theorem~\ref{th:rat}).

Moreover, we prove certain equalities relating these forms for different triples of $(g,m,n)$ and show that these equalities allow one to compute $\omega_{m,n}^{(g)}$ inductively in a closed form (Theorems~\ref{th:formalHz} and~\ref{th:mainrecursion}).
As an example, let us demonstrate here a simplified version of this relation for the case $g=0$. It relates the functions $H^{(0)}_{m+1,n}$ and $H^{(0)}_{m,n+1}$ through the change between the variables $X=X_{m+1}$ and $Y=Y_{m+1}$ entering these functions implied by the spectral curve equation. Denote
\begin{equation}
DH^{(0)}_{m,n+1}=Y\partial_Y H^{(0)}_{m,n+1},\qquad m+n>1,
\end{equation}
and in the exceptional cases $m+n=1$ we set explicitly
\begin{equation}
DH^{(0)}_{0,2}(Y,Y_2)=\tfrac{Y(z)}{Y'(z)}\tfrac{1}{z_2-z},\qquad DH^{(0)}_{1,1}(X_1,Y)=\tfrac{Y(z)}{Y'(z)}\tfrac{z_1z^{-1}}{z-z_1}.
\end{equation}
Denote $M=\{1,\dots,m\}$, $N=\{m+2,\dots,m+n+1\}$. For $J=\{j_1,\dots,j_k\}\subset M$ we set $X_J=(X_{i_1},\dots,X_{j_k})$, and we treat $Y_J$ in a similar way. Then, for $m+n\ge2$ we have
\begin{multline}
H^{(0)}_{m+1,n}+H^{(0)}_{m,n+1}=-
\sum_{r=2}^{m+n}\frac{1}{r!}\sum_{j=0}^{r-2}\bigl(X\partial_X\bigr)^j\biggl(\\
\frac{X}{dX}\frac{dY}{Y}\;
[v^j]\tfrac{e^{-v\psi(\theta)}\partial_\theta^{r-1}e^{v\psi(\theta)}}{v}\Bigm|_{\theta=\Theta}
\sum_{\substack{M\cup N=\sqcup_{i=1}^r K_i~
K_i\ne\varnothing\\~I_i=K_i\cap M,~J_i=K_i\cap N}}
\prod_{i=1}^r DH^{(0)}_{|I_i|+1,|J_i|}(X_{I_i};X;Y_{J_i})\biggr)+{\rm const}.
\end{multline}
Here $\Theta=\Theta(z)$ is a certain Laurent polynomials entering (along with $X=X(z)$ and $Y=Y(z)$) the equation of spectral curve and ${\rm const}$ is a certain function in $X_M,Y_N$ but independent of~$z$. In order to apply this formula we observe that $e^{-v\psi(\theta)}\partial_\theta^{r-1}e^{v\psi(\theta)}$ is a polynomial of degree $r-1$ in $v$ without free term. We will show that this relation actually determines both $H^{(0)}_{m+1,n}$ and $H^{(0)}_{m,n+1}$ as rational functions in~$z$.

We prove also that under certain explicitly formulated additional assumptions on $\hat\psi$ (which are satisfied in the special case~\eqref{eq:PQh}--\eqref{eq:PQ} discussed below) the forms $\omega^{(g)}_{m,n}$ can be integrated so that $H^{(g)}_{m,n}$'s are also rational for all $g\ge0$.

The form $\omega_{m,n}^{(g)}$ has poles at zeroes of $dX/X$ with respect to each of the first $m$ arguments and it has poles at zeroes of $dY/Y$ with respect to each of the last $n$ arguments. We prove that the principal parts of these poles satisfy a series of loop equations, including the linear and quadratic ones (Theorem~\ref{th:formalWz} and Corollary~\ref{cor:HLE}).

The last assertion directly implies 
that the forms $\omega_{m,0}^{(g)}$ satisfy the blobbed topological recursion, and also the forms $\omega^{(g)}_{0,n}$ satisfy (another) blobbed topological recursion.

The last step in establishing the true topological recursion is the projection property. We can reformulate the problem in the following way. Assume we are given the function $\psi(\theta)$ such that $\psi'(\theta)$ is rational. It defines the spectral curve, and thus, two topological recursions on it, one with the poles at the critical points of $X$, and another one with the poles at the critical points of $Y$. The question is whether there exists an $\hbar^2$-deformation $\hat\psi$ of $\psi$ such that the forms obtained by the topological recursions coincide with the forms $\omega^{(g)}_{m,0}$ and $\omega^{(g)}_{0,n}$ respectively, defined as the corresponding VEV's? We answer this question positively in~\cite{bychkov2020topological} for a large variety of cases in the situation when all $t$-parameters are equal to zero. In this paper, we show that the answer obtained in~\cite{bychkov2020topological} can be extended to the setting of the present paper. Namely, we show that the projection property is satisfied (and thus, both topological recursions holds true) if $\hat\psi$ is chosen in the following form
\begin{equation}\label{eq:PQh}
e^{\hat\psi(\theta,\hbar)}=R(\theta)e^{\frac{P(\theta+\hbar/2)-P(\theta-\hbar/2)}{\hbar}},
\end{equation}
where $R$ is a rational function and $P$ is a polynomial (Theorem~\ref{th:projpr}). This function is an $\hbar^2$-deformation of~$\psi(\theta)$ defined by
\begin{equation}\label{eq:PQ}
e^{\psi(\theta)}=R(\theta)e^{P'(\theta)}.
\end{equation}
The requirement that $\psi'$ is rational is satisfied since $\psi'=R'/R+P''$. A choice of~$\psi$ in this form
covers most of the known cases of enumerative problems for different kinds of special Hurwitz numbers (ordinary, monotone, strictly monotone, $r$-spin Hurwitz numbers, Bousquet-M\'elou--Schaeffer numbers, enumeration of maps, hypermaps, etc., see~\cite{bychkov2020topological}). It also demonstrates the importance of $\hbar^2$-deformations (it is crucial, for example, for the study of $r$-spin Hurwitz numbers, see~\cite{DKPS,bychkov2020topological}). The only reasonable 
case which is not covered by our considerations is the one when the polynomial $P'$ is replaced by a rational function.

Remark that the concept of topological recursion was initially introduced in order to simplify explicit computation of correlator functions. However, even if it works its practical use in the cases considered in the present paper is quite restrictive, because in order to apply it one needs the knowledge of explicit positions of all critical points of the~$x$ function, and they are given usually by complicated algebraic equations. In opposite, recursions of Theorems~\ref{th:formalHz} and~\ref{th:mainrecursion} provide an efficient way to compute $H^{(g)}_{m,n}$ explicitly in a closed form for small particular values of $(g,m,n)$ and without any restrictive assumption on the initial data of the problem.

Remark also that even for those cases when the projection property is satisfied this property along with the loop equations is not sufficient to determine uniquely $\omega^{(g)}_{m,n}$ in the case when both $m>0$ and $n>0$. This is due to the fact that this form still has additional poles on the diagonals $z_i=z_j$ where $i\in\{1,\dots,m\}$ and $j\in\{m+1,\dots,m+n\}$, and we have no control on the principal parts of these poles at the moment. That is one of the reasons why the general theory of topological recursion at its present stage does not cover the case of $(g,m,n)$ differentials with arbitrary $m$ and $n$.

\subsubsection{Formal vs analytic setups}
The $n$-point functions $H^{(g)}_{m,n}$ and differentials $\omega^{(g)}_{m,n}$ regarded as formal power series are defined by~\eqref{eq:Hfirstmn}--\eqref{eq:Hfirstmnend} with no restriction on the sets of $(t,s)$-parameters and the series $\hat\psi$. The restrictions of the previous section are needed to guarantee the analytic properties of the extension of these functions to the spectral curve. Without these analytic properties the very discussion of topological recursion and loop equations has no meaning. However, the recurrence relations of Theorem~\ref{th:mainrecursion} and even the equation of the spectral curve make sense in the formal case. In fact, we first prove these relations in the setting of formal series in Sect.~\ref{sec:formalprel}--\ref{sec:formalfinal} with no restrictions imposed on the set of parameters. Then, we analyze in Sect.~\ref{sec:rat}--\ref{sec:projprop} the analytic properties of these function, and this requires to impose certain restrictions on the set of parameters formulated at the beginning of the previous section.

\subsubsection{Specializations}
In Section~\ref{sec:Ex} we discuss various specializations and examples of the results of the present paper. In particular, Section~\ref{sec:t0} is devoted to the $t=0$ specialization. In this case we do not get any new topological recursion results, as all the respective results for the respective $H_{m,0}^{(g)}$ functions are already proved in \cite{bychkov2021explicit,bychkov2020topological} (while $H_{0,n}^{(g)}$ functions are identically zero). However, this $t=0$ case is still interesting, as in this case we can write explicit closed formulas for the $H_{m,n}^{(g)}$ functions (not just a recurrence relation).

\subsection{A variety of \texorpdfstring{$x-y$}{x-y} dualities and the symplectic duality}
It would be useful to put the results formulated in the previous section to the general context of the $x-y$ duality for topological recursion and its extension that we call the symplectic duality.

\subsubsection{A variety of \texorpdfstring{$x-y$}{x-y} dualities and setups for topological recursion}
Let the forms  $\omega^{(g)}_{n}$ be defined by topological recursion with some initial data.
Recall that adding to $y$ an arbitrary function of $x$ does not change any $\omega^{(g)}_{n}$ for $2g-2+n\geq 0$ (and $\omega^{(0)}_{1}$ is not produced by the recursion, it is merely a convenient convention that it should be equal to $ydx$).  Thus, $y\to y+F(x)$ does not change the topological recursion, but it substantially changes the other side of the $x-y$ duality, since now $\tilde x = y + F(x)$ and $\tilde y =- x$. Note that despite the shift $y\to y+F(x)$ the invariant ``symplectic form'' $\Omega$ is still the same, $dx\wedge dy$.

This leads to a natural question whether there are reasonable, i.~e. non-artificial examples of chains of several setups for topological recursion related by $x-y$ dualities with suitable shifts of $y$'s as above. Non-artificial here means that we want the resulting $\omega^{(g)}_{n}$'s to have an enumerative meaning in some expansions. The only known example of this type in the literature with two different possible choices of the dual problem comes from the theory of maps / fully simple maps / the Hermitian two-matrix model, which we briefly recall below.

Remark that the authors of many papers on topological recursion (including those of the present paper) usually introduce some artificial corrections to the unstable functions (and to speculate on the philosophic meaning of these corrections), in order to represent studied relations in a shorter form. However, for the rest of this section, in order to avoid confusion with the variables to be exchanged by duality, we shall always list explicitly $x$, $y$, and $\omega^{(0)}_1$ as well as $\tilde x$, $\tilde y$, and $\tilde \omega^{(0)}_1$, and we never assume the convention $\omega^{(0)}_1 = ydx$ by default; instead, we will always represent the forms $\omega^{(0)}_1$ and $\tilde \omega^{(0)}_1$ in their original unmodified forms.


\subsubsection{Fully simple maps}
One of the instances of $x-y$ dualities for maps is given by~\eqref{eq:x-y-FirstExample} as discussed above. Another example of an $x-y$ duality for maps, for a different choice of $y$, is given by the enumeration of the so-called fully simple maps. A fully simple map is a bi-colored map with all distinguished polygons of just one fixed color and with all polygons of the other color being $2$-gons, with an extra condition that the distinguished polygons do not have common vertices (which implies that at least some $t_i\not=0$ for $i\geq 3$). See e.g. \cite{borot2018simple} for more details. The corresponding $n$-point generating function is given by
\begin{align}
	W^{\bullet}_{n} =  \VEV{\prod_{i=1}^n \Big(\sum_{j=1}^\infty J_j w_i^{j} \Big)\cD_{-\log(1+\theta)} e^{\sum_{i=1}^d \frac{t_iJ_i}{i\hbar}}\cD_{\log(1+\theta)}
		e^{\frac{J_{-2}}{2\hbar}}
		 \Big)
	}
\end{align}
With this new definition, we obtain a new $x-y$ duality statement for enumeration of maps. The corresponding (disconnected) $(m,n)$-differentials are given by
\begin{align}
	& \omega^{\bullet}_{m,n} =  \VEV{\prod_{i=1}^n \Big(\sum_{j=1}^\infty J_j w_{m+i}^{j} \Big)\cD_{-\log(1+\theta)} \prod_{i=1}^m \Big(\sum_{j=1}^\infty J_j X_i^{j} \Big) e^{\sum_{i=1}^d \frac{t_iJ_i}{i\hbar}}\cD_{\log(1+\theta)}
	e^{\frac{J_{-2}}{2\hbar}}
}  \\ \notag
& \qquad \quad \cdot \prod_{i=1}^m \frac{dX_i}{X_i} \prod_{i=1}^n \frac{dw_{m+i}}{w_{m+i}}.
\end{align}
In this case, the data for the $x-y$ duality is as follows:
\begin{align}
	x &= \frac{1}{X}; & y &= -w;& \omega_1^{0} &= ydx + \frac{dx}{x}; \\ \notag
	\tilde x &= w; & \tilde y & = -\frac 1X& \tilde \omega_1^{0} &= \tilde y\tilde dx -\frac{d\tilde x}{\tilde x},
\end{align}
where $x$ and $y$ are some global rational functions on $\mathbb{C}\mathrm{P}^1$, and the function $x$ and the differential $\omega_1^{0}$ are exactly the same as in Equation~\eqref{eq:x-y-FirstExample} for the choice of parameters $s_i=\delta_{i,2}$.

\subsubsection{Summary of two \texorpdfstring{$x-y$}{x-y} dualities}

Let us resume the observations made above. For a non-trivial choice of the parameters $t=(t_1,\dots,t_d,0,0,\dots)$ (with at least one $t_i\not=0$ for $i\geq 3$) and the fixed choice of the parameters $s=(0,1,0,0,\dots)$, we consider the system of $\omega^{(g)}_{n}$'s given by
\begin{align} \label{eq:usual-maps}
	\omega^{\bullet}_n = \VEV{ \prod_{i=1}^m \Big(\sum_{j=1}^\infty J_j X_i^{j} \Big) e^{\sum_{i=1}^d \frac{t_iJ_i}{i\hbar}}\cD_{\log(1+\theta)}
		e^{\frac{J_{-2}}{2\hbar}}
	}  \prod_{i=1}^m \frac{dX_i}{X_i} .
\end{align}
They satisfy the topological recursion for the input data given by the curve $\mathbb{C}\mathrm{P}^1$, the Bergman kernel $B=dz_1dz_2/(z_1-z_2)^2$ in some global coordinate $z$, some global rational function $x=x(z)$ such that $x(z)^{-1}$ can serve as a local coordinate at $z=\infty$ and such that Equation~\eqref{eq:usual-maps} gives the expansion at $z=\infty$ in this local coordinate $X=x(z)^{-1}$, and some choice of function $y$.
The enumerative meaning of the coefficients of the expansions of $\omega^{(g)}_n$ in $X$ at $z_1=\cdots=z_n=\infty$ in the local coordinates, $X_i=x(z_i)^{-1}$ is given by the count of the ordinary maps, that is, the bi-colored maps with $n$ distinguished white polygons such that all black polygons are two-gons.

The choice of $y$ allows ambiguity, $y=\omega^{(0)}_1/dx+F(x)$, and for at least two different choices of $F(x)$ the $x-y$ dual topological recursion produces the differentials whose expansions have meaningful enumerative interpretation. These two choices are $F_1(x) = -x^{-1}-\sum_{i=1}^d t_i x^{i-1}$ and $F_2(x)=-x^{-1}$.

The first $x-y$ dual topological recursion produces $n$-differentials $\tilde \omega^{(g)}_n$ whose expansions in some local coordinate $Y$ at $z=0$ are given by
\begin{align} \label{eq:Maps-dual}
	\tilde \omega^{\bullet}_n = (-1)^n \VEV{e^{\sum_{i=1}^d \frac{t_iJ_i}{i\hbar}}\cD_{\log(1+\theta)}
		e^{\frac{J_{-2}}{2\hbar}}
		 \prod_{i=1}^n \Big(\sum_{j=1}^\infty J_{-j} Y_i^{j} \Big)
	}  \prod_{i=1}^n \frac{dY_i}{Y_i},
\end{align}
and the enumerative meaning of these expansions is the count of bi-colored maps such that the white polygons are controlled by the variables $t=(t_1,\dots,t_d,0,0,\dots)$, the distinguished black polygons by the variables $Y_1,\dots,Y_n$, and all non-distinguished black polygons are two-gons.

The $x-y$ dual topological recursion obtained by the second choice produces $n$-differentials $\tilde \omega^{(g)}_n$ whose expansions in some local coordinate $w$ at $z=\infty$ are given by
\begin{align}
	\tilde \omega^{\bullet}_n = \VEV{\prod_{i=1}^n \Big(\sum_{j=1}^\infty J_j w_i^{j} \Big)\cD_{-\log(1+\theta)} e^{\sum_{i=1}^d \frac{t_iJ_i}{i\hbar}}\cD_{\log(1+\theta)}
		e^{\frac{J_{-2}}{2\hbar}}
	} \prod_{i=1}^n \frac{dw_i}{w_i},
\end{align}
and the enumerative meaning of these expansions is the count of fully simple  maps.

\subsubsection{2d Toda tau functions and weighted double Hurwitz numbers}

It is conjectured (see~\cite[Conjecture 3.13]{BCGFLS-freeprob} and also~\cite{Hock-x-y} for a genus $0$ result under some extra conditions) that the $n$-point functions or differentials for the two topological recursions related by the $x-y$ symmetry  are subject to certain universal functional relations, obtained from the operator $\cD_{\pm\log(1+\theta)}$ in the VEV formalism derived in~\cite{BDKS-fullysimple}.

However, from the point of view of weighted Hurwitz enumerative problems and hypergeometric KP or 2d Toda tau functions it is more natural to consider $\cD_\psi$ for any reasonable function $\hat\psi=\hat\psi(\theta,\hbar^2)$ (for the sake of introduction we can assume that $\hat\phi\coloneqq \exp(\hat\psi)$ is a polynomial with the constant term $1$ that does not depend on $\hbar^2$, but the actual assumptions on $\hat\psi$ are much weaker: see Definition \ref{def:spectralcurve} below). To this end, the authors conjectured in~\cite[Conjecture 4.4]{BDKS-fullysimple} the topological recursion of the following two systems of disconnected $n$-differentials:
\begin{align} \label{eq:tr-X}
 \VEV{ \prod_{i=1}^n \Big(\sum_{j=1}^\infty J_j X_i^{j} \Big) e^{\sum_{i=1}^d \frac{t_iJ_i}{i\hbar}}\cD_{\hat\psi}
		e^{\sum_{i=1}^e \frac{s_iJ_{-i}}{i\hbar}}
		\Big)
	}  \prod_{i=1}^n \frac{dX_i}{X_i}
\end{align}
and
\begin{align}  \label{eq:tr-w}
 \VEV{ \prod_{i=1}^n \Big(\sum_{j=1}^\infty J_j w_i^{j} \Big) \cD_{\hat\psi_1}  e^{\sum_{i=1}^d \frac{t_iJ_i}{i\hbar}}\cD_{\hat\psi}
		e^{\sum_{i=1}^e \frac{s_iJ_{-i}}{i\hbar}}
		\Big)
	}  \prod_{i=1}^n \frac{dw_i}{w_i},
\end{align}
with $\hat \psi_1=-\hat \psi$,
which have enumerative meaning of enumeration of particular double weighted Hurwitz numbers and/or constellations on genus $g$ surfaces. Of course, this conjecture extends to the adjoint cases
\begin{align} \label{eq:tr-Y}
\VEV{  e^{\sum_{i=1}^d \frac{t_iJ_i}{i\hbar}}\cD_{\hat\psi}
	e^{\sum_{i=1}^e \frac{s_iJ_{-i}}{i\hbar}}
	\prod_{i=1}^n \Big(\sum_{j=1}^\infty J_{-j} Y_i^{j} \Big)
}  \prod_{i=1}^n \frac{dY_i}{Y_i}
\end{align}
and
\begin{align} \label{eq:tr-v}
\VEV{e^{\sum_{i=1}^d \frac{t_iJ_i}{i\hbar}}\cD_{\hat\psi}
	e^{\sum_{i=1}^e \frac{s_iJ_{-i}}{i\hbar}}
	\cD_{\hat \psi_2}
 \prod_{i=1}^n \Big(\sum_{j=1}^\infty J_{-j} v_i^{j} \Big)
}  \prod_{i=1}^n \frac{dv_i}{v_i},
\end{align}
with $\hat \psi_2=-\hat \psi$ and the same enumerative meaning as above (up to sign and interchange $s\leftrightarrow t$). In the present paper we prove topological recursion for the cases~\eqref{eq:tr-X} and~\eqref{eq:tr-Y}, and in~\cite{ABDKS-FSTR} (using the results of the present paper as a foundation) the topological recursion is proved for the case~\eqref{eq:tr-v} (and, by extension,~\eqref{eq:tr-w} as well).  And it is not even necessary to assume that $\hat \psi_1=-\hat\psi$ and $\hat \psi_2=-\hat \psi$, these results regarding the topological recursion hold when all of these three functions can be mutually different. 

Note that in order to have all four systems of $n$-differentials non-trivial, we must have $t\not=(0,0,0,\dots)$ and $s\not=(0,0,0,\dots)$.

\subsubsection{Symplectic duality} It appears that these four instances of topological recursion (two proved in the present paper, two in~\cite{ABDKS-FSTR}
) mentioned above \eqref{eq:tr-X} -\eqref{eq:tr-v} are related by a chain of the so-called \emph{symplectic dualities} generalizing the usual $x-y$ duality discussed above.

Let us define it. Assume we have two topological recursions, with some given $x$, $y$, and $\omega^{(0)}_1 = y dX/X$, where $X=\exp(x)$ for the first one, and $\tilde x$, $\tilde y$, and $\omega^{(0)}_1 = \tilde y d\tilde X/\tilde X$, where $\tilde X=\exp(\tilde x)$.
We say that these two topological recursions are symplectic dual to each other, if
\begin{itemize}
	\item there exist a function $\Lambda(x)$ such that $y(z)=-\Lambda(z)+F(X(z))$, $\tilde y(z)=\Lambda(z)+\tilde F(\tilde X(z))$ for some $F,\tilde F$;
	\item there exist a function $\phi$ such that $X(z)\tilde X(z) \phi(\Lambda(z)) = 1$.
\end{itemize}
The term ``symplectic duality'' refers here to the fact that on a surface $S$ given in $\mathbb{C}^3$ with the coordinates $X,\tilde X,\Lambda$ by equation $X\tilde X \phi(\Lambda) = 1$, the restrictions of the differentials $(-\Lambda+F(X))dX/X$ and $(\Lambda+\tilde F(\tilde X))d\tilde X/\tilde X$ satisfy
$d(-\Lambda+F(X))dX/X=d(\Lambda+\tilde F(\tilde X))d\tilde X/\tilde X$.

Of course, this definition is ad hoc, as it is dictated by an attempt to summarize the relations between the topological recursions \eqref{eq:tr-X}-\eqref{eq:tr-v}. Let us show, however, that it indeed reduces to the $x-y$ duality in the simplest case. Let $\phi(\Lambda) = \Lambda$. Then $\omega^{(0)}_1$ can be equivalently replaced by $\Lambda(z)X(z) dX(z)^{-1} = \tilde X(z)^{-1} dX(z)^{-1}$ and $\tilde \omega^{(0)}_1$can be equivalently replaced by $-\Lambda(z)\tilde X(z) d\tilde X(z)^{-1} = X(z)^{-1} d \tilde X(z)^{-1}$. Changing $x(z)$ to $X^{-1}(z)$ and $\tilde x(z)$ to $\tilde X^{-1}(z)$, we obtain the classical $x-y$ duality between these two instances of topological recursion.

\subsubsection{Examples of symplectic dualities}
We define a global rational function $\Theta(z)$ on $\C\mathrm{P}^1$ such that $\omega^{(0)}_1$ is given in all these four cases as
\begin{align}
	\text{Case~\eqref{eq:tr-w}: } && {}^w\omega^{(0)}_1=-\Big(\Theta(z) - \sum_{k=1}^d \frac{t_k}{X(z)^k}\Big)\frac{dw^{-1}(z)}{w^{-1}(z)} ;\\ \notag
	\text{Case~\eqref{eq:tr-X}: } && {}^X\omega^{(0)}_1=\Big(\Theta(z) - \sum_{k=1}^d \frac{t_k}{X(z)^k}\Big)\frac{dX(z)}{X(z)} ;\\ \notag
	\text{Case~\eqref{eq:tr-Y}: }&& {}^Y\omega^{(0)}_1=\Big(\Theta(z) - \sum_{k=1}^e \frac{s_k}{Y(z)^k}\Big)\frac{dY(z)}{Y(z)} ;\\ \notag
	\text{Case~\eqref{eq:tr-v}: }&& {}^v\omega^{(0)}_1=-\Big(\Theta(z) - \sum_{k=1}^e \frac{s_k}{Y(z)^k}\Big)\frac{dv^{-1}(z)}{v^{-1}(z)}.
\end{align}
The relations between these functions are as follows:
\begin{align}
\label{eq:formalspectralcurve}
& w^{-1}(z)X(z) \rho_1^{-1}\Big(\Theta(z) - \sum_{k=1}^d \frac{t_k}{X(z)^k}\Big) = 1 ;
	\\ \notag
& X(z)Y(z) \phi\Big(\Theta(z)\Big) = 1;
	\\ \notag
& Y(z) v^{-1}(z) \rho_2^{-1}\Big(\Theta(z) - \sum_{k=1}^e \frac{s_k}{Y(z)^k}\Big) = 1,
\end{align}
where $\rho_1=\exp(\psi_1)$ and $\rho_2=\exp(\psi_2)$.
This gives us a chain of symplectic dualities, though, to have the right sign change in the second symplectic duality, we might redefine~\eqref{eq:tr-Y} and~\eqref{eq:tr-v} multiplying the corresponding $\omega^{(g)}_n$'s by $(-1)^n$, as we did in the example~\eqref{eq:Maps-dual} above.

As the authors remarked in~\cite[Remark 4.5]{BDKS-fullysimple}, even a proof of the topological recursion for the cases~\eqref{eq:tr-X} and~\eqref{eq:tr-Y} is not available in the literature. Only the case~\eqref{eq:tr-X} for $t=(0,0,0,\dots)$ is proved in~\cite{ACEH} for polynomial $\phi$ and in~\cite{bychkov2020topological} for much more general families of choices of $\phi$ and $s$ (and, therefore, the case~\ref{eq:tr-Y} for $s=(0,0,0,\dots)$ for some general families of choices of $\phi$ and $t$).  But the only known case for both $t$ and $s$ being nontrivial is the case of $\phi=1+\theta$ covered by the two-matrix model, see e.g.~\cite{EynardBook,DOPS}.  In this paper we prove the topological recursion for~\eqref{eq:tr-X} and~\eqref{eq:tr-Y} for any choice of parameters $t=(t_1,\dots,t_d,0,0,\dots)$ and $s=(s_1,\dots,s_e,0,0,\dots)$ and for a large family of possible choices for $\phi$.

As for the case~\eqref{eq:tr-w}, the topological recursion was known only for $\rho_1=1+\theta$ and $s=(0,1,0,0,0,\dots)$ due to~\cite{BCGF21,BDKS-fullysimple} (and, therefore, for the symmetric case~\eqref{eq:tr-v} for $\rho_2=1+\theta$ and $t=(0,1,0,0,0,\dots)$), but, based on the results of the present paper, in~\cite{ABDKS-FSTR} it has been proved for any choice of parameters $t=(t_1,\dots,t_d,0,0,\dots)$ and $s=(s_1,\dots,s_e,0,0,\dots)$ and for a large family of possible choices for $\phi$ and $\rho_1$ (and the respective statement is implied for the symmetric case~\eqref{eq:tr-v}).




\subsection{Acknowledgments} S.~S. was supported by the Netherlands Organization for Scientific Research. Research of B.B. was supported in part by the ISF grant 876/20. While working on this project B.B. benefited from support of Technion University (Spring 2022). M.K. and B.B. worked on Sections 3, 4 as part of the project supported by the International Laboratory of Cluster Geometry NRU HSE, RF Government grant, ag. № 075-15-2021-608 dated 08.06.2021. P.D.-B. worked on Section 5 under support by the Russian Science Foundation (Grant No. 20-71-10073).

We thank A.~Hock for useful discussion.

When this paper was almost ready for submission, we got a message from S.~Charbonnier that he, in collaboration with V.~Bonzom, G.~Chapuy, and E.~Garcia-Failde, obtained a big part of our results by entirely different methods. Their paper is available at~\cite{BCCG}.



\section{Recursion formula for \texorpdfstring{$(m,n)$}{(m,n)}-point functions: the case of formal series}\label{sec:formalprel}

\subsection{\texorpdfstring{$(m,n)$}{(m,n)}-point functions}\label{sec:mnpoint}
Let $t=(t_1,t_2,\dots)$, $s=(s_1,s_2,\dots)$ be two infinite sets of formal parameters and let $\hat\phi(\theta):=e^{\hat\psi(\theta)}$, where $\hat\psi(\theta)$ is an arbitrary formal power series in~$\theta$ and also in~$\hbar^2$ such that $\hat\psi(0)|_{\hbar=0}=0$; also let $\psi(\theta)=\hat{\psi(\theta)}|_{\hbar=0}$ and $\phi(\theta)=\hat{\phi(\theta)}|_{\hbar=0}=e^{\psi(\theta)}$.

The hypergeometric-type KP tau function $Z$ and the corresponding potential $F=\log Z$ are defined \cite{Kharchev,OrlovScherbin} by the following explicit expansion in the Schur functions
\begin{equation}
Z(t,s)=e^{F(t,s)}=\sum_\lambda\left(\prod_{(i,j)\in\lambda}\hat\phi(\hbar(j-i))\right)
s_\lambda(t_1/\hbar,t_2/\hbar,\dots)s_\lambda(s_1/\hbar,s_2/\hbar,\dots).
\end{equation}

The potential admits automatically the genus expansion $F=\sum_{g=0}^\infty \hbar^{2g-2} F_g$ (that is, $\hbar$ enters $F$ with even exponents greater or equal to $-2$) and the \emph{correlator functions} $H_{m,n}^{(g)}$ are defined by
\begin{multline}
H_{m,n}^{(g)}(X_1,\dots,X_m,Y_{m+1},\dots,Y_{m+n})=\\
\sum_{k_1,\dots,k_{m+n}=1}^\infty\frac{\partial^{m+n}F_g}
{\partial t_{k_1}\dots\partial t_{k_m}\partial s_{k_{m+1}}\dots\partial s_{k_{m+n}}}
\prod_{i=1}^m X_i^{k_i}\prod_{i=m+1}^n Y_i^{k_i}.
\end{multline}
More explicitly, if we treat $F_g$ as the generating function for its Taylor coefficients, the `correlators',
\begin{equation}
F_g=\sum_{m,n=1}^\infty\frac1{m!n!}\sum_{k_1,\dots,k_{m+n}=1}^\infty f_{g,(k_1,\dots,k_m),(k_{m+1},\dots,k_{m+n})}t_{k_1}\dots t_{k_m}s_{k_{m+1}}\dots s_{k_{m+n}},
\end{equation}
then the functions $H_{m,n}^{(g)}$ pack the same correlators in a different way,
\begin{multline}
H_{m,n}^{(g)}=
\sum_{m',n'=1}^\infty\frac1{m'!n'!}
\sum_{k_1,\dots,k_{m+n},l_1,\dots,l_{m'+n'}=1}^\infty 
\\
f_{g,(k_1,\dots,k_m,l_1,\dots,l_{m'}),(k_{m+1},\dots,k_{m+n},l_{m'+1},\dots,l_{m'+n'})}
\prod_{i=1}^m X_i^{k_i}
\prod_{j=m+1}^{m+n} Y_{j}^{k_j}
\prod_{i=1}^{m'} t_{l_{i}}
\prod_{j=m'+1}^{m'+n'} s_{l_{j}}.
\end{multline}

In a more conceptual way, the tau function can be represented as the vacuum expectation value
\begin{equation}
Z(t,s)=e^{F(t,s)}= \langle 0 |\;e^{\sum_{k=1}^\infty \frac{t_k J_k}{k\hbar}}\;\cD_{\hat\psi}\;e^{\sum_{k=1}^\infty \frac{s_k J_{-k}}{k\hbar}}\;|0\rangle
\end{equation}
where $J_k$ and $\cD_{\hat\psi}$ are certain operators acting on the Fock space, see above Section \ref{Sec:vaccum}; for the complete introduction to the topic see \cite{bychkov2021explicit}. Then $H^{(g)}_{m,n}$ could be represented equivalently as the corresponding \emph{connected} vacuum expectation value:
\begin{multline}\label{eq:hgmn}
\sum_{g\ge0}\hbar^{2g-2+m+n}H_{m,n}^{(g)}(X_1,\dots,X_m,Y_{m+1},\dots,Y_{m+n})=\\
\langle 0 |\;
\left(\prod_{i=1}^m \sum_{k=1}^\infty \tfrac{X_i^{k}}{k}J_k\right)
e^{\sum_{k=1}^\infty 
\frac{t_k J_{k}}{k\hbar}}\;\cD_{\hat\psi}\;e^{\sum_{k=1}^\infty 
\frac{s_k J_{-k}}{k\hbar}}
\left(\prod_{i=m+1}^{m+n} \sum_{k=1}^\infty \tfrac{Y_i^{k}}{k}J_{-k}\right)
\;|0\rangle^{\circ}.
\end{multline}
See also \eqref{eq:Hfirstmn}. The last treatment can be especially useful if we consider certain specializations of~$t$ and~$s$ variables, since for the specializations the partial derivatives are not available. Some relations on $H^{(g)}_{m,n}$ look nicer if they are reformulated in terms of $(m,n)$-point differentials defined by
\begin{equation}
	\omega^{(g)}_{m,n}=d_1\dots d_{m+n}H_{m,n}^{(g)}
\end{equation}
These forms can be regarded as yet another convenient way to pack correlators into generating series.
The functions $H_{m,n}^{(g)}$ and the forms $\omega_{m,n}^{(g)}$ are the main objects of our research. These are formal power series in the variables $X_1,\dots,X_m,Y_{m+1},\dots,Y_{m+n}$, while the quantities $t_1,t_2,\dots$, $s_1,s_2,\dots$, as well as the coefficients of the series $\hat\psi$ are regarded as parameters of the problem.

\subsection{Formulation of the recursion (preliminary version)}

Theorem below provides a formula relating the correlator functions $H^{(g)}_{m+1,n}$ and  $H^{(g)}_{m,n+1}$ modulo similar functions with smaller values of indices $(g,n,m)$ with respect to certain natural ordering. We will see that this formula can actually be used to compute both $H^{(g)}_{m+1,n}$ and $H^{(g)}_{m,n+1}$. The two functions have a common set of variables $(X_1,\dots,X_m,Y_{m+2},\dots,Y_{m+n+1})$ that we will denote by $(X_M,Y_N)$, where $M=\{1,\dots,m\}$, $N=\{m+2,\dots,m+n+1\}$. Besides, there is one extra variable $X_{m+1}$ for the function $H^{(g)}_{m+1,n}$ and the variable $Y_{m+1}$ for $H^{(g)}_{m,n+1}$, respectively. In order to simplify manipulations with these distinguished variables we set
\begin{equation}
X=X_{m+1},\quad Y=Y_{m+1}.
\end{equation}
To state the formula we introduce some new functions. Set
\begin{equation}\label{eq:S}
H_{m,n}=\sum_{g=0}^\infty\hbar^{2g}H^{(g)}_{m,n},\qquad
\cS(u)=\frac{e^{u/2}-e^{-u/2}}{u},
\end{equation}
\begin{multline}\label{eq:cT}
\cT^Y_{m,n+1}(u;X_{M};Y;Y_{N}):=
\sum_{k=1}^\infty\frac{\hbar^{2(k-1)}u^k}{k!}\left(\prod_{i=1}^k
\big\lfloor_{Y_{\bar i}=Y}\cS(u\hbar Y_{\bar i}\partial_{Y_{\bar i}})\right)\\
\Bigg(\Big(\prod_{i=1}^kY_{\bar i}\partial_{Y_{\bar i}}\Big)
H_{m,k+n}(X_{M};Y_{\{\bar1,\dots,\bar k\}},Y_{N})\\
+\delta_{(m,n,k),(0,0,1)}\sum_{k=1}^\infty\frac{s_k}{Y^k_{\bar 1}}+
\delta_{(m,n,k),(0,1,1)}\frac{Y_{2}}{Y_{\bar1}-Y_2}\Biggr),
\end{multline}
\begin{equation}\label{eq:cW}
\cW^Y_{m,n+1}(u;X_{M};Y;Y_{N}):=\\
\frac{1}{u\cS(u\,\hbar)}
\sum_{k=0}^\infty\frac{1}{k!}
\sum_{\substack{M=\sqcup_{i=1}^kI_i\\N=\sqcup_{i=1}^kJ_i}}
\prod_{i=1}^k\cT^Y_{u;|I_i|,|J_i|+1}(X_{I_i};Y;Y_{J_i}).
\end{equation}
The functions $\cT^Y_{m,n+1}$, $\cW^Y_{m,n+1}$ depend on an additional variable $u$ as a formal power series. Remark that $\cW^Y_{m,n+1}$ contains nonnegative powers of~$u$ only with an exception for $m=n=0$: the series $\cW^Y_{0,1}$ starts with $\tfrac1u$.

Next, for $p\in\Z$ we set
\begin{equation}
\hat\phi_p(\theta)=\begin{cases}
\prod_{i=-\frac{p-1}{2}}^{\frac{p+1}{2}}
\hat\phi(\theta+i\,\hbar),&p>0,\\
1,&p=0,\\
1/\hat\phi_{-p}(\theta),&p<0.
\end{cases}
\end{equation}

\begin{theorem}\label{th:formalH}
	The following relation holds for all $(g,m,n)$ with $2g-2+m+n>0$
	\begin{equation}\label{eq:mainrelserHX}
		X\partial_X H^{(g)}_{m+1,n}(X_M;X;Y_N)=[\hbar^{(2g)}]\sum_{p=-\infty}^\infty X^p\sum_{r=0}^\infty  \partial_\theta^r\hat\phi_p(\theta)\bigm|_{\theta=0}\;[Y^{-p}u^r]\;\cW^Y_{m,n+1}(u;X_M;Y;Y_N).
	\end{equation}

By the symmetry between $X$ and $Y$ variables, we also have
	\begin{equation}\label{eq:mainrelserHY}
		Y\partial_Y H^{(g)}_{m,n+1}(X_M;Y;Y_N)=[\hbar^{(2g)}]\sum_{p=-\infty}^\infty Y^p\sum_{r=0}^\infty  \partial_\theta^r\hat\phi_p(\theta)\bigm|_{\theta=0}\;[X^{-p}u^r]\;\cW^X_{m+1,n}(u;X_M;X;Y_N).
	\end{equation}
where $\cW^X,U^Y$, and also $\cT^X$ are defined in a similar way, with the exchange of the role of $X$ and $Y$ variables and also $t$ and $s$ parameters.
\end{theorem}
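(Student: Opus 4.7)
The approach is to work directly in the Fock space formalism of \eqref{eq:hgmn} and exploit the diagonal action of $\cD_{\hat\psi}$ on the Schur basis: this diagonal action controls explicitly how a positive current $J_p$ is conjugated past $\cD_{\hat\psi}$ via a weighted Murnaghan--Nakayama rule, and it is exactly this conjugation that produces the $\hat\phi_p$-weights on the right-hand side of \eqref{eq:mainrelserHX}. The plan is to expand the left insertion at position $m+1$ into coefficients of $X^p$, transport each $J_p$ across $\cD_{\hat\psi}$, and resum the result as negative-current insertions evaluated at a single auxiliary variable $Y$, thereby producing the $\cT^Y$ and $\cW^Y$ generating functions.

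First I would apply $X\partial_X$ to \eqref{eq:hgmn}, turning the left insertion at position $m+1$ into $\sum_{p\ge 1} X^p J_p$. Since $J_p$ commutes with the $t$-exponential (positive currents commute among themselves), it can be transported through $e^{\sum t_iJ_i/(i\hbar)}$ and one is left with computing the conjugation $\cD_{\hat\psi}^{-1} J_p \cD_{\hat\psi}$. In the Schur basis this conjugation still acts by ribbon removal of length $p$, now with each removed ribbon weighted by the product of $\hat\phi(\hbar c_i)$ over the contents $c_i$ of the removed boxes. For a ribbon of length $p$ with center content $\theta$ this product equals $\hat\phi_p(\theta)$, and Taylor expansion in $\theta$ around $0$ produces the coefficients $\partial_\theta^r \hat\phi_p(\theta)|_{\theta=0}$, with the corresponding power of the shift tracked by the formal variable $u$.

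Next I would translate the weighted ribbon-removal operator into a sum of insertions of $\sum_{k\ge 1} Y^{-k} J_{-k}$ on the right of $\cD_{\hat\psi}$, all evaluated at a common auxiliary variable $Y$. Each ribbon is built by concatenating elementary $J_{-k}$ currents, so one ribbon contributes a single factor $\cT^Y_{m,|J|+1}$, while the summation over all disjoint groupings of the remaining currents and marked variables into such factors is precisely the exponential combinatorial sum in \eqref{eq:cW}. The $\hbar$-deformed derivative $\cS(u\hbar Y_{\bar i}\partial_{Y_{\bar i}})$ in \eqref{eq:cT} arises from symmetrizing the discrete content shifts $c_i = c + (i-\tfrac{k+1}{2})\hbar$ inside a ribbon of length $k$ when these shifts are re-expressed as derivatives acting at the single evaluation point $Y$, while the prefactor $1/(u\cS(u\hbar))$ in \eqref{eq:cW} compensates for the single $u$-derivative needed to isolate one ``marked'' ribbon head. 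Passing to the connected part of the resulting identity and matching genus expansions yields \eqref{eq:mainrelserHX}.

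The main obstacle will be the precise bookkeeping of $\hbar$-powers and the exceptional low-order terms. Showing that the quantum derivative factors $\cS$ and the prefactor $1/(u\cS(u\hbar))$ emerge in exactly the stated form requires tracking the half-integer content shifts built into $\hat\phi_p$ against the integer shifts produced by successive applications of $J_{-k}$ at the same variable, and the interplay between ordered concatenation of currents and unordered $\cT^Y$-factors (together with Murnaghan--Nakayama signs absorbed by the Schur-basis eigenvalue) is where the $\sinh$-factor appears. Furthermore, the correction terms with $\delta_{(m,n,k),(0,0,1)}$ and $\delta_{(m,n,k),(0,1,1)}$ in \eqref{eq:cT} encode the unstable $H^{(0)}_{0,1}$ and $H^{(0)}_{0,2}$ contributions that are not captured by the connected VEV and must be inserted by hand; verifying their precise coefficients is the most delicate combinatorial check. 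The symmetric formula \eqref{eq:mainrelserHY} then follows from the manifest $X\leftrightarrow Y$, $t\leftrightarrow s$, $J_k\leftrightarrow J_{-k}$ symmetry of the VEV \eqref{eq:hgmn}.
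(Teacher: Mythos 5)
Your proposal is correct and follows essentially the same route as the paper: the identity is obtained by conjugating the positive-current insertion through the diagonal operator $\cD_{\hat\psi}$, Taylor-expanding the resulting eigenvalue ratio $\hat\phi_p(\theta)$ at $\theta=0$, and recognizing the coefficients as the $[Y^{-p}u^r]$-extraction from the generating operator whose VEV is $\cW^Y_{m,n+1}$. The only cosmetic difference is that you phrase the conjugation in the Schur basis via a weighted Murnaghan--Nakayama rule, whereas the paper works with the matrix units $E_{k-p,k}$ of $\widehat{\mathfrak{gl}}(\infty)$ packaged into the operator $\cE(u,z)$ (and in fact proves the stronger $w$-dependent identity of Theorem~\ref{th:formalW} first, recovering \eqref{eq:mainrelserHX} as its $w^0$ coefficient); these are the same computation under the boson--fermion correspondence.
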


We will need also the following extension of the last theorem. Set
\begin{equation}
\tilde\phi_p(\theta,w)=\hat\phi_p(\theta)e^{w\,\theta}.
\end{equation}

\begin{theorem}\label{th:formalW}
	The following relation holds for all $(m,n)$ with no exceptions
	\begin{equation}\label{eq:mainrelserWX}
		\cW^X_{m+1,n}(w;X_M;X;Y_N)=\sum_{p=-\infty}^\infty X^p\sum_{r=0}^\infty  \partial_\theta^r\tilde\phi_p(\theta,w)\bigm|_{\theta=0}\;[Y^{-p}u^r]\;\cW^Y_{m,n+1}(u;X_M;Y;Y_N).
	\end{equation}
By the symmetry between $X$ and $Y$ variables, we also have
	\begin{equation}\label{eq:mainrelserWY}
		\cW^Y_{m,n+1}(u;X_M;Y;Y_N)=\sum_{p=-\infty}^\infty Y^p\sum_{r=0}^\infty  \partial_\theta^r\tilde\phi_p(\theta,u)\bigm|_{\theta=0}\;[X^{-p}w^r]\;\cW^X_{m+1,n}(w;X_M;X;Y_N).
	\end{equation}
\end{theorem}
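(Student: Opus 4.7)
The plan is to work in the bosonic--fermionic Fock space formalism of Section~\ref{Sec:vaccum}, interpret both $\cW^X$ and $\cW^Y$ as VEVs with an extra operator insertion that encodes the $\cS$-factors, and then derive the claimed identity from the conjugation of this operator through $\cD_{\hat\psi}$.

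First, I would show that $\cW^Y_{m,n+1}(u;X_M;Y;Y_N)$ coincides with the connected VEV
\[
\Vev{\mathcal{O}_X\cdot e^{\sum t_k J_k/(k\hbar)}\cdot \cD_{\hat\psi}\cdot \hat E^Y(u,Y)\cdot e^{\sum s_k J_{-k}/(k\hbar)}\cdot\mathcal{O}_Y}^\circ,
\]
where $\mathcal{O}_X,\mathcal{O}_Y$ are the $X$- and $Y$-variable operator products appearing in~\eqref{eq:hgmn}, and $\hat E^Y(u,Y)=\sum_{p\in\Z} Y^p \hat E^Y_p(u)$ is the fermion bilinear $:\psi(Y e^{u\hbar/2})\psi^*(Y e^{-u\hbar/2}):$ expanded in modes. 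The $\cS$-factors in~\eqref{eq:cT} arise from the Wick contractions of $\hat E^Y(u,Y)$ with the modes $J_{-k}$ coming from the right via the identity $e^{u\hbar\lambda/2}-e^{-u\hbar\lambda/2}=u\hbar\lambda\,\cS(u\hbar\lambda)$; the sum over $k$ in~\eqref{eq:cT} is the number of $J$-modes contracted into a single $\hat E^Y$; and the full product/sum structure of~\eqref{eq:cW}, including the prefactor $1/(u\cS(u\hbar))$, encodes the self-VEV of $\hat E^Y(u,Y)$ together with the connected versus disconnected bookkeeping. By the same procedure $\cW^X_{m+1,n}(w;X_M;X;Y_N)$ is the analogous VEV with $\hat E^X(w,X)$ inserted just to the \emph{left} of $\cD_{\hat\psi}$.

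The core step is then the conjugation lemma
\[
\cD_{\hat\psi}^{-1}\cdot \hat E^Y(u,Y)\cdot \cD_{\hat\psi} \;=\; \sum_{p\in\Z}\sum_{r\ge 0} Y^{-p}\,u^r\cdot \partial_\theta^r \tilde\phi_p(\theta,w)\bigm|_{\theta=0}\cdot [X^p]\,\hat E^X(w,X),
\]
which should be obtained by tracking how $\cD_{\hat\psi}$ acts on the two fermion factors of $\hat E^Y$: the diagonal action on Schur polynomials translates, via the boson--fermion correspondence, to a multiplication of each fermion mode $\psi_{n-1/2}$ (resp. $\psi^*_{n-1/2}$) by a factor of $\hat\phi$ evaluated at a shifted content. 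The telescoping of these factors over a mode shift by $p$ bundles precisely into $\hat\phi_p(\theta)=\prod_{i=-(p-1)/2}^{(p+1)/2}\hat\phi(\theta+i\hbar)$ with $\theta$ playing the role of the spectral parameter. The extra exponential $e^{w\theta}$ in $\tilde\phi_p(\theta,w)=\hat\phi_p(\theta)e^{w\theta}$ comes from the $e^{\pm w\hbar/2}$ dilations built into $\hat E^X(w,X)$ after they are pulled past the conjugation. Inserting this lemma into the VEV representation of $\cW^Y$, extracting the coefficient of $X^p$ on the $X$-side, and comparing with the VEV representation of $\cW^X$ gives~\eqref{eq:mainrelserWX}; the symmetric relation~\eqref{eq:mainrelserWY} follows by conjugating in the opposite direction through $\cD_{\hat\psi}$ and using $\hat\phi_{-p}=1/\hat\phi_p$.

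The main obstacle is the precise form of the conjugation lemma, i.e.\ identifying exactly which combination of mode shifts by $\cD_{\hat\psi}$ produces the prefactor $\partial_\theta^r\tilde\phi_p(\theta,w)|_{\theta=0}$. This requires a careful bookkeeping of: (i) how the two spectral arguments $Y e^{\pm u\hbar/2}$ of $\hat E^Y(u,Y)$ are shifted independently by $\cD_{\hat\psi}$, (ii) how the resulting product of $\hat\phi$-values telescopes over the mode index $p$ into $\hat\phi_p$, and (iii) how the analogous $w$-dependent dilations on the $X$-side assemble into the $e^{w\theta}$ factor. Once this conjugation lemma is established, everything else is routine Wick expansion together with the identification of $\cT^Y,\cW^Y$ (and their $X$-analogs) as VEVs. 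Setting $w=0$ in~\eqref{eq:mainrelserWX} and picking up the leading $1/(w\cS(w\hbar))$-singularity recovers Theorem~\ref{th:formalH}, which gives a useful consistency check.
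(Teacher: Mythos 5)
Your proposal follows essentially the same route as the paper: the paper likewise realizes $\cW^X_{m+1,n}$ and $\cW^Y_{m,n+1}$ as connected VEVs with the $\widehat{\mathfrak{gl}}(\infty)$ generating series $\cE(\hbar w,X)$, resp.\ $\cE(\hbar u,Y^{-1})$, inserted to the left, resp.\ right, of $\cD_{\hat\psi}$, and then derives the identity by conjugating through $\cD_{\hat\psi}$, with the diagonal entries telescoping into $\hat\phi_p$ and the dilation factor $e^{w\hbar(k-p/2)}$ supplying $e^{w\theta}$, exactly as you describe. The one thing to repair is your displayed conjugation lemma, in which the roles of $u$ and $w$ and the placement of the mode extraction are scrambled; the version actually needed is
\begin{equation*}
\cD_{\hat\psi}^{-1}\,\cE(\hbar w,X)\,\cD_{\hat\psi}=\sum_{p=-\infty}^{\infty}X^p\sum_{r=0}^{\infty}\partial_\theta^r\tilde\phi_p(\theta,w)\bigm|_{\theta=0}\;[Y^{-p}u^r]\;\cE(\hbar u,Y^{-1}),
\end{equation*}
which, inserted into the VEV for $\cW^X_{m+1,n}$ (where $\cE(\hbar w,X)$ sits to the left of $\cD_{\hat\psi}$), immediately yields~\eqref{eq:mainrelserWX}.
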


By definition of the function~$\cW^X_{m+1,n}(w;X_M;Y;Y_N)$, its coefficient of $w^0$ is
\begin{equation}
[w^0]\cW^X_{m+1,n}=\sum_{g=0}^\infty\hbar^{2g}X\partial_X H^{(g)}_{m+1,n}+\delta_{(m,n),(0,0)}\sum_{k=1}^\infty\frac{t_k}{X^k}+
\delta_{(m,n),(1,0)}\frac{X_{1}}{X-X_1}.
\end{equation}
This shows that Theorem~\ref{th:formalH} is a specialization of Theorem~\ref{th:formalW}. It also explains why relations of Theorem~\ref{th:formalH} fail in the unstable cases and how they should be corrected. The proof of Theorems~\ref{th:formalH} and~\ref{th:formalW} are given below, in Sect.~\ref{sec:Fock}. But now we provide certain comments on the structure of the terms entering their relations.

\subsection{The structure of the formulas \texorpdfstring{\eqref{eq:mainrelserHX}--\eqref{eq:mainrelserHY}}{()-()}, \texorpdfstring{\eqref{eq:mainrelserWX}--\eqref{eq:mainrelserWY}}{()-()}, graph presentation, and genus count}
First of all, let us explain the meaning of the singular correction in~\eqref{eq:cT}. The first correction term means that the summand $Y_{\bar1}\partial_{Y_{\bar1}}H_{0,1}^{(0)}$ should be replaced by
\begin{equation}
Y_{\bar1}\partial_{Y_{\bar1}}H_{0,1}^{(0)}(Y_{\bar1})+\sum_{k=1}^\infty\frac{s_k}{Y_{\bar1}^k}
\end{equation}
which involves negative powers of $Y_{\bar1}$.
The second correction term means that the summand $Y_{\bar1}\partial_{Y_{\bar1}}H_{0,2}^{(0)}(Y_{\bar1},Y_2)$ should be replaced by
\begin{equation}
Y_{\bar1}\partial_{Y_{\bar1}}H_{0,2}^{(0)}(Y_{\bar1},Y_2)+\frac{Y_{2}}{Y_{\bar1}-Y_2}=
Y_{\bar1}\partial_{Y_{\bar1}}H_{0,2}^{(0)}(Y_{\bar1},Y_2)+\sum_{k=1}^\infty\frac{Y_2^k}{Y_{\bar1}^k}
\end{equation}
which also involves negative powers of $Y_{\bar1}$. Remark, however, that a similar term
\begin{equation}
	Y_{\bar1}Y_{\bar2}\partial_{Y_{\bar1}}\partial_{Y_{\bar2}}H_{0,2}^{(0)}(Y_{\bar1},Y_{\bar2})
\end{equation}	
	 appearing for $(m,n,k)=(0,0,2)$ and intended to be restricted to the diagonal $Y_{\bar1}=Y_{\bar2}$ is left in its original regular form. As a result, the series~$\cW^Y_{m,n+1}$ contains arbitrary large positive or negative exponents of the variable~$Y$. This series is well defined since the coefficient of any particular monomial in the remaining variables ($t_i,s_i,X_i,Y_i,u,\hbar$) is a finite Laurent polynomial in $Y$. Remark that the left hand side in~\eqref{eq:mainrelserHX} is regular in~$X$. It follows that the most essential terms of~$\cW^Y_{m,n+1}$ that contribute to the left hand side of~\eqref{eq:mainrelserHX} are exactly those with negative exponents of $Y$ while the contribution of terms with nonnegative exponents of~$Y$ cancel out.

Next, let us make a comment on the definition of~$\cW^Y_{m,n+1}$. The subsets $I_i,J_i$ participating in~\eqref{eq:cW} are allowed both being empty. By that reason the sum in~\eqref{eq:cW} is infinite. It is useful to treat the factors with $I_i=J_i=\emptyset$ separately and to represent the expression for $\cW^Y_{m,n+1}$ in the following form
\begin{equation}\label{eq:cWexp}
\cW^Y_{m,n+1}(X_M;Y;Y_N)=\frac{e^{\cT^Y_{0,1}(u;Y)}}{u\,\cS(u\hbar)}
\sum_{\substack{M\cup N=\sqcup_\alpha K_\alpha~
K_\alpha\ne\varnothing\\~I_\alpha=K_\alpha\cap M,~J_\alpha=K_\alpha\cap N}}
\prod_{\alpha}\cT^Y_{|I_\alpha|,|J_\alpha|+1}(X_{I_\alpha};Y;X_{J_\alpha})
\end{equation}
where the summation carries over the set of all partitions of the set $M\cup N$ into \emph{unordered} collection of disjoint \emph{nonempty} subsets $K_\alpha$, and where we denote $I_\alpha=K_\alpha\cap M$, $J_\alpha=K_\alpha\cap N$. This sum has finitely many terms.

In order to understand better the structure of the function $\cW^Y_{m,n+1}$, it is convenient to represent all the terms entering to its definition by graphs of special kind. These are bipartite graphs with $m+n+1$ white vertices labeled by indices $1,2,\dots,m+n+1$, and some number of black unmarked vertices. The white vertex labeled by $m+1$ is distinguished. It is connected by edges with all black vertices, moreover, multiple edges are allowed. Besides, every remaining white vertex is connected with exactly one black vertex by a single dashed edge as in the picture below.
\[
\underbrace{	\vcenter{
	\xymatrix@!C=5pt@R=30pt{
		& & *+[o][F-]{{}} \save[]+<0cm,5mm>*\txt<3pc>{$m+1$} \restore  & &  \\
		*+[o][F**]{}\ar@{-}[rru] & *+[o][F**]{}\ar@{-}[ru] &  *+[o][F**]{}\ar@{-}@/^/[u]\ar@{-}[u]\ar@{-}@/_/[u] & *+[o][F**]{}\ar@{-}[lu] &\\
	*+[o][F-]{{}} \save[]+<0cm,-6mm>*\txt{\ $1$} \restore \ar@{--}[rru] &
	*+[o][F-]{{}} \ar@{--}[ru]&
	*+[o][F-]{{}} \save[]+<0cm,-6mm>*\txt{$\cdots$\phantom{$\cdots$}$\cdots$} \restore \ar@{--}[u]&
	*+[o][F-]{{}} \ar@{--}[u]&
	*+[o][F-]{{}} \save[]+<4mm,-6mm>*\txt<5pc>{$m+n+1$} \restore \ar@{--}[lu] \\
	}
}
}_{\textstyle M\sqcup N}
\]

The contribution of the graph is computed as the product of the weights of its black vertices defined in the following way. Consider one of the black vertices. Consider all white vertices connected to it by dashed edges and denote by $I$ and $J$ the sets of labels of these vertices from the ranges $M=\{1\dots m\}$ and $N=\{m+2\dots m+n+1\}$, respectively. Let $k$ be the number of edges connecting this black vertex to the white vertex labeled by $m+1$. Then the weight of this black vertex is equal to
\begin{equation}
\left(\prod_{i=1}^k\lfloor_{Y_{\bar i}=Y}u\,\cS(u\hbar Y_{\bar i}\partial_{Y_{\bar i}})Y_{\bar i}\partial_{Y_{\bar i}}\right)
H_{|I|,k+|J|}(X_{I};Y_{\{\bar1,\dots,\bar k\}},Y_{J})
\end{equation}
with an appropriate singular correction for $k=1$, $I=\varnothing$, and $|J|=0$~or~$1$.

Let $\gamma$ be a graph. Its genus $g(\gamma)$ is defined as the number of its edges (both filled and dashed) minus the number of vertices (both black and white) plus one. In other words, it is the minimal number of edges that can be removed such that the graph remains connected.

Then $\cW^Y_{m,n+1}$ is computed as the sum over all isomorphism classes of graphs of the described type; the summand corresponding to a given graph $\gamma$ is equal to the product of weights of all its black vertices multiplied by $\hbar^{2g(\gamma)}$ and divided by the order of automorphism group of the graph. Finally, the obtained sum is multiplied by an overall factor $\frac{1}{u\,\cS(u\,\hbar)}$.

One of the sources of the automorphisms of the graph are permutations of its multiple edges. These automorphisms are accounted in the factor $\frac1{k!}$ in~\eqref{eq:cT}. Another source of automorphisms are possible black leaves that is black vertices connected with the distinguished white vertex $m+1$ only. These automorphisms are accounted in the factorial factors appearing in the expansion of the exponent in~\eqref{eq:cWexp}.

\medskip
Thus, the total genus (the exponent of~$\hbar^2$) corresponding to the contribution of a particular graph to the right hand side of~\eqref{eq:mainrelserHX} or~\eqref{eq:mainrelserWX} is computed as the sum of the following nonnegative integers:
\begin{itemize}
\item the indices $g'$ corresponding to the functions $H^{(g')}_{m',k+n'}$ assigned to the black vertices;
\item the exponent of $\hbar^2$ in the differential operators $\cS(u\hbar Y_{\bar i}\partial_{Y_{\bar i}})$ applied to these functions;
\item the genus of the graph itself;
\item the exponent of $\hbar^2$ in the overall factor $\frac{1}{u\cS(u\,\hbar)}$;
\item the exponent of $\hbar^2$ in the expansion of the functions $\hat\phi_p$ and $\tilde\phi_p$ entering the right hand side of the formulae.
\end{itemize}


For example, in the case $g=0$ all the graphs that contribute to the computation of $H^{(0)}_{m+1,n}$ are trees.

\subsection{Operator formalism and the computation of \texorpdfstring{$\cW^X_{m+1,n}$}{WXm+1,n} as a series}\label{sec:Fock} The proof of Theorems~\ref{th:formalH} and~\ref{th:formalW} is based on the formalism of vertex operators and VEVs. Recall that the bosonic Fock space of charge~$0$ introduced in Sect~\ref{Sec:vaccum} can be identified with the space of infinite power series in countably many variables, $\mathcal{V}_0=\C[[q_1,q_2,\dots]]$. There is a remarkable action on this space of the Lie algebra $\widehat{\mathfrak{gl}}(\infty)$, a one-dimensional central extension of the Lie algebra of infinite matrices whose rows and columns are labeled by half-integers. Let $E_{i,j}$, $i,j\in\Z+\tfrac12$ be the matrix unit. Then, for example, the shift operator $J_k=\sum_{i\in\Z+\frac12}E_{i-k,i}$ corresponding to a matrix with one nonzero diagonal filled by units and situated on the distance $k$ from the principal one acts as $J_k=k\,\partial_{q_k}$, $J_{-k}=q_k$ (operator of multiplication by $q_k$) if $k>0$, and $J_0=0$. Another example is the operator~$\cD_{\hat\psi}$ entering the definition~\eqref{eq:Hfirstmn} of correlator functions. It corresponds to the diagonal matrices of the form
\begin{equation}
\cD_{\hat\psi}=\exp\Bigl(\sum_{k\in\Z+\tfrac12}d_k E_{k,k}\Bigr),
\end{equation}
where the entries $d_k$ are defined by
\begin{equation}
d_{k+\tfrac12}-d_{k-\tfrac12}=\hat\psi(k\,\hbar),
\end{equation}
see~\cite{bychkov2021explicit}. These equalities determine the components $d_k$ up to a common additive constant which is not important since the constant matrices from $\widehat{\mathfrak{gl}}(\infty)$ act trivially. In general, the elements of $\widehat{\mathfrak{gl}}(\infty)$ act on~$\mathcal V_0$ as differential operators in $q$-variables. This action can be described as follows. Consider the operators $\sum_{p\in\Z+\frac12}(p-\frac{k}{2})^rE_{p-k,p}$ and collect them to the following generating series
\begin{equation}
	\cE(u,z)=\sum_{p\in\Z}z^p \sum_{k\in\Z+\frac12}
	e^{u(k-\frac{p}{2})} E_{k-p,k},
\end{equation}
Then we have, explicitly (see for example \cite[Section 2]{bychkov2021explicit} for the details):
\begin{equation}
	\cE(u,z)=\frac{
		e^{\sum_{i=1}^\infty u\,\cS(u\,i) J_{-i}z^{-i}}
		e^{\sum_{i=1}^\infty u\,\cS(u\,i) J_{i}z^{i}} -1 }
	{u\,\cS(u)}.
\end{equation}

We \emph{define} now the functions $\cW^X_{m+1,n}(w;X_M;X;Y_N)$ and $\cW^Y_{m,n+1}(u;X_M;Y;Y_N)$ as the following connected vacuum expectation values
\def\bJ{{\mathbb J}}
\def\bX{{\mathbb X}}
\def\bY{{\mathbb Y}}
\begin{align}
	\bX_i&=\sum_{k=1}^\infty \tfrac{X_i^{k}}{k}J_k,
	\qquad
	\bY_j=\sum_{k=1}^\infty \tfrac{Y_i^{k}}{k}J_{-k},\\
	\label{eq:WXVEV}
	\cW^X_{m+1,n}&=\langle 0 |\;\bX_1\dots \bX_m
	e^{\sum\limits_{k=1}^\infty \frac{t_k J_{k}}{k\hbar}}\;\cE(\hbar w,X)\;\cD_{\hat\psi}\;e^{\sum\limits_{k=1}^\infty \frac{s_k J_{-k}}{k\hbar}}
	\bY_{m+2}\dots \bY_{m+n+2}
	\;|0\rangle^\circ,\\
	\label{eq:WYVEV}
	\cW^Y_{m,n+1}&=\langle 0 |\;\bX_1\dots \bX_m
	e^{\sum\limits_{k=1}^\infty \frac{t_k J_{k}}{k\hbar}}\;\cD_{\hat\psi}\;\cE(\hbar u,Y^{-1})\;e^{\sum\limits_{k=1}^\infty \frac{s_k J_{-k}}{k\hbar}}
	\bY_{m+2}\dots \bY_{m+n+2}
	\;|0\rangle^\circ.
\end{align}

Then we observe that $\cW^Y_{m,n+1}$ is given explicitly by~\eqref{eq:cT}--\eqref{eq:cW} and $\cW^X_{m+1,n}$ is given by similar expression with the exchange of the role of $X$ and $Y$ variables. Namely, the positive $J$-operators entering $\cE(\hbar u,Y^{-1})$ while commuting with $e^{\sum\limits_{k=1}^\infty \frac{s_k J_{-k}}{k\hbar}}$ and $\bY_j$ produce the singular terms in~\eqref{eq:cT}, and the negative $J$-operators entering $\cE(\hbar u,Y^{-1})$ produce the regular summands in~\eqref{eq:cT}. In fact, exactly this computation serves as the motivation for introducing singular terms in~\eqref{eq:cT}. The combinatorics behind this computation and also behind the inclusion/exclusion principle relating connected and disconnected correlators is the same as in~\cite{bychkov2021explicit,BDKS-fullysimple} and we do not reproduce the details here.

Next, we conjugate $\cE(\hbar w,X)$ in~\eqref{eq:WXVEV} by the operator $\cD_{\hat\psi}$ and obtain
\begin{equation}
	\cE(\hbar w,X)\;\cD_{\hat\psi}=\cD_{\hat\psi}\;{\mathbb E}(\hbar w,X)
\end{equation}
where
\begin{equation}
	\begin{aligned}
		{\mathbb E}(\hbar w,X)&=\cD_{\hat\psi}^{-1} \cE(\hbar w,X)\cD_{\hat\psi}=\sum_{p\in\Z}z^p \sum_{k\in\Z+\frac12}
		\frac{e^{d_k}}{e^{d_{k-p}}}e^{w\,\hbar\,(k-\frac{p}{2})} E_{k-p,k}\\
		&=\sum_{p\in\Z}z^p \sum_{k\in\Z+\frac12}
		\hat\phi_p(\hbar(k-\tfrac{p}{2}))e^{w\,\hbar\,(k-\frac{p}{2})} E_{k-p,k}\\
		&=\sum_{p\in\Z}z^p \sum_{k\in\Z+\frac12}
		\tilde\phi_p(\hbar(k-\tfrac{p}{2}),w) E_{k-p,k}\\
		&=\sum_{p=-\infty}^\infty X^p\sum_{r=0}^\infty  \partial_\theta^r\tilde\phi_p(\theta,w)\bigm|_{\theta=0}\;[Y^{-p}u^r]\;\cE(\hbar u,Y^{-1})
	\end{aligned}
\end{equation}
Comparing with~\eqref{eq:WYVEV} we obtain the desired relation of Theorem~\ref{th:formalW}. Picking the coefficient of $w^0$ on both sides of the obtained relation we get that of Theorem~\ref{th:formalH}.

\section{The formal spectral curve and Lagrange Inversion}\label{sec:formalfinal}

\subsection{Formal spectral curve}\label{sec:spectralcurveformal}

The variables $X$ and $Y$ entering relations of Theorems~\ref{th:formalH} and~\ref{th:formalW} are considered in these theorems as two independent variables having no relationship to one another. Equations~\eqref{eq:mainrelserHX} and~\eqref{eq:mainrelserWX} express the coefficients of the Laurent expansion of the left hand sides in~$X$ in terms of the coefficients of the Laurent expansion of $\cW^Y_{m,n+1}$ in~$Y$. Our next step is to relate $X$ and $Y$ by a change of variables and to interpret~\eqref{eq:mainrelserHX} and~\eqref{eq:mainrelserWX} in terms of this change. The change will involve both positive and negative powers of the variables, and in order to assign a meaning to such a change we introduce the following definition.

\begin{definition}
	We denote by $R$ the ring of regular power series in the `basic' variables $t_k$, $s_k$,
	whose coefficients are Laurent polynomials in one additional variable denoted by $z$ (or $X$ or $Y$).
\end{definition}

The whole series lying in $R$ may contain arbitrary large positive or negative powers of~$z$ but the exponents of~$z$ entering a particular monomial in $(t,s)$-variables are bounded. Note that any series in~$R$ of the form
\begin{equation}
	X(z)=z^{\pm1}+o(1),
\end{equation}
where the summand $o(1)$ belongs to                 the ideal generated by $(t,s)$-variables provides an invertible change in the ring $R$ so that any series in~$R$ rewritten in terms of the new variable $X$ also belongs to $R$.

Let $t=(t_1,t_2,\dots)$, $s=(s_1,s_2,\dots)$ be as before, and $\phi(\theta)=e^{\psi(\theta)}$ be an arbitrary power series with the constant term~$1$.

\begin{proposition}\label{prop:formalsc}
There exist series $X(z)$, $Y(z)$, $\Theta(z)$ in $R$ possessing the following properties.
\begin{itemize}
\item The three series satisfy
\begin{equation}
X(z)\;Y(z)\;\phi(\Theta(z))=1.
\end{equation}
\item The series $X(z)$ contains positive powers of~$z$ only, its coefficient of $z$ is invertible as a series in $(t,s)$, and we have
\begin{equation}\label{eq:ThetaXz}
\Theta(z)=\sum_{k=1}^\infty\frac{t_k}{X(z)^k}+O(z)
\end{equation}
where the term $O(z)$ contains positive powers of $z$ only.
\item The series $Y(z)$ contains negative powers of~$z$ only, its coefficient of $z^{-1}$ is invertible as a series in $(t,s)$, and we have
\begin{equation}\label{eq:ThetaYz}
\Theta(z)=\sum_{k=1}^\infty\frac{s_k}{Y(z)^k}+O(z^{-1})
\end{equation}
where the term $O(z^{-1})$ contains negative powers of $z$ only.
\end{itemize}
The series $X,Y,\Theta$ are determined by these requirements uniquely up to a multiplication of $z$ by a constant (an invertible series in $(t,s)$ variables).
\end{proposition}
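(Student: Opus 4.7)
My plan is to construct the triple $(X,Y,\Theta)\in R^{3}$ by successive approximation in the $I$-adic topology on $R$, where $I\subset R$ is the augmentation ideal generated by $\{t_k,s_k\}_{k\ge1}$. At the unperturbed level ($I=0$) the trivial triple $(z,z^{-1},0)$ satisfies all three conditions since $\phi(0)=1$. I fix the $z$-rescaling ambiguity by requiring the coefficient of $z$ in $X(z)$ to equal exactly $1\in R$; the uniqueness-up-to-rescaling part of the proposition then follows a posteriori from the obvious symmetry $z\mapsto cz$.

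For the inductive step, assume $(X^{(N)},Y^{(N)},\Theta^{(N)})$ satisfies (I), (II), (III) modulo $I^{N+1}$ and look for corrections $(\delta X,\delta Y,\delta\Theta)\in I^{N+1}R$ upgrading the solution modulo $I^{N+2}$. Writing $P(X)=\sum_{k\ge1}t_{k}/X^{k}$ and $Q(Y)=\sum_{k\ge1}s_{k}/Y^{k}$, all non-trivial pieces of $P$, $Q$ and $\phi(\Theta)-1$ carry an $I$-prefactor, so at the corrective order the problem linearizes around the trivial solution and reads
\begin{equation*}
\tfrac{\delta X}{z}+z\,\delta Y+\psi'(0)\,\delta\Theta\equiv -D^{\mathrm{I}},\qquad [\delta\Theta]_{\le 0}\equiv -D^{\mathrm{II}},\qquad [\delta\Theta]_{\ge 0}\equiv -D^{\mathrm{III}}\pmod{I^{N+2}},
\end{equation*}
where $D^{\mathrm{I}},D^{\mathrm{II}},D^{\mathrm{III}}\in I^{N+1}$ are the residuals of (I), (II), (III) at the previous approximation. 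The last two equations determine $\delta\Theta$ completely, provided the compatibility $[D^{\mathrm{II}}]_{0}=[D^{\mathrm{III}}]_{0}$ holds; splitting the first equation by $z$-degree then fixes $\delta X_{k+1}$ for $k\ge 1$ (from the $z^{k}$-part with $k\ge 1$), $\delta Y_{k-1}$ for $k\le-1$ (from the $z^{k}$-part with $k\le-1$) and, together with the normalization $\delta X_{1}=0$, also $\delta Y_{-1}$ from the $z^{0}$-part.

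The only nontrivial ingredient is the compatibility $[D^{\mathrm{II}}]_{0}=[D^{\mathrm{III}}]_{0}$, equivalently $[P(X^{(N)})]_{0}\equiv[Q(Y^{(N)})]_{0}\pmod{I^{N+2}}$. The plan to verify it is to use $Y\equiv 1/(X\phi(\Theta))\pmod{I^{N+1}}$ coming from (I) to rewrite $Q(Y)\equiv\sum_{k}s_{k}X^{k}\phi(\Theta)^{k}\pmod{I^{N+2}}$, and then to realise $[P(X)-Q(Y)]_{0}$ as a formal residue at $z=0$ of a $1$-form whose antiderivative comes from the logarithmic form of (I), $d\log X+d\log Y+\psi'(\Theta)\,d\Theta=0$. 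The vanishing of formal residues of exact $1$-forms, combined with the inductive hypothesis that the polar parts of $\Theta^{(N)}$ already agree with those of $P$ at $z=0$ and of $Q$ at $z=\infty$, reduces the expression to a combination of lower-order residuals and hence to zero modulo $I^{N+2}$. This residue-type identity is the main technical obstacle; once it is in place the $I$-adic limit of $(X^{(N)},Y^{(N)},\Theta^{(N)})$ exists in $R^{3}$ and gives the required triple, and uniqueness up to $z\mapsto cz$ follows from the fact that the kernel of the linearization at each step is one-dimensional and is spanned exactly by the infinitesimal rescaling $z\mapsto cz$.
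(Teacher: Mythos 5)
Your proposal is correct, and its one genuinely non-formal ingredient is exactly the paper's: the observation that the two degree-zero conditions coming from \eqref{eq:ThetaXz} and \eqref{eq:ThetaYz} are a single equation, proved via $\tfrac{dX}{X}+\tfrac{dY}{Y}=-\psi'(\Theta)\,d\Theta$ together with the fact that $\Theta\,\psi'(\Theta)\,d\Theta$ is the differential of an element of $R$ and hence has vanishing formal residue (this is the identity \eqref{eq:t0s0}). The packaging, however, is different. The paper does not iterate on the triple: it takes the Laurent coefficients $\alpha_k$ of $\Theta$ as free parameters, constructs $X=ze^{-A}$ and $Y=z^{-1}e^{-B}$ in closed form by splitting $\log\phi(\Theta)=A+B$ into its nonnegative and nonpositive parts in $z$ (so that condition (I) holds identically at every stage), reduces \eqref{eq:ThetaXz}--\eqref{eq:ThetaYz} to the single scalar equation $t_0=0$ fixing $\alpha_0$, and then inverts the map $\alpha\mapsto(t,s)$ by the formal inverse function theorem. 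Your $I$-adic successive approximation instead solves the linearized system around $(z,z^{-1},0)$ order by order; this buys a transparent proof of the uniqueness clause (the kernel of the linearization is exactly the infinitesimal rescaling $\delta X=cz$, $\delta Y=-cz^{-1}$, $\delta\Theta=0$), but it costs one extra check that you should make explicit: at stage $N$ relation (I) holds only modulo $I^{N+1}$, so the form $\tfrac{dX}{X}+\tfrac{dY}{Y}+\psi'(\Theta)\,d\Theta=d\log\bigl(XY\phi(\Theta)\bigr)$ is not exactly zero, and you must note that its contribution to the residue is multiplied by $\Theta\in I$ and therefore lands in $I^{N+2}$, which is precisely the accuracy you need for the compatibility $[D^{\mathrm{II}}]_0\equiv[D^{\mathrm{III}}]_0$. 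With that point spelled out, your argument closes; the paper's closed-form splitting has the side benefit of carrying over verbatim to the analytic version in Proposition~\ref{prop:speccurv}, where the splitting becomes a Cauchy-integral decomposition along $|z|=1$.
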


Remark that the functions $X(z)$ and $Y(z)$ provide invertible changes in $R$, and the dependence between $X$ and $Y$ implied by these changes is independent of the freedom in a choice of the coordinate~$z$. The actual rescaling of $z$ is not so important, but it can be fixed, if needed, for example, by an additional requirement
\begin{equation}
X(z)=z+O(z^2).
\end{equation}

\begin{definition} By a \emph{formal spectral curve} associated with the data $(t,s,\phi)$ we mean the projective line $\Sigma=\C P^1$ with affine coordinate~$z$ and a triple of functions $X,Y,\Theta$ in~$R$ satisfying requirements of Proposition~\ref{prop:formalsc}.
\end{definition}

\begin{proof}[Proof of Proposition~\ref{prop:formalsc}]
Let us write
\begin{equation}
\Theta(z)=\sum_{k=-\infty}^\infty \alpha_k z^k.
\end{equation}
We will show that the coefficients of $\Theta(z)$ obey certain equation allowing one to express~$\alpha_0$ as a function in the remaining $\alpha$-parameters. So we can take the coefficients $\alpha_k$, $k\ne0$, as a new independent set of parameters of the problem (instead of $t$ and $s$). We express $t_i$, $s_j$ as functions in these parameters. Then, we apply the inverse function theorem to express $\alpha$-coordinates as functions in $t$ and $s$ parameters.

Taking the formal logarithm of $\phi$ we write
\begin{equation}
\log\phi(\Theta(z))=A(z)+B(z)
\end{equation}
where $A$ and $B$ involve nonnegative and nonpositive exponents of $z$, respectively. There is an ambiguity in a choice of the constant terms in $A$ and $B$. This ambiguity exactly corresponds to an ambiguity in a possible rescaling of $z$ coordinate in the proposition. We just require that the constant terms in $A$ and $B$ are certain series in $\alpha$-parameters such that $A|_{\alpha=0}=B|_{\alpha=0}=0$. Set
\begin{equation}
X=z e^{-A(z)},\qquad Y=z^{-1}e^{-B(z)}.
\end{equation}
Then the equation $X\,Y\,\phi(\Theta)=1$ and the requirement that $X$ and $Y$ are regular changes at $z=0$ and $z=\infty$, respectively, are satisfied. Applying these changes in~$R$ we can represent~$\Theta$ as an infinite Laurent series in the corresponding coordinate:
\begin{equation}\label{eq:ThetatXY0}
\Theta|_{z=z(X)}=\sum_{k=1}^\infty\frac{t_k}{X^k}+t_0+O(X),\qquad \Theta|_{z=z(Y)}=\sum_{k=1}^\infty\frac{s_k}{Y^k}+s_0+O(Y).
\end{equation}
The coefficients $t_k$, $s_k$ of these expansions are expressed as functions (formal power series) in $\alpha$-parameters. Expansions~\eqref{eq:ThetatXY0} differ from~\eqref{eq:ThetaXz},~\eqref{eq:ThetaYz} by the presence of the constant terms $t_0$, $s_0$. Vanishing of $t_0$ and $s_0$ provides functional relations between parameters $\alpha_k$. In fact, these two equations are equivalent to one another due to the following identity:
\begin{equation}\label{eq:t0s0}
t_0=s_0.
\end{equation}
This identity is proved below. The equation $t_0=0$ (or an equivalent one $s_0=0$) allows one to express $\alpha_0$ as a function in the remaining $\alpha$-parameters. Inverting the obtained dependence of $(t_,s)$ variables in $\alpha$ variables we resolve the equations of spectral curve.

In order to justify application of implicit function theorem one should add the computation of these equations in the liner approximation. Up to the terms of order greater than~$1$ in $\alpha$-coordinates these equations read
\begin{equation}
\alpha_k=t_{-k}\quad(k<0),\qquad \alpha_k=s_{k}\quad(k>0),\qquad \alpha_{0}=0.
\end{equation}
and so they are obviously solved in $(t,s)$-variables.

Let us prove finally the identity~\eqref{eq:t0s0}. For that we involve into consideration formal meromorphic differentials of the form $f(z)\,dz$, $f\in R$.

\begin{definition}
The residue of a formal meromorphic differential  $f(z)\,dz$, $f\in R$, is defined as its coefficient of $\tfrac{dz}{z}$ and denoted by $\Res f(z)\,dz$.
\end{definition}

The invariance of residues implies, in particular,
\begin{equation}
\Res\frac{dX(z)}{X(z)^k}=\begin{cases}1,&k=1\\0,&k\ne1,\end{cases}\qquad
\Res\frac{dY(z)}{Y(z)^k}=\begin{cases}-1,&k=1\\0,&k\ne1,\end{cases}
\end{equation}
where we denote $dX(z)=X'(z)\,dz$ and $dY(z)=Y'(z)\,dz$.

In these terms, the equations relating $\alpha$ and $(t,s)$ parameters can be written as
\begin{equation}
t_k=\Res\Theta(z) X(z)^{k-1}\,dX(z),
\quad
s_k=-\Res\Theta(z) Y(z)^{k-1}\,dY(z),\quad k\ge0.
\end{equation}

Remark that the identity $X\,Y\,\phi(\theta)=1$ implies $\tfrac{dX}{X}+\tfrac{dY}{Y}+\tfrac{\phi'(\Theta)\,d\Theta}{\phi(\Theta)}=0$. Therefore, we have
\begin{equation}
t_0-s_0=\Res\Theta\tfrac{dX}{X}+\Res\Theta\tfrac{dY}{Y}=-\Res\Theta\tfrac{\phi'(\Theta)}{\phi(\Theta)}\,d\Theta
\end{equation}
The series $\theta\tfrac{\phi'(\theta)}{\phi(\theta)}$ can be integrated as a formal power series in the variable $\theta$. Therefore, the form $\Theta\tfrac{\phi'(\Theta)}{\phi(\Theta)}\,d\Theta$ is a differential of an element of $R$, and hence, its residue is equal to zero. This completes the proof of~\eqref{eq:t0s0}, and hence that of Proposition~\ref{prop:formalsc}.
\end{proof}

\subsection{Unstable functions}

Consider the spectral curve defined above. Since $X(z)$ is a regular formal change of variables at the point $z=0$ and $Y(z)$ is a regular formal change of variables at the point $z=\infty$, we can substitute $X_i=X(z_i)$, $Y_j=Y(z_j)$ to $H^{(g)}_{m,n}$ and to treat this function as a series in $z_1,\dots,z_m,z_{m+1}^{i-1},\dots,z_{m+n}^{-1}$, that is, as a function on $\Sigma^{m+n}$ expanded as a formal power series at the point $z_1=\dots=z_m=0$, $z_{m+1}=\dots=z_{m+n}=\infty$. Our goal is to rewrite recursion of Theorem~\ref{th:formalH} it terms of these changes. The following result serves as a motivation for considering these changes.

\begin{proposition}\label{prop:unst}
The unstable correlator functions written in terms of the coordinates~$z_i$ of the (formal) spectral curve are determined explicitly by the following relations
\begin{align}\label{eq:H01}
X_1\partial_{X_1}H^{(0)}_{1,0}+\sum_{k=0}^\infty\frac{t_k}{X_1^k}=
Y_1\partial_{Y_1}H^{(0)}_{0,1}+\sum_{k=0}^\infty\frac{s_k}{Y_1^k}=\Theta(z_1),\\
\label{eq:H02}
d_1d_2H^{(0)}_{2,0}+\tfrac{dX_1\,dX_2}{(X_1-X_2)^2}=-d_1d_2H^{(0)}_{1,2}=
d_1d_2H^{(0)}_{0,2}+\tfrac{dY_1\,dY_2}{(Y_1-Y_2)^2}=\tfrac{dz_1\,dz_2}{(z_1-z_2)^2}.
\end{align}
\end{proposition}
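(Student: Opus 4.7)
The plan is to deduce~\eqref{eq:H01}--\eqref{eq:H02} from Theorem~\ref{th:formalW} in its degenerate cases, by extracting the leading $\hbar^0$ contribution and then invoking the uniqueness part of Proposition~\ref{prop:formalsc}.

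For~\eqref{eq:H01}, set
$$U(X):=X\partial_X H^{(0)}_{1,0}(X)+\sum_{k\geq 1}\tfrac{t_k}{X^k},\qquad V(Y):=Y\partial_Y H^{(0)}_{0,1}(Y)+\sum_{k\geq 1}\tfrac{s_k}{Y^k}.$$
Since $t_0=s_0=0$ along the formal spectral curve, it suffices to prove $U(X(z))=V(Y(z))=\Theta(z)$. From~\eqref{eq:cT}--\eqref{eq:cW} one observes that at $\hbar=0$ only the $k=1$ summand of $\cT^X_{1,0}$ (resp.\ $\cT^Y_{0,1}$) survives, since the higher-$k$ summands carry an explicit factor $\hbar^{2(k-1)}$, while the singular corrections precisely account for the $t$- and $s$-tails; hence
$$\cW^X_{1,0}(w;X)\big|_{\hbar=0}=w^{-1}e^{wU(X)},\qquad \cW^Y_{0,1}(u;Y)\big|_{\hbar=0}=u^{-1}e^{uV(Y)}.$$
Substituting these into~\eqref{eq:mainrelserWX} at $m=n=0$ and retaining only the $\hbar^0$ part (under which $\hat\phi_p\to\phi^p$) produces a formal identity between $w^{-1}e^{wU(X)}$ and a sum over $p\in\Z$ of Taylor coefficients of $\phi^p(\theta)e^{w\theta}$ paired against $V(Y)^{r+1}/(r+1)!$. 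Recognizing this identity as the classical Lagrange-inversion statement with kernel $\phi$, one reads off the spectral-curve equation $X\cdot Y\cdot\phi(V(Y))=1$ together with $U(X(z))=V(Y(z))$ along the corresponding change of variables. The uniqueness part of Proposition~\ref{prop:formalsc} then forces this common function to coincide with $\Theta$.

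For~\eqref{eq:H02} apply Theorem~\ref{th:formalW} at $(m,n)=(1,0)$ and $(0,1)$, and extract the $\hbar^0$ contribution in the same way. The resulting identities express each of the three bi-differentials appearing in~\eqref{eq:H02} as a symmetric, meromorphic, formal bi-differential on $\Sigma\times\Sigma$ with a single double pole on the diagonal of bi-residue one and no other singularities. On $\Sigma=\C P^1$ the unique such bi-differential is $dz_1\,dz_2/(z_1-z_2)^2$.

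The main obstacle I anticipate is the Lagrange-inversion step in the first argument: one must track carefully how the discrete index $p$ labelling $\phi_p$ interlocks with the $\theta$-derivatives produced by $\partial_\theta^r\tilde\phi_p$ in order to recover the spectral-curve equation $XY\phi(\Theta)=1$ from the identity between the two generating series. Once this identification is made, the remainder of the argument is purely formal and relies only on the uniqueness guaranteed by Proposition~\ref{prop:formalsc}.
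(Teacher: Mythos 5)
Your proposal is correct and follows essentially the same route as the paper: extract the $\hbar^0$ (and $w^0$) part of Theorem~\ref{th:formalW} in the cases $m+n=0$ and $m+n=1$, identify the resulting identity with the Lagrange-inversion/spectral-curve change of variables, and conclude via the uniqueness underlying Proposition~\ref{prop:formalsc} for \eqref{eq:H01} and via uniqueness of the Bergman kernel on $\C P^1$ for \eqref{eq:H02}. The one point stated more loosely than in the paper is the claim of ``no other singularities'' in the \eqref{eq:H02} step: this is not read off from any single identity, but from combining the presentation regular near $z=0$ (through $H^{(0)}_{2,0}$) with the one regular near $z=\infty$ (through $H^{(0)}_{0,2}$), both equated to $-d_1d_2H^{(0)}_{1,1}$ --- which is exactly the two-sided comparison the paper performs to force the correction terms to vanish.
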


For the proof see Sect.~\ref{sec:Lagrange} below. Remark that the expressions in~\eqref{eq:H02} can be integrated and we obtain, more explicitly,
\begin{align}
H^{(0)}_{2,0}&=\log\left(\frac{z_1^{-1}-z_2^{-1}}{X_1^{-1}-X_2^{-1}}\gamma_1^{-1}\right),\\
H^{(0)}_{1,1}&=-\log\left(1-\frac{z_1}{z_2}\right),\\
H^{(0)}_{0,2}&=\log\left(\frac{z_1-z_2}{Y_1^{-1}-Y_2^{-1}}\gamma_2^{-1}\right),
\end{align}
where the constants $\gamma_1$ and $\gamma_2$ are determined by the requirement that $H^{(0)}_{2,0}$ is vanishing for $z_1=z_2=0$ and $H^{(0)}_{0,2}$ is vanishing for $z_1=z_2=\infty$, namely,
\begin{equation}
X(z)=\gamma_1 z+O(z^2),\qquad Y(z)=\gamma_2z^{-1}+O(z^{-2}).
\end{equation}

\subsection{Formulation of the recursion (final version)}
In order to rewrite relation of Theorem~\ref{th:formalH} in terms of the change implied by the spectral curve equation we first make the following observation. Define $L_r(p,\theta)$ through the equality
\begin{equation}
\partial_\theta^r\hat\phi_p(\theta)=L_r(p,\theta)\phi(\theta)^p.
\end{equation}
Then we have, explicitly (see \cite[Section 4]{bychkov2021explicit}),
\begin{equation}\label{eq:Lr}
L_r(v,\theta)
=e^{-v\psi(\theta)}\partial_\theta^re^{v\frac{\cS(v\hbar\partial_\theta)}{\cS(\hbar\partial_\theta)}\hat\psi(\theta)}
=(\partial_\theta+v\,\psi'(\theta))^re^{v\frac{\cS(v\hbar\partial_\theta)}{\cS(\hbar\partial_\theta)}\hat\psi(\theta)-v\psi(\theta)},
\end{equation}
where $\cS$ is defined in~\eqref{eq:S}. This shows that the coefficient of any power of $\hbar^2$ in $L_r(v,\theta)$ is polynomial in $v$.

Let $f(u,Y)$ be a function which is polynomial in $u$ and a Laurent series in $Y$. Denote by $U^X$ the transformation of Theorem~\ref{th:formalH} sending $f$ to the series in~$X$ defined by
\begin{equation}
(U^Xf)(X)=\sum_{p=-\infty}^\infty X^p\sum_{r=0}^\infty  \partial_\theta^r\hat\phi_p(\theta)\bigm|_{\theta=0}\;[Y^{-p}u^r]\;
e^{u\bigl(Y\partial_Y H^{(0)}_{0,1}(Y)+\sum_{k=1}^\infty\frac{s_k}{Y^k}\bigr)}f(u,Y)
\end{equation}
(the formulation of Theorem~\ref{th:formalH} requires also to extend the transformation $U^X$ to the case when $f=\frac1{u}$).

\begin{proposition} \label{prop:Ur}
The transformation $U^X$ acts on monomials in $u$ as the following differential operator
\begin{align}
U^X(u^r f)&=-\sum_{j=0}^\infty (X\partial_X)^j\bigl(\tfrac{X}{dX}\tfrac{dY}{Y}[v^j]L_r(v,\Theta)\;f\bigr),\qquad r\ge0\\
U^X\bigl(\tfrac1u\bigr)&=\Theta+\sum_{j=1}^\infty (X\partial_X)^{j-1}\bigl([v^j]L_0(v,\Theta)\;X\partial_X\Theta\bigr).
\end{align}
where $f$ is any Laurent series in $Y$ and where we assume that $X,Y$, and $\Theta=\Theta(z)$ are related by the equation of spectral curve.
\end{proposition}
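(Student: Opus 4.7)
The plan is to reduce both identities to a Lagrange--Bürmann type residue manipulation on the formal spectral curve, using the relation $XY\phi(\Theta)=1$ to convert $Y$-coefficient extractions into $X$-coefficient extractions.

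For the first formula, I would start from the Taylor identity $\hat\phi_p(\theta) = \sum_{k\ge 0} L_k(p,0)\theta^k/k!$ (equivalently $\partial_\theta^r\hat\phi_p = L_r(p,\theta)\phi(\theta)^p$) to resum the exponential against the definition of $U^X$, obtaining
\begin{equation*}
U^X(u^rf)(\xi) = \sum_{p\in\Z} \xi^p\,[Y^{-p}]\,L_r(p,\Theta)\,\phi(\Theta)^p\, f.
\end{equation*}
The spectral curve then gives $\phi(\Theta)^p = (XY)^{-p}$, turning $[Y^{-p}]$ into $[Y^0]$ of $L_r(p,\Theta)\,X(z)^{-p} f$. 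Rewriting $[Y^0]$ as the residue $\res_{z=\infty}(\cdot)\tfrac{dY}{Y}$ on the spectral curve and using the total-residue identity $\res_{z=\infty}+\res_{z=0}=0$ on $\C P^1$, I change coordinate to $X$ to get $-[X^{p-1}]L_r(p,\Theta)f\cdot\tfrac{dY/Y}{dX}$. Expanding $L_r(v,\Theta)=\sum_j v^j[v^j]L_r(v,\Theta)$ as a polynomial in $v$ and applying the identity
\begin{equation*}
\sum_{p\in\Z} \xi^p\, p^j\,[X^{p-1}]g(X) = (\xi\partial_\xi)^j\bigl(\xi\, g(\xi)\bigr),
\end{equation*}
which follows by iterating $(\xi\partial_\xi)(\xi g) = \xi(1+\xi\partial_\xi)g$, then relabeling $\xi\to X$, produces the claimed formula.

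For $U^X(1/u)$, the analogous object is $\tilde L(p,\theta) := \sum_{r\ge 0} L_r(p,0)\theta^{r+1}/(r+1)!$, characterized by $\partial_\theta\tilde L = \hat\phi_p$. The $p=0$ term yields $[Y^0]\Theta = s_0 = 0$ (from $t_0=s_0=0$, established in the proof of Proposition~\ref{prop:formalsc}). For $p\ne 0$, the identity $[Y^{-p}](Y\partial_Y g) = -p[Y^{-p}]g$ combined with the chain rule gives
\begin{equation*}
[Y^{-p}]\tilde L(p,\Theta) = -\tfrac{1}{p}[Y^{-p}]\,\hat\phi_p(\Theta)\cdot Y\partial_Y\Theta,
\end{equation*}
reducing to the setting above. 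After the same residue manipulation, the $[v^0]L_0=1$ contribution yields $\Theta(\xi)-[X^0]\Theta = \Theta$ (again by $t_0=0$), while the $j\ge 1$ coefficients assemble into $\sum_{j\ge 1}(X\partial_X)^{j-1}\bigl([v^j]L_0(v,\Theta)\cdot X\partial_X\Theta\bigr)$, the exponent shift $j\mapsto j-1$ coming from the extra $1/p$. The $p=0$ corrections at each $j$ vanish because the forms $[v^j]L_0(v,\Theta)\,d\Theta$ are exact in $X$, hence their $X$-residues are zero.

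The principal technical point will be justifying the total-residue identity and the interchange of the residue with the infinite sum over $p$ in the formal setting. Both follow from filtering by total degree in the parameters $t$ and $s$: at each finite order the integrands become Laurent polynomials in $z$ with only finitely many values of $p$ giving nonzero contributions, so the standard total-residue identity on $\C P^1$ applies termwise.
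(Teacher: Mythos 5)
Your proof is correct and follows essentially the same route as the paper's (Section~\ref{sec:Lagrange}): Taylor resummation of the exponential to evaluate $\partial_\theta^r\hat\phi_p$ at $\theta=\Theta$, substitution of $\phi(\Theta)^p=(XY)^{-p}$ from the spectral curve, conversion of coefficient extraction into formal residues and a change of coordinate from $Y$ to $X$, and finally the identity $\sum_p X^p p^j[X^p]=(X\partial_X)^j$. For $U^X(1/u)$ your per-coefficient integration by parts via $[Y^{-p}]Y\partial_Y=-p[Y^{-p}]$ is just a reshuffling of the paper's differentiate-then-integrate argument, and your treatment of the $p=0$ and $j=1$ edge terms via vanishing of $\Res G(\Theta)\,d\Theta$ is the same device the paper uses to fix the integration constant and to prove $t_0=s_0$.
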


As an immediate corollary of this Proposition combined with the statement of Theorem~\ref{th:formalH} we obtain the following principal form of our recursion.

\begin{theorem}\label{th:formalHz}
With notations of Theorem~\ref{th:formalH}, the following relation holds for all $(g,m,n)$ with $2g-2+m+n>0$
\begin{equation}
	\begin{aligned}\label{eq:mainrelserHXz}
		X\partial_X H^{(g)}_{m+1,n}&=[\hbar^{(2g)}]U^X\;e^{-u\,\Theta}\cW^Y_{m,n+1}(u)\\
&=-[\hbar^{2g}]\sum_{j=0}^\infty (X\partial_X)^j \Bigg(\frac{X}{dX}\frac{dY}{Y}
\sum_{r=0}^\infty [v^j]L_r(v,\Theta)[u^r]e^{-u\,\Theta}\cW^Y_{m,n+1}(u)\\
&\qquad\qquad-\delta_{m+n,0}[v^{j+1}]L_0(v,\Theta)X\partial_X\Theta\Bigg),
	\end{aligned}
\end{equation}
where we assume that $X=X_{m+1}$, $Y=Y_{m+1}$, and $\Theta=\Theta(z_{m+1})$ are related by the equation of spectral curve.
By the symmetry between $X$ and $Y$ variables, we also have
	\begin{equation}\label{eq:mainrelserHYz}
		Y\partial_Y H^{(g)}_{m,n+1}=[\hbar^{(2g)}]\;U^Y\,e^{-u\,\Theta}\cW^X_{m+1,n}(u),
	\end{equation}
where $\cW^X$ and $U^Y$ are defined in a similar way, with the exchange of the role of $X$ and $Y$ variables and also $t$ and $s$ parameters.
\end{theorem}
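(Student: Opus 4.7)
The plan is to obtain Theorem~\ref{th:formalHz} as a direct consequence of Theorem~\ref{th:formalH}, Proposition~\ref{prop:unst} and Proposition~\ref{prop:Ur}: the first supplies the recursion with $X,Y$ treated as independent variables, the second identifies $Y\partial_Y H^{(0)}_{0,1}+\sum_{k\geq 1}s_k/Y^k$ with the spectral curve function $\Theta(z)$, and the third converts the coefficient-extracting operator $U^X$ into an explicit differential operator.

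First I would show that the right-hand side of Theorem~\ref{th:formalH} equals $[\hbar^{2g}]\,U^X(e^{-u\Theta}\cW^Y_{m,n+1}(u))$. Upon substituting $Y=Y(z)$ into the definition of $U^X$, the exponential weight $e^{u(Y\partial_Y H^{(0)}_{0,1}(Y)+\sum_{k\geq 1}s_k/Y^k)}$ becomes $e^{u\Theta(z)}$ by equation~\eqref{eq:H01} of Proposition~\ref{prop:unst} (recall that $s_0=0$ by the proof of Proposition~\ref{prop:formalsc}). Multiplying $\cW^Y_{m,n+1}(u)$ by $e^{-u\Theta}$ cancels this built-in factor, so the coefficient extraction inside $U^X$ reduces exactly to the sum on the right-hand side of~\eqref{eq:mainrelserHX}. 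This establishes the first equality in~\eqref{eq:mainrelserHXz}.

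Second, I would apply Proposition~\ref{prop:Ur} term by term to the $u$-expansion of $e^{-u\Theta}\cW^Y_{m,n+1}(u)$. For $(m,n)\neq(0,0)$ the product expansion~\eqref{eq:cWexp} has at least one factor $\cT^Y_{|I_\alpha|,|J_\alpha|+1}$, each starting at $u^1$, so the prefactor $1/(u\cS(u\hbar))$ is absorbed and $\cW^Y_{m,n+1}(u)$, and hence $e^{-u\Theta}\cW^Y_{m,n+1}(u)$, contain only nonnegative powers of $u$; applying the first formula of Proposition~\ref{prop:Ur} to each $[u^r]$-coefficient and summing in $r$ reproduces exactly the double sum in~\eqref{eq:mainrelserHXz} (without the $\delta_{m+n,0}$ correction).

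Third, in the exceptional case $(m,n)=(0,0)$ the empty-partition contribution in~\eqref{eq:cWexp} gives $\cW^Y_{0,1}(u)=e^{\cT^Y_{0,1}(u;Y)}/(u\cS(u\hbar))$, which carries a $u^{-1}$ term. Since $\cT^Y_{0,1}=O(u)$ and $\cS(0)=1$, the coefficient of $u^{-1}$ in $e^{-u\Theta}\cW^Y_{0,1}(u)$ equals $[u^0](e^{\cT^Y_{0,1}-u\Theta}/\cS(u\hbar))=1$. Applying the second formula of Proposition~\ref{prop:Ur} and reindexing $j\mapsto j+1$ produces the additional summand $\Theta+\sum_{j\geq 0}(X\partial_X)^j[v^{j+1}]L_0(v,\Theta)\,X\partial_X\Theta$; the hypothesis $2g-2+m+n>0$ forces $g\geq 2$ when $m+n=0$, so the $\hbar$-independent term $\Theta$ is killed by $[\hbar^{2g}]$, while the remaining sum reproduces the $\delta_{m+n,0}$-correction in~\eqref{eq:mainrelserHXz}. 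The companion identity~\eqref{eq:mainrelserHYz} follows by the symmetric argument with the roles of $(X,t)$ and $(Y,s)$ interchanged. The main delicate point is tracking the $u^{-1}$ singularity and the overall sign; once this bookkeeping is done, the theorem is a mechanical concatenation of the three ingredients.
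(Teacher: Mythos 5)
Your proposal is correct and follows the paper's own route: the paper obtains Theorem~\ref{th:formalHz} precisely as the combination of Theorem~\ref{th:formalH} with Proposition~\ref{prop:Ur} (equivalently, as the $w^0$-coefficient of Theorem~\ref{th:formalWz}), with the identification $Y\partial_Y H^{(0)}_{0,1}+\sum_k s_k Y^{-k}=\Theta$ supplied by Proposition~\ref{prop:unst}. Your bookkeeping of the $u^{-1}$ term of $\cW^Y_{0,1}$, the reindexing $j\mapsto j+1$ for the $\delta_{m+n,0}$ correction, and the vanishing of $[\hbar^{2g}]\Theta$ all match the paper's (much terser) argument.
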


The main advantage of this form of recursion comparing with that of Theorem~\ref{th:formalH} is that \emph{the right hand sides of~\eqref{eq:mainrelserHXz} and~\eqref{eq:mainrelserHYz} contain only finitely many nonzero summands for any particular triple $(g,m,n)$}. Indeed, it follows from~\eqref{eq:H01} that the genus~$0$ component of $\cT^Y_{0,1}(u;Y)$ defined by~\eqref{eq:cT} when written in $z$ coordinate coincides with $u\,\Theta(z)$, i.e.\ $\cT^Y_{0,1}(u;Y)=u\,\Theta(z)+O(\hbar^2)$. This implies (by~\eqref{eq:cWexp}, for example) that the coefficient of any particular power of $\hbar$ in~$e^{-u\,\Theta(z)}\cW^{Y}_{m,n+1}$ is polynomial in $u$ and involves a finite polynomial combination of functions of the form $H^{(g')}_{m',n'}$ and their derivatives.

A similar treatment in terms of the spectral curve equation can be applied to the equality of Theorem~\ref{th:formalW}. Define $\tilde L_r(p,\theta,w)$ through
\begin{equation}
\partial_\theta^r\tilde\phi_p(\theta,w)=\tilde L_r(p,\theta,w)\phi(\theta)^p,
\end{equation}
or, equivalently,
\begin{equation}\label{eq:tLr}
\tilde L_r(v,\theta,w)
=e^{-v\psi(\theta)}\partial_\theta^re^{v\frac{\cS(v\hbar\partial_\theta)}{\cS(\hbar\partial_\theta)}\hat\psi(\theta)+w\,\theta}
=(\partial_\theta+v\,\psi'(\theta))^re^{v\frac{\cS(v\hbar\partial_\theta)}{\cS(\hbar\partial_\theta)}\hat\psi(\theta)+w\,\theta-v\psi(\theta)}.
\end{equation}

Similarly to Proposition~\ref{prop:Ur}, we consider the transformation~$\tilde U^X$ defined by
\begin{equation}
(\tilde U^Xf)(X)=\sum_{p=-\infty}^\infty X^p\sum_{r=0}^\infty  \partial_\theta^r\tilde\phi_p(\theta,w)\bigm|_{\theta=0}\;[Y^{-p}u^r]\;
e^{u\bigl(Y\partial_Y H^{(0)}_{0,1}(Y)+\sum_{k=1}^\infty\frac{s_k}{Y^k}\bigr)}f(u,Y).
\end{equation}
Then we have, through the change implied by the spectral curve equation,
\begin{align}\label{eq:tUp}
\tilde U^X(u^r f)&=-\sum_{j=0}^\infty (X\partial_X)^j\bigl(\tfrac{X}{dX}\tfrac{dY}{Y}[v^j]\tilde L_r(v,\Theta,w)\;f\bigr),\qquad r\ge0\\
\label{eq:tU0}
\tilde U^X\bigl(\tfrac1u\bigr)&=\tfrac{e^{w\,\Theta}}{w}+\sum_{j=1}^\infty (X\partial_X)^{j-1}\bigl([v^j]\tilde L_0(v,\Theta,w)\;X\partial_X\Theta\bigr).
\end{align}
where $f$ is any Laurent series in $Y$, and we arrive at the following form of relation of~Theorem~\ref{th:formalW}.

\begin{theorem}\label{th:formalWz}
The following relation holds for all $(m,n)$
\begin{equation}
	\begin{aligned}\label{eq:mainrelserWXz}
		\cW^X_{m+1,n}(w)&=\tilde U^X\;e^{-u\,\Theta}\cW^Y_{m,n+1}(u)\\
&=-\sum_{j=0}^\infty (X\partial_X)^j \Bigg(\frac{X}{dX}\frac{dY}{Y}
\sum_{r=0}^\infty [v^j]\tilde L_r(v,\Theta,w)[u^r]e^{-u\,\Theta}\cW^Y_{m,n+1}(u)\\
&\qquad\qquad-\delta_{m+n,0}[v^{j+1}]\tilde L_0(v,\Theta,w)X\partial_X\Theta\Bigg)
+\delta_{m+n,0}\frac{e^{w\,\Theta}}{w},
	\end{aligned}
\end{equation}
where we assume that $X=X_{m+1}$, $Y=Y_{m+1}$, and $\Theta=\Theta(z_{m+1})$ are related by the equation of spectral curve.
\end{theorem}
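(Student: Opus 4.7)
The plan is to deduce Theorem~\ref{th:formalWz} directly from Theorem~\ref{th:formalW} together with the action formulas~\eqref{eq:tUp}--\eqref{eq:tU0} for the operator $\tilde U^X$. The argument splits into three steps.

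First, I would recognize the operator-theoretic rewriting. Theorem~\ref{th:formalW} gives
\begin{equation*}
\cW^X_{m+1,n}(w) = \sum_{p\in\Z}X^p\sum_{r\ge 0}\partial_\theta^r\tilde\phi_p(\theta,w)\bigm|_{\theta=0}\,[Y^{-p}u^r]\,\cW^Y_{m,n+1}(u).
\end{equation*}
By Proposition~\ref{prop:unst}, combined with the identity $s_0=0$ established in Proposition~\ref{prop:formalsc}, the multiplicative prefactor $e^{u(Y\partial_Y H^{(0)}_{0,1}(Y)+\sum_{k\ge1}s_k/Y^k)}$ appearing inside the coefficient extraction in the definition of $\tilde U^X$ coincides with $e^{u\Theta(z)}$. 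Multiplying and dividing the argument by $e^{u\Theta}$ thus identifies the right-hand side above with $\tilde U^X\bigl(e^{-u\Theta}\cW^Y_{m,n+1}(u)\bigr)$, which is precisely the first line of~\eqref{eq:mainrelserWXz}.

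Second, I would convert this compact form into the explicit expansion by termwise application of~\eqref{eq:tUp}--\eqref{eq:tU0}. For $(m,n)\ne(0,0)$ the series $\cW^Y_{m,n+1}(u)$ contains only nonnegative powers of $u$ (the remark following~\eqref{eq:cW}), and the same is then true of $e^{-u\Theta}\cW^Y_{m,n+1}(u)$; applying~\eqref{eq:tUp} to each monomial $u^r$ (with the $Y$-dependence played by $f$) and summing over $r$ produces the main double sum indexed by $(j,r)$. For $(m,n)=(0,0)$, the explicit form $\cW^Y_{0,1}(u)=e^{\cT^Y_{0,1}(u;Y)}/(u\cS(u\hbar))$ combined with $\cT^Y_{0,1}(u;Y)-u\Theta = O(u^2)+O(\hbar^2)$ yields the decomposition $e^{-u\Theta}\cW^Y_{0,1}(u)=\tfrac{1}{u}+(\text{series in nonnegative powers of }u)$. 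Applying~\eqref{eq:tU0} to the singular $\tfrac{1}{u}$ part and~\eqref{eq:tUp} to the regular part accounts, respectively, for the boundary contribution $\tfrac{e^{w\Theta}}{w}$ and the correction $-\sum_j(X\partial_X)^j[v^{j+1}]\tilde L_0(v,\Theta,w)\,X\partial_X\Theta$ encoded by the $\delta_{m+n,0}$ in the stated formula.

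Third, the technical heart lies in verifying~\eqref{eq:tUp}--\eqref{eq:tU0}, which is the natural $w$-dependent extension of Proposition~\ref{prop:Ur}. The idea is to use the spectral curve equation $XY\phi(\Theta)=1$, which gives $Y^{-p}=X^p\phi(\Theta)^p$, to recast the coefficient extraction $[Y^{-p}]$ as a $z$-residue, then push everything to the $X$-chart via Lagrange inversion for the change $z\mapsto X(z)$. The factor $\tilde L_r(v,\Theta,w)$ emerges by applying the defining relation $\partial_\theta^r\tilde\phi_p(\theta,w)=\tilde L_r(p,\theta,w)\phi(\theta)^p$ at $\theta=\Theta$ and combining it with $\phi(\Theta)^p$ coming from the spectral curve; the differential operator $(X\partial_X)^j\tfrac{X}{dX}\tfrac{dY}{Y}[v^j]$ is then precisely the output of Lagrange coefficient extraction. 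The main obstacle is keeping track of the pole at $u=0$ which produces the boundary term $e^{w\Theta}/w$ in~\eqref{eq:tU0}: one must carefully analyze the residue at $z=\infty$ contribution that survives the Lagrange rearrangement. This uses the same Lagrange-inversion toolkit that underlies Proposition~\ref{prop:unst} and is developed in Section~\ref{sec:Lagrange} of the paper.
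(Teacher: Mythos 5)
Your proposal is correct and follows essentially the same route as the paper: Theorem~\ref{th:formalWz} is deduced from Theorem~\ref{th:formalW} by identifying the prefactor in the definition of $\tilde U^X$ with $e^{u\Theta}$ via~\eqref{eq:H01}, and then applying the action formulas~\eqref{eq:tUp}--\eqref{eq:tU0}, which the paper establishes by exactly the Lagrange-inversion/residue computation you describe in Section~\ref{sec:Lagrange} (including the separate treatment of the $1/u$ term of $\cW^Y_{0,1}$ that produces the $\delta_{m+n,0}\,e^{w\Theta}/w$ contribution).
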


\subsection{Lagrange inversion formula and principal identity}\label{sec:Lagrange}
This section is devoted to the proofs of relations of the previous section. Our computations are based on Lagrange Inversion Theorem and Principal Identity used in \cite{bychkov2021explicit}. The invariance of residues implies that these tools can be applied to the situation when all considered functions belong to the ring~$R$. Thus, for the proof of~\eqref{eq:tUp} we follow step by step the computations made
in~\cite[Corollary 4.4]{bychkov2021explicit}:

{\renewcommand{\arraystretch}{2}
\begin{longtable}{>{$\displaystyle}l<{$}>{$\displaystyle}l<{$}}
			\tilde U^X \bigl(u^r f\bigr)&=\sum_{p=-\infty}^\infty X^p\sum_{k=0}^\infty \partial_\theta^r\tilde\phi_p(\theta)\bigm|_{\theta=0}\;
			[Y^{-p}u^k]\;e^{u\,\Theta}\;u^r f(Y)\\
			&=\sum_{p=-\infty}^\infty X^p\sum_{k=0}^\infty  [Y^{-p}]\;\partial_\theta^{r+k}\tilde\phi_p(\theta)\bigm|_{\theta=0}\;\frac{\Theta^k}{k!}\;f(Y)\\
			&=\sum_{p=-\infty}^\infty X^p\sum_{r=0}^\infty  [Y^{-p}]\;\partial_\theta^r\tilde\phi_p(\theta)\bigm|_{\theta=\Theta}\;f(Y)\\
			&=-\sum_{p=-\infty}^\infty X^p\Res \frac{dY}{Y}\;Y^p\phi(\Theta)^p \tilde L_{r}(p,\Theta)\;f(Y)\\
			&=-\sum_{p=-\infty}^\infty X^p \Res \frac{dY}{Y}\;X^{-p} \tilde L_{r}(p,\Theta)\;f(Y)\\
			&=-\sum_{p=-\infty}^\infty X^p [X^p] \frac{dY}{Y} \frac{X}{dX} \tilde L_{r}(p,\Theta)\;f(Y)\\
			&=-\sum_{j=0}^\infty \bigl(X\partial_X\bigr)^j\Bigl(\frac{dY}{Y} \frac{X}{dX} [v^j]\tilde L_{r}(v,\Theta)\;f(Y)\Bigr).
\end{longtable}}
This proves~\eqref{eq:tUp}. In the case~\eqref{eq:tU0}, denoting
	\begin{equation}
		G(X)=\tilde U^X\bigl(\tfrac1u\bigr)=\sum_{p=-\infty}^\infty X^p\sum_{r=0}^\infty \partial_\theta^r\tilde\phi_p(\theta)\bigm|_{\theta=0}\;
		[Y^{-p}u^r]\;\frac{e^{u\,\Theta}}{u},
	\end{equation}
	we have
{\renewcommand{\arraystretch}{2}
\begin{longtable}{>{$\displaystyle}l<{$}>{$\displaystyle}l<{$}}
			X\partial_X G(X)
			&=\sum_{p=-\infty}^\infty p\,X^p\sum_{r=0}^\infty \partial_\theta^r\tilde\phi_p(\theta)\bigm|_{\theta=0}\;
			[Y^{-p}u^r]\;\frac{e^{u\,\Theta}}{u}\\
			&=-\sum_{p=-\infty}^\infty \,X^p\sum_{r=0}^\infty \partial_\theta^r\tilde\phi_p(\theta)\bigm|_{\theta=0}\;
			[Y^{-p}u^r]\;Y\partial_Y\frac{e^{u\,\Theta}}{u}\\
			&=-\sum_{p=-\infty}^\infty \,X^p\sum_{r=0}^\infty \partial_\theta^r\tilde\phi_p(\theta)\bigm|_{\theta=0}\;
			[Y^{-p}u^r]\;e^{u\,\Theta}\;Y\partial_Y\Theta\\
			&=-\tilde U^X(Y\partial_Y\Theta)\\
			&=\sum_{j=0}^\infty \bigl(X\partial_X\bigr)^j\Bigl(\frac{dY}{Y} \frac{X}{dX} [v^j]\tilde L_{0}(v,\Theta)\;Y\partial_Y\Theta\Bigr)\\
			&=e^{w\,\Theta}X\partial_X\Theta+\sum_{j=1}^\infty \bigl(X\partial_X\bigr)^j\Bigl([v^j]\tilde L_{0}(v,\Theta)\;X\partial_X\Theta\Bigr)\\
			&=X\partial_X\Biggl(\frac{e^{w\,\Theta}}{w}+\sum_{j=1}^\infty \bigl(X\partial_X\bigr)^{j-1}\Bigl([v^j]\tilde L_{0}(v,\Theta)\;X\partial_X\Theta\Bigr)\Biggr).
\end{longtable}}
	Integrating both sides we obtain~\eqref{eq:tU0} up to an additive constant. The integration constant can be obtained by taking the residues of both sides of the obtained equality multiplied by $\frac{dX}{X}$.

Theorem~\ref{th:formalWz} follows from Theorem~\ref{th:formalW} by the proven Eqs.~\eqref{eq:tUp}--\eqref{eq:tU0}. Taking the coefficient of $w^0$ in~\eqref{eq:tUp}, \eqref{eq:tU0}, and~\eqref{eq:mainrelserHXz} we obtain equalities of Proposition~\ref{prop:Ur} and Theorem~\ref{th:formalHz}, respectively.

\medskip
Let us prove finally Proposition~\ref{prop:unst}. Remark that the above computations do not use its statement, but it is needed to be assured that the right hand side of~\eqref{eq:mainrelserHXz} has finitely many nonzero summands. Let us denote
\begin{equation}
\Theta^X(X)=X\partial_X H^{(0)}_{1,0}(X)+\sum_{k=1}^\infty\frac{t_k}{X^k},\qquad
\Theta^Y(Y)=Y\partial_Y H^{(0)}_{0,1}(Y)+\sum_{k=1}^\infty\frac{s_k}{Y^k}
\end{equation}
Taking the coefficient of $w^0\hbar^0$ on both sides of~\eqref{eq:mainrelserWX} for $m=n=0$ we get
\begin{equation}
\Theta^X(X)=\sum_{p=-\infty}^\infty X^p\sum_{r=0}^\infty \partial_\theta^r\phi(\theta)^p\bigm|_{\theta=0}\;
			[Y^{-p}u^r]\;\frac{e^{u\,\Theta^Y(Y)}}{u}
\end{equation}
Repeating the computations similar to those above we obtain that the right hand side is equal to $\Theta^Y(Y)$ with the substitution inverse to the change $X(Y)=\frac1{Y\,\phi(\Theta^Y(Y))}$. In other words, the following relation holds true
\begin{equation}
\Theta^X\bigl(\tfrac1{Y\,\phi(\Theta^Y(Y))}\bigr)=\Theta^Y(Y).
\end{equation}
The change $X(Y)=\frac1{Y\,\phi(\Theta^Y(Y))}$ involves the terms with both positive and negative arbitrary large exponents of~$Y$, and in order to attain a meaning to this change we consider it as a change in the ring~$R$. We conclude that the functions $\Theta^X$ and $\Theta^Y$ are identified by this change and the inverse change is given by
$Y=\tfrac{1}{X\,\phi(\Theta^X(X))}$. Indeed,
\begin{equation}
\frac{1}{X(Y)\,\phi(\Theta^X(X(Y)))}=\frac{1}{X(Y)\,\phi(\Theta^Y(Y))}=Y.
\end{equation}

We observe now that the function $\Theta^X$ possesses the following properties.
\begin{itemize}
\item It has an expansion of the form
\begin{equation}
\Theta^X(X)=\sum_{k=1}^\infty\frac{t_k}{X^k}+O(X).
\end{equation}
\item The same function rewritten in the variable $Y$ related to $X$ by the change inverse to $Y=\tfrac{1}{X\,\Theta^X(X)}$ has an expansion of the form
\begin{equation}
\Theta^X(X(Y))=\sum_{k=1}^\infty\frac{s_k}{Y^k}+O(Y).
\end{equation}
\end{itemize}

It is easy to see that these two conditions determine the series $\Theta^X$ uniquely: this is yet another application of implicit function theorem similar to that one used in the proof of Proposition~\ref{prop:formalsc}. The function $\Theta(z)$ entering the definition of the spectral curve and rewritten in the coordinate $X$ does satisfy these conditions. Therefore, the functions $\Theta^X(X)$ and $\Theta^Y(Y)$ coincide with the function $\Theta(z)$ rewritten in the coordinates~$X$ and~$Y$ of the spectral curve, respectively. This proves Eq.~\eqref{eq:H01} of Proposition~\ref{prop:unst}.

\medskip
Eqs.~\eqref{eq:H02} of Proposition~\ref{prop:unst} corresponds to the coefficient of $w^0\hbar^0$ on both sides of~\eqref{eq:mainrelserWX} or~\eqref{eq:mainrelserHXz} for the case $m+n=1$. The only graph that contributes to the sum on the right hand side in these cases is a single tree with one black vertex connected with two white vertices. Taking into account the singular corrections entering~\eqref{eq:cT} we obtain (for the cases $(m,n)=(1,0)$ and $(0,1)$, respectively)
\begin{equation}
\begin{gathered}
X_2\partial_{X_2}H^{(0)}_{2,0}+\tfrac{X_1}{X_2-X_1}=-\tfrac{X_2}{dX_2}\tfrac{dY_2}{Y_2}Y_2\partial_{Y_2}H^{(0)}_{1,1},\\
X_1\partial_{X_1}H^{(0)}_{1,1}=-\tfrac{X_1}{dX_1}\tfrac{dY_1}{Y_1}\bigl(Y_1\partial_{Y_1}H^{(0)}_{0,2}+\tfrac{Y_2}{Y_1-Y_2}\bigr).
\end{gathered}
\end{equation}
Where we assume that $X_i$ is related to $Y_i$ by the equation of spectral curve. Differentiating the first line in $X_1$ and the second one in $X_2$ we obtain
\begin{equation}
d_1d_2H^{(0)}_{2,0}+\tfrac{dX_1dX_2}{(X_1-X_2)^2}=-d_1d_2H^{(0)}_{1,1}=d_1d_2H^{(0)}_{0,2}+\tfrac{dY_1dY_2}{(Y_1-Y_2)^2}.
\end{equation}
Let us rewrite the obtained $2$-differential in $z$ coordinates of the spectral curve. Using the fact that $X(z)$ is a regular change of coordinates at $z=0$, we get by local computations from the first equality that it has the form
\begin{equation}
\frac{dz_1dz_2}{(z_1-z_2)^2}+(\text{regular series in $z_1,z_2$})\;dz_1dz_2.
\end{equation}
On the other hand, from the last equality we get by similar local computations using the fact that $Y(z)$ is a regular change at $z=\infty$ that this $2$-differential has the form
\begin{equation}
\frac{dz_1dz_2}{(z_1-z_2)^2}+(\text{regular series in $z_1^{-1},z_2^{-1}$})\,\frac{dz_1dz_2}{z_1^2z_2^2}.
\end{equation}
This is only possible if the correction terms in both presentations are equal to zero. This proves Eq.~\eqref{eq:H02} of Proposition~\ref{prop:unst}.

\subsection{Reduced recursion, splitting of poles, and the strategy of computation of \texorpdfstring{$H^{(g)}_{m,n}$}{Hgmn}'s}

The right hand side in~\eqref{eq:mainrelserHXz} has finitely many nonzero summands only, and each summand is a polynomial combination of the functions of the form $H_{m'n'}^{(g')}$ and their derivatives. The function $H^{(g)}_{m,n+1}$ also contributes to the right hand side of~\eqref{eq:mainrelserHXz}, namely, to the summand with $j=0$: we have from~\eqref{eq:mainrelserHXz} and~\eqref{eq:Lr}
\begin{align}
[\hbar^{2g}]\cW^Y_{m,n+1}&=Y\partial_Y H^{(g)}_{m,n+1}+O(u),\\
L_0(v,\theta)&=1+O(v),\\
L_r(v,\theta)&=O(v),\quad r\ge1.
\end{align}
It follows that the summand of~\eqref{eq:mainrelserHXz} with $j=0$, disregarding the $\delta_{m+n,0}$-part, is equal to
\begin{equation}
-\frac{X}{dX}\frac{dY}{Y} Y\partial_Y H^{(g)}_{m,n+1}\frac{dY}{Y}=-X\partial_X H^{(g)}_{m,n+1}.
\end{equation}
All the summands with $j>0$ as we as the $\delta_{m+n,0}$-contribution involve only those functions $H_{m'n'}^{(g')}$ satisfying $2g'-2+m'+n'<2g-2+m+n+1$, so that we may assume that they are already computed in the previous steps of computations. It follows that the formula can be rewritten in the form
\begin{align}\label{eq:XdXHsum}
&X\partial_X H^{(g)}_{m+1,n}+X\partial_X H^{(g)}_{m,n+1}=\\ \nonumber
&\phantom{=}-[\hbar^{2g}]\sum_{j=1}^\infty (X\partial_X)^j \Bigg(\frac{X}{dX}\frac{dY}{Y}\sum_{r=0}^\infty [v^j]L_r(v,\Theta)[u^r]e^{-u\,\Theta}\cW^Y_{m,n+1}(u)\\ \nonumber
&\phantom{==}-\delta_{m+n,0}[v^{j+1}]L_0(v,\Theta)X\partial_X\Theta\Bigg)+\delta_{m+n,0}[\hbar^{2g}v^1]L_0(v,\Theta)X\partial_X\Theta.
\end{align}
Let us study the last term in the above expression:
\begin{align}\label{eq:j0term}
&[\hbar^{2g}v^1]L_0(v,\Theta)X\partial_X\Theta = [\hbar^{2g}v^1]\left.\left(e^{v\left(\frac{\cS(v\hbar\partial_\theta)}{\cS(\hbar\partial_\theta)}\hat\psi(\theta)-\psi(\theta)\right)}\right)\right|_{\theta=\Theta}X\partial_X\Theta \\ \nonumber
&\phantom{=}=[\hbar^{2g}]\left.\left(\dfrac{1}{\cS(\hbar\partial_\theta)}\hat\psi(\theta)-\psi(\theta)\right)\right|_{\theta=\Theta}X\partial_X\Theta.
\end{align}

Integrating~\eqref{eq:XdXHsum} with~\eqref{eq:j0term} substituted, we obtain our following main relation, which considerably simplifies the inductive computation of the $(m,n)$-point functions:
\begin{theorem}\label{th:mainrecursion}
For any triple $(g,m,n)$ with $2g-2+m+n\ge0$ we have
\begin{align}\label{eq:mairecX}
&H^{(g)}_{m+1,n}+H^{(g)}_{m,n+1}=\\ \nonumber
&\phantom{=}-[\hbar^{2g}]\sum_{j=1}^\infty (X\partial_X)^{j-1} \Bigg(\frac{X}{dX}\frac{dY}{Y}\sum_{r=0}^\infty [v^j]L_r(v,\Theta)[u^r]e^{-u\,\Theta}\cW^Y_{m,n+1}(u)\\ \nonumber
&\phantom{=}-\delta_{m+n,0}[v^{j+1}]L_0(v,\Theta)X\partial_X\Theta\Bigg)\\ \nonumber
&\phantom{=}+\delta_{m+n,0}\int[\hbar^{2g}]\left.\left(\dfrac{1}{\cS(\hbar\partial_\theta)}\hat\psi(\theta)-\psi(\theta)\right)d\theta\;\right|_{\theta=\Theta}+{\rm const},
\end{align}
where $X=X_{m+1}$, $Y=Y_{m+1}$, and ${\rm const}$ is certain function in $X_M,Y_N$ and independent of $X$.
\end{theorem}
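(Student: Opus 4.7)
The plan is to derive \eqref{eq:mairecX} from the already-proved recursion \eqref{eq:mainrelserHXz} of Theorem~\ref{th:formalHz} by isolating the $j=0$ summand on its right-hand side, identifying it with $-X\partial_X H^{(g)}_{m,n+1}$ modulo an explicit $\delta_{m+n,0}$ correction, transposing to the left, and antidifferentiating once with respect to $X\partial_X$.

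The first and central observation is that, by \eqref{eq:Lr}, one has $L_0(v,\theta)=1+O(v)$ and $L_r(v,\theta)=O(v)$ for $r\ge1$, so $[v^0]L_r(v,\theta)=\delta_{r,0}$. Hence the $j=0$ summand of \eqref{eq:mainrelserHXz} collapses to
\[
-[\hbar^{2g}]\Bigl(\tfrac{X}{dX}\tfrac{dY}{Y}\,[u^0]e^{-u\Theta}\cW^Y_{m,n+1}(u)-\delta_{m+n,0}[v^1]L_0(v,\Theta)X\partial_X\Theta\Bigr).
\]
I would then unpack $[u^0]e^{-u\Theta}\cW^Y_{m,n+1}(u)$ from the exponential presentation \eqref{eq:cWexp}, using the leading-order identity $[\hbar^0]\cT^Y_{0,1}(u;Y)=u\,\Theta$ supplied by Proposition~\ref{prop:unst} (equation~\eqref{eq:H01}): this ensures that the $u^0$-coefficient is governed by the one-block partition, whose contribution is $[u^1]\cT^Y_{m,n+1}=Y\partial_Y H_{m,n+1}$. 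The chain-rule identity $\tfrac{X}{dX}\tfrac{dY}{Y}\cdot Y\partial_Y=X\partial_X$, valid along the spectral curve where $X$ and $Y$ are related by $XY\phi(\Theta)=1$, then produces $-X\partial_X H^{(g)}_{m,n+1}$ on the right; in the $m+n=0$ regime, the additional $\Theta$-contribution originating from the singular terms in~\eqref{eq:cT} and from the $1/u$ pole of $\cW^Y_{0,1}$ combines with the $-\delta_{m+n,0}[v^1]L_0(v,\Theta)X\partial_X\Theta$ correction to leave exactly the term $+\delta_{m+n,0}[\hbar^{2g}v^1]L_0(v,\Theta)X\partial_X\Theta$ visible on the last line of \eqref{eq:XdXHsum}.

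With \eqref{eq:XdXHsum} in hand, the surviving $\delta_{m+n,0}$ piece is evaluated via \eqref{eq:Lr} once more: specialising to $v=0$ gives $[v^1]L_0(v,\theta)=\tfrac{1}{\cS(\hbar\partial_\theta)}\hat\psi(\theta)-\psi(\theta)$, which is precisely the integrand appearing in \eqref{eq:mairecX}. The last step is to integrate \eqref{eq:XdXHsum} against $\tfrac{dX}{X}$. On the left, $X\partial_X(H^{(g)}_{m+1,n}+H^{(g)}_{m,n+1})$ antidifferentiates to $H^{(g)}_{m+1,n}+H^{(g)}_{m,n+1}$ up to a function of $X_M,Y_N$ that is independent of $X$, which becomes the ``const'' on the right-hand side of~\eqref{eq:mairecX}. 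On the right, the operator $(X\partial_X)^j$ drops to $(X\partial_X)^{j-1}$ for $j\ge1$; and in the $\delta_{m+n,0}$ summand the integrand depends on $X$ only through $\Theta=\Theta(z)$, so $X\partial_X\Theta\cdot\tfrac{dX}{X}=d\Theta$ and the integral becomes $\int[\hbar^{2g}](\tfrac{1}{\cS(\hbar\partial_\theta)}\hat\psi(\theta)-\psi(\theta))\,d\theta\bigm|_{\theta=\Theta}$, matching \eqref{eq:mairecX}.

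The main technical obstacle I anticipate is the careful treatment of the unstable contributions to $[u^0]e^{-u\Theta}\cW^Y_{m,n+1}$ in the boundary regimes $m+n\le 1$: the singular $s$- and diagonal-corrections in \eqref{eq:cT}, together with the $1/u$ pole of $\cW^Y_{0,1}$, generate extra $\Theta$-type terms at order $u^0$ which must be shown to cancel \emph{exactly} against the $\delta_{m+n,0}$ correction already present in \eqref{eq:mainrelserHXz}; getting the signs and multiplicities right is what converts the raw $j=0$ identity into the clean pre-integrated form \eqref{eq:XdXHsum}. A secondary bookkeeping point is to confirm the statement on the full range $2g-2+m+n\ge0$, which requires checking the boundary case $2g-2+m+n=0$ that lies on the edge of applicability of Theorem~\ref{th:formalHz}.
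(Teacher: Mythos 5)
Your proposal is correct and follows essentially the same route as the paper: starting from \eqref{eq:mainrelserHXz}, extracting the $j=0$ summand via $[v^0]L_r=\delta_{r,0}$ and $[u^0]e^{-u\Theta}\cW^Y_{m,n+1}=Y\partial_Y H^{(g)}_{m,n+1}$ to obtain \eqref{eq:XdXHsum}, evaluating $[v^1]L_0(v,\Theta)$ as $\bigl(\tfrac{1}{\cS(\hbar\partial_\theta)}\hat\psi-\psi\bigr)\big|_{\theta=\Theta}$, and integrating once in $X\partial_X$. Your extra care with the unstable corrections and the boundary case $2g-2+m+n=0$ (which the paper handles implicitly through Theorem~\ref{th:formalWz}) is consistent with, and slightly more explicit than, the paper's own argument.
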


Remark that there is a formally different expression for the right hand side due to the symmetry between $X$--$Y$ variables,
\begin{align}\label{eq:mairecY}
&H^{(g)}_{m+1,n}+H^{(g)}_{m,n+1}=\\ \nonumber
&\phantom{=}-[\hbar^{2g}]\sum_{j=1}^\infty (Y\partial_Y)^{j-1} \Bigg(\frac{Y}{dY}\frac{dX}{X}\sum_{r=0}^\infty [v^j]L_r(v,\Theta)[u^r]e^{-u\,\Theta}\cW^X_{m+1,n}(u)\\ \nonumber
&\phantom{=}-\delta_{m+n,0}[v^{j+1}]L_0(v,\Theta)Y\partial_Y\Theta\Bigg)\\ \nonumber
&\phantom{=}+\delta_{m+n,0}\int[\hbar^{2g}]\left.\left(\dfrac{1}{\cS(\hbar\partial_\theta)}\hat\psi(\theta)-\psi(\theta)\right)d\theta\;\right|_{\theta=\Theta}+{\rm const}.
\end{align}

Let us look more closely on the right hand side of~\eqref{eq:mairecX} at the coordinate $z=z_{m+1}$ on the spectral curve. We observe that $H^{(g)}_{m+1,n}$ involves positive powers of the variable~$z$ only, and $H^{(g)}_{m,n+1}$ involves negative powers of~$z$ (and the constant term in~$z$ vanishes: exactly this requirement provides the choice of the constant $\rm const$ on the right hand side). We conclude that~\eqref{eq:mairecX} can be regarded as a recursion relation: \emph{it allows one not only to relate  $H^{(g)}_{m+1,n}$ and  $H^{(g)}_{m,n+1}$ but also to compute both of them!} Namely, we compute
$H^{(g)}_{m+1,n}$ and  $H^{(g)}_{m,n+1}$ as the regular and the polar parts, respectively, of the right hand side in~\eqref{eq:mairecX} with respect to $z$ coordinate.

\begin{remark}
Relation of Theorem~\ref{th:mainrecursion} treats both its sides as formal power series in $(t,s)$ variables and also $X_1\dots,X_m,Y_{m+2},\dots,Y_{m+n+1}$ whose coefficients are Laurent polynomials in~$z$. In fact, one can see by induction that there is a stronger rationality assertion: \emph{$H^{(g)}_{m,n}$ can be represented in $z$-coordinates as a power series in $(t,s)$ variables whose coefficients are rational functions in $z_1,\dots,z_{m+n}$ with only possible poles at $z_i=\infty$, $z_j=0$, and on the diagonals $z_i=z_j$ for $i\in\{1,\dots,m\}$, $j\in\{m+1,\dots,m+n\}$. In other words, it can be represented as a ratio of a polynomial in $z_1,\dots,z_m,z_{m+1}^{-1},\dots,z_{m+n}^{-1}$ and a product of factors of the from $1-z_iz_j^{-1}$. The degree of the denominator is uniformly bounded for each particular $(g,m,n)$ and the degree of the numerator grows with the growth of the degree of $(t,s)$-monomial.} We use exactly this form of the functions $H^{(g)}_{m,n}$ in our numerical computer experiments.
\end{remark}

\section{Rationality of \texorpdfstring{$(m,n)$}{(m,n)}-point functions and loop equations}\label{sec:rat}

Up to this moment, in Sect.~\ref{sec:formalprel} and~\ref{sec:formalfinal}, we regarded all $n$-point functions as formal power series and no restrictions on the initial data $(t,s,\psi)$ of the problem have been assumed. We now impose certain natural analytic assumptions on the initial data ensuring rationality of all $n$-point functions on the spectral curve. The rationality is crucial for the study of topological recursion and loop equations. Without this property the very discussion of topological recursion is senseless: it analyses the behavior of the analytic extension of the functions to the points different from the point of the expansion of these functions regarded as generating series.

\subsection{Rational spectral curve}\label{sec:spectral}

We say that the formal spectral curve introduced in Sect.~\ref{sec:spectralcurveformal} is algebraic if the forms $\frac{dX}{X}$ and $\frac{dY}{Y}$ extend as global meromorphic forms on~$\Sigma$ and $\Theta(z)$ is a Laurent polynomial. Moreover, we wish that its dependence in $(t,s)$ parameters is also algebraic. It means that the coefficients of the rational forms $\frac{dX}{X}$ and $\frac{dY}{Y}$ and the Laurent polynomial $\Theta$ are defined not only as formal power series but also as true algebraic functions and their specializations at arbitrary complex numbers with sufficiently small absolute values are well defined. For the convenience of the reader we provide the definition of the spectral curve in the algebraic case which is independent of the formal case of Sect.~\ref{sec:spectralcurveformal}.

Let $t=(t_1,\dots,t_d,0,0,\dots)$, $s=(s_1,\dots,s_e,0,0,\dots)$ be two sets of complex parameters such that only a finite number of them are nonzero. Let $\phi(\theta)=e^{\psi(\theta)}=1+\sum_{k=1}^\infty c_k\theta^k$ be a power series satisfying the property that $\psi'(\theta)=\phi'(\theta)/\phi(\theta)$ is rational. A typical example is when either $\phi$ or $\psi$ is a polynomial (or a rational function).

\begin{definition}\label{def:spectralcurve}
The \emph{spectral curve associated with the data $(t,s,\phi)$} is $\Sigma=\C P^1$ with global affine coordinate~$z$ and three functions $X,Y,\Theta$ on (some open domains of) $\Sigma$ satisfying the following relations.
\begin{itemize}
\item We have
\begin{equation}\label{eq:XYTheta}
X\,Y\,\phi(\Theta)=1.
\end{equation}
\item $X$ is defined and holomorphic in a neighborhood of the disk $|z|\le 1$, has a simple zero at $z=0$ and no other zeroes in that disk. In other words, $X$ forms a global holomorphic coordinate on the disk $|z|\le1$. Similarly, $Y$ is defined and holomorphic in a neighborhood of the disk $|z|\ge 1$, has a simple zero at $z=\infty$ and no other zeroes in that disk. In other words, $Y$ forms a global holomorphic coordinate on the disk $|z|\ge1$.
\item The $1$-forms $\frac{dX}{X}$ and $\frac{dY}{Y}$ extend as global rational $1$-forms on the whole spectral curve.
\item $\Theta(z)$ is a Laurent polynomial. Moreover, its Laurent expansions at $z=0$ and $z=\infty$ in the corresponding local coordinates are given by
\begin{align}\label{eq:ThetaX}
\Theta(z)&=\sum_{k=1}^d t_kX^{-k}+O(z),\quad z\to 0,\\
\label{eq:ThetaY}
\Theta(z)&=\sum_{k=1}^e s_kY^{-k}+O(z^{-1}),\quad z\to \infty.
\end{align}
\end{itemize}
\end{definition}

\begin{proposition}\label{prop:speccurv}
For a given $\phi$, if the absolute values of $(t,s)$-parameters are small enough then the requirements on the spectral curve define it uniquely up to a multiplication of the coordinate $z$ by a nonzero constant.
\end{proposition}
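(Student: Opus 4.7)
The plan is to deduce the proposition from the formal existence and uniqueness already established in Proposition~\ref{prop:formalsc}, combined with an application of the analytic inverse function theorem in a finite-dimensional parameter space. At $(t,s)=(0,0)$, the triple $X(z)=z$, $Y(z)=z^{-1}$, $\Theta(z)=0$ manifestly satisfies Definition~\ref{def:spectralcurve}: indeed $XY\phi(0)=1$, the singular sums in~\eqref{eq:ThetaX}--\eqref{eq:ThetaY} are empty, and the remaining holomorphy conditions are trivial. This triple is the base point around which the perturbation is carried out.

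The first main step is to observe that in the analytic setting the problem is genuinely finite-dimensional. Because $t_k=0$ for $k>d$ and $s_k=0$ for $k>e$, the polar orders of the Laurent polynomial $\Theta$ at $z=0$ and $z=\infty$ are bounded by $d$ and $e$, so $\Theta$ varies in the $(d+e+1)$-dimensional space of Laurent polynomials $\sum_{k=-d}^{e}\alpha_k z^k$. The rationality of $\psi'=\phi'/\phi$ ensures that $\psi'(\Theta)\,d\Theta$ is a rational $1$-form on $\Sigma$, so the logarithmic differentiation of~\eqref{eq:XYTheta}, giving $\tfrac{dX}{X}+\tfrac{dY}{Y}=-\psi'(\Theta)\,d\Theta$, together with the prescribed zero/pole structure of $X$ and $Y$ at $z=0,\infty$, determines $dX/X$ and $dY/Y$ from $\Theta$ by splitting into partial-fraction pieces at $z=0$ and $z=\infty$; the overall multiplicative constants in $X$ and $Y$ are then fixed by~\eqref{eq:XYTheta} itself together with a gauge choice such as $X(z)=z+O(z^2)$.

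The remaining requirements~\eqref{eq:ThetaX}--\eqref{eq:ThetaY} then translate into a finite system of analytic equations expressing $(t_1,\dots,t_d,s_1,\dots,s_e)$ as analytic functions of the coefficients $(\alpha_k)_{k\ne 0}$. By the linearization carried out in the proof of Proposition~\ref{prop:formalsc}, the differential of this map at the origin reduces to the identity (in suitable coordinates), and the identity $t_0=s_0$ proved there ensures consistency of the extra equation pinning down the remaining coordinate $\alpha_0$, which corresponds precisely to the $z\mapsto cz$ gauge freedom. Applying the analytic inverse function theorem then yields, for all $(t,s)$ of sufficiently small modulus, a unique analytic solution modulo this one-parameter gauge; uniqueness of Taylor coefficients of analytic functions guarantees that this solution coincides with the formal one already produced in Proposition~\ref{prop:formalsc}.

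The main subtlety I anticipate is verifying that the splitting of the rational $1$-form $\psi'(\Theta)\,d\Theta$ into pieces with poles only at $z=0$ and only at $z=\infty$ depends analytically on the coefficients of $\Theta$, uniformly near the base point, and that the integrality of its residues — required for $X$ and $Y$ to be honest rational functions rather than merely multivalued exponentials of integrals — is preserved under small perturbations. Both properties are immediate at the base point $\Theta=0$, where $\psi'(\Theta)\,d\Theta$ vanishes, and persist by continuity for small $(t,s)$; this continuity argument, together with the gauge bookkeeping, is the only analytically non-routine ingredient of the argument.
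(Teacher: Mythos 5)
Your overall strategy coincides with the paper's: parametrize the problem by the $d+e+1$ coefficients $\alpha_k$ of the Laurent polynomial $\Theta$, reconstruct $\tfrac{dX}{X}$ and $\tfrac{dY}{Y}$ from the identity $\tfrac{dX}{X}+\tfrac{dY}{Y}=-d\log\phi(\Theta)$ by distributing the poles of the right-hand side between the two summands (in the paper: according to whether they lie inside or outside the unit circle, via the contour-integral formulas \eqref{eq:Xpoles}--\eqref{eq:Ypoles}), read off $t_k(\alpha)$, $s_k(\alpha)$ and $t_0(\alpha)$, use $t_0=s_0$ to see that the system is not overdetermined, and invert by the analytic implicit function theorem near $\alpha=0$. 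Two points, however, need repair. First, your identification of $\alpha_0$ (equivalently, of the extra equation $t_0=0$) with the rescaling freedom $z\mapsto cz$ is incorrect: the rescaling acts on the coefficients by $\alpha_k\mapsto c^k\alpha_k$ and therefore \emph{fixes} $\alpha_0$, while the condition $t_0=0$ is a genuine constraint imposed by the required expansion \eqref{eq:ThetaX} (no constant term), not a gauge-fixing. The rescaling ambiguity is instead absorbed by normalizing the multiplicative integration constant in $X$ (e.g.\ by \eqref{eq:Xzcoef}); once that is done you have $d+e+1$ honest equations in $d+e+1$ unknowns with identity linearization, and the ``up to rescaling'' in the statement records only the arbitrariness of that normalization. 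Your equation count still comes out right, but as written the bookkeeping conflates two different one-parameter ambiguities.

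Second, the worry about integrality of residues is both unnecessary and incorrectly resolved. Definition~\ref{def:spectralcurve} does not require $X$ and $Y$ to be rational: it only requires $\tfrac{dX}{X}$ and $\tfrac{dY}{Y}$ to extend as rational $1$-forms, and $X$, $Y$ to be single-valued holomorphic coordinates on their respective disks, where the only pole of the corresponding logarithmic form is the one at $z=0$ (resp.\ $z=\infty$) with residue $\pm1$; the paper explicitly remarks that outside these disks $X$ and $Y$ may acquire logarithmic singularities. Moreover, ``persistence by continuity'' would not give integrality even if it were needed: as soon as $(t,s)\ne 0$ the form $d\log\phi(\Theta)$ acquires poles at the preimages under $\Theta$ of the poles of $\psi'$, with residues equal to the fixed, generally non-integral residues of $\psi'$ there --- these appear discontinuously rather than as small perturbations of zero. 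Dropping this requirement altogether, as the paper does, is the correct fix; with that and the gauge bookkeeping corrected, your argument is the paper's.
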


A choice for a rescaling of~$z$ is not important, since it does not change the analytic dependence between $X,Y$, and $\Theta$ functions. However, it can be fixed, if needed, by an additional relation
\begin{equation}\label{eq:Xzcoef}
X(z)=z+O(z^2).
\end{equation}

\begin{proof}
We see from~\eqref{eq:ThetaX} and~\eqref{eq:ThetaY} that $\Theta$ has a pole of order~$d$ at $z=0$ and a pole of order~$e$ at $z=\infty$, i.e. it has the form
\begin{equation}\label{eq:thetaform}
\Theta(z)=\sum_{-d\le k\le e} \alpha_k z^k.
\end{equation}
We will show that the coefficients of $\Theta(z)$ obey certain polynomial equation allowing one to express $\alpha_0$ as a function in the remaining $\alpha$-parameters. So we can take the coefficients $\alpha_{-d},\dots,\alpha_{-1},\alpha_1,\dots,\alpha_e$ as an independent set of parameters of the problem. We express $t_i$, $s_j$ as (algebraic) functions in these parameters. Then, we apply the inverse function theorem to express $\alpha$-coordinates as functions in $t$ and $s$ parameters.

Having in mind the necessity to apply the inverse function theorem, we assume that the coefficients $\alpha_k$ are small enough. An explicit estimate on the absolute values of these coefficients will be clear from the arguments below.
Taking the logarithmic derivatives of the two sides of~\eqref{eq:XYTheta} we obtain the following equality of meromorphic $1$-forms on the spectral curve,
\begin{equation}\label{eq:dlogxdlogY}
\frac{dX}{X}+\frac{dY}{Y}=-\frac{d\phi(\Theta)}{\phi(\Theta)}.
\end{equation}
The two summands on the left hand side can be recovered as the contributions of the poles of the right hand side outside the unit circle and inside it, respectively. So, we \emph{define}
\begin{align}\label{eq:Xpoles}
\frac{dX(z)}{X(z)}&=\frac{dz}{z}-\sum_{a:\phi(\Theta(a))=0,|a|>1}\res_{\tilde z=a}\frac{d\phi(\Theta(\tilde z))}{\phi(\Theta(\tilde z))}\frac{dz}{\tilde z-z},\\ \label{eq:Ypoles}
\frac{dY(z)}{Y(z)}&=-\frac{dz}{z}-\sum_{a:\phi(\Theta(a))=0,|a|<1}\res_{\tilde z=a}\frac{d\phi(\Theta(\tilde z))}{\phi(\Theta(\tilde z))}\frac{dz}{\tilde z-z}.
\end{align}
Equivalently, these relations can be written as follows
\begin{equation}
-\frac{1}{2\pi i}\int_{|\tilde z|=1}\frac{d\phi(\Theta(\tilde z))}{\phi(\Theta(\tilde z))}\frac{dz}{\tilde z-z}=
\begin{cases}
\frac{dz}{z}-\frac{dX(z)}{X(z)},&|z|<1,\\
\frac{dz}{z}+\frac{dY(z)}{Y(z)},&|z|>1,
\end{cases}
\end{equation}
where the integration contour $|z|=1$ is oriented counterclockwise. The forms $\frac{dX}{X}$ and $\frac{dY}{Y}$ determine the functions $X$ and $Y$ themselves uniquely up to multiplicative constants,
\begin{equation}
X=e^{\int\frac{dX}{X}},\quad Y=e^{\int\frac{dY}{Y}}.
\end{equation}
The integration constants can be fixed by~\eqref{eq:XYTheta} and, for example,~\eqref{eq:Xzcoef}.

Next, we observe that $X$ is a local coordinate at $z=0$ and $Y$ is a local at $z=\infty$. Expanding $\Theta$ in these coordinates, we get
\begin{align}\label{eq:Thetat0}
\Theta(z)&=\sum_{k=1}^d \frac{t_k}{ X^{k}} +t_0+ O(z),\quad z\to 0,\\
\label{eq:Thetas0}
\Theta(z)&=\sum_{k=1}^e \frac{s_k}{ Y^{k}} +s_0+ O(z^{-1}),\quad z\to \infty.
\end{align}
The coefficients $t_k$, $s_k$ of these expansions are expressed as functions in $\alpha$-parameters. More explicitly, we have
\begin{equation}\label{eq:tsequatins}
t_k=\res_{z=0}\Theta(z)X(z)^{k-1}d X(z),\qquad s_k=\res_{z=\infty}\Theta(z)Y(z)^{k-1}d Y(z)
\end{equation}
Expansions~\eqref{eq:Thetat0},~\eqref{eq:Thetas0} differ from~\eqref{eq:ThetaX},~\eqref{eq:ThetaY} by the presence of the constant terms $t_0$, $s_0$. Vanishing of $t_0$ and $s_0$ provides algebraic relations between parameters $\alpha_k$. In fact, these two equations are equivalent to one another due to the following identity:
\begin{equation}
\begin{aligned}
t_0-s_0&=\res_{z=0}\Theta(z)\frac{dX(z)}{X(z)}-\res_{z=\infty}\Theta(z)\frac{dY(z)}{Y(z)}\\
&=\frac1{2\pi i}\int_{|z|=1}\Theta(z)\frac{dX(z)}{X(z)}+\frac1{2\pi i}\int_{|z|=1}\Theta(z)\frac{dY(z)}{Y(z)}\\
&=-\frac1{2\pi i}\int_{|z|=1}\Theta(z)\frac{d\phi(\Theta(z))}{\phi(\Theta(z))}\\
&=-\frac1{2\pi i}\int_{\Gamma}\theta\frac{d\phi(\theta)}{\phi(\theta)}=0,
\end{aligned}
\end{equation}
where the integration contour $\Gamma$ is the image of the unit circle $|z|=1$ under the map~$\Theta$. Since the coefficients of the Laurent polynomial $\Theta(z)$ are small, we may assume that the contour $\Gamma$ belongs a small disk centered at the origin in the $\theta$-plane. Moreover, since $\phi(0)=1$, we may assume that this disk is small enough such that $\theta\frac{d\phi(\theta)}{\phi(\theta)}$ is holomorphic inside the disk and hence its integral along any closed contour in the disk vanishes.

Thus, regarding~\eqref{eq:tsequatins} as implicit algebraic equations on the $\alpha$-parameters along with the equation $t_0=0$ (or an equivalent one $s_0=0$) we express by implicit function theorem $\alpha$-parameters as holomorphic functions in $(t,s)$-parameters. This resolves the equation of spectral curve.
\end{proof}

\begin{remark}
It follows from the above arguments that if $(t,s)$ parameters tend to zero, all the poles of $\frac{dX}{X}$ except that at $z=0$ converge to the point $z=\infty$, and all poles of $\frac{dY}{Y}$ except that at $z=\infty$ converge to the point $z=0$.
\end{remark}

\begin{remark}
If $\phi$ is rational then $X$ and $Y$ are also rational functions. Namely, if we represent $\phi(\Theta(z))$ as the product of linear factors of the form $(z-a_i)^{\pm1}$ multiplied by a monomial in $z$, then $X$ and $Y$ in the product $X\,Y=1/\phi(\Theta)$ absorb those factors with $|a_i|>1$ and $|a_i|<1$, respectively. In the general case, however, $\frac{dX}{X}$ and $\frac{dY}{Y}$ might have nonzero residues and the holomorphic extension of $X$ and $Y$ functions may meet logarithmic singularities.
\end{remark}

\subsection{Rationality of \texorpdfstring{$(m,n)$}{(m,n)}-point functions}\label{sec:separ}
Let the data $(t,s,\psi)$ of the spectral curve satisfy the analytic properties of the previous section, namely, $t=(t_1,\dots,t_d,0,0,\dots)$, $s=(s_1,\dots,s_e,0,0,\dots)$, and $\psi'(\theta)$ is rational. Assume that an $\hbar^2$-deformation $\hat\psi$ of $\psi$ is chosen such that the coefficient of any positive power of $\hbar$ in $\hat\psi$ is a derivative of a rational function. This implies, in particular, that the last summands of~\eqref{eq:mairecX} and~\eqref{eq:mairecY} are rational. We refer everywhere below to the assumptions made as the \emph{natural analytic assumptions} on the data $(t,s,\hat\psi)$ of the problem.



Let us treat $X_i=X(z_i)$ and $Y_i=Y(z_i)$ as the local coordinates at the corresponding points $z_i=0$ or $z_i=\infty$, respectively, on the $i$th copy of the spectral curve.
In that way we regard  $H_{m,n}^{(g)}$ as a function on $\Sigma^{m+n}$ expanded as a formal power series at the point $z_1=\dots=z_m=0$, $z_{m+1}=\dots=z_{m+n}=\infty$. The following theorem describes the properties of analytic extension of $H_{m,n}^{(g)}$ to the spectral curve.

\begin{theorem}\label{th:rat}
Assume that the parameters $t_i,s_j$ are chosen small enough. Then for any triple $(g,m,n)$ satisfying $2g-2+m+n>0$ the function $H_{m,n}^{(g)}$ extends as a global rational function to $\Sigma^{m+n}$ and Equations~\eqref{eq:mairecX} and~\eqref{eq:mairecY} hold true as equalities of rational functions.

Moreover, $H^{(g)}_{m,n}$ is holomorphic in the multidisk $|z_i|<1$, $i=1,\dots,m$, $|z_j|>1$, $j=m+1,\dots,m+n$. In fact, if we regard $H_{m,n}^{(g)}$ as a rational function in $z_i$, $i=1,\dots,m$, then it might have poles on the diagonals $z_i=z_j$, $j=m+1,\dots,m+n$ (this does not contradict to the above assertion). All the other poles converge to $\infty$ as the parameters $t_k,s_k$ tend to zero. Similarly, if we regard $H_{m,n}^{(g)}$ as a rational function in $z_j$, $j=m+1,\dots,m+n$, then it might have poles on the diagonals $z_j=z_i$, $i=1,\dots,m$, and all the other poles converge to zero as the parameters $t_k,s_k$ tend to zero.

In the unstable cases $2g-2+m+n\le0$ the correlator functions are determined explicitly by~\eqref{eq:H01}--\eqref{eq:H02}.
\end{theorem}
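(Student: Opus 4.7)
The plan is to prove the theorem by induction on the Euler characteristic $\chi = 2g-2+m+n$, using the recursion of Theorem~\ref{th:mainrecursion} as the engine. The base of the induction ($\chi=0$) is supplied by Proposition~\ref{prop:unst}, which expresses the unstable correlators in closed form as logarithms of rational expressions in the spectral-curve coordinates, so their derivatives (and hence the $\omega$'s entering $\cW^Y_{m,n+1}$ through \eqref{eq:cT}) are manifestly rational on $\Sigma^2$. The inductive step asserts that $H^{(g)}_{m+1,n}$ and $H^{(g)}_{m,n+1}$ are rational on $\Sigma^{m+n+1}$ whenever all lower-$\chi$ correlators are rational on their respective products of spectral curves.

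The core of the inductive step is the observation that the right-hand side of \eqref{eq:mairecX} is a rational function of $z = z_{m+1}$, once substituted through $X = X(z)$, $Y = Y(z)$. Indeed, every ingredient is rational: the spectral-curve triple $(X, Y, \Theta)$ and the forms $dX/X$, $dY/Y$ are rational by Definition~\ref{def:spectralcurve}, the polynomials $L_r(v,\Theta)$ defined in \eqref{eq:Lr} are rational in $\Theta$ with coefficients obtained from $\hat\psi$ (rational by the analytic assumption), and the last integral term $\int[\hbar^{2g}](\cS(\hbar\partial_\theta)^{-1}\hat\psi - \psi)\,d\theta\big|_{\theta=\Theta}$ is rational by the same assumption. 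Most importantly, after transferring the $j=0$ contribution to the left-hand side (as explained before Theorem~\ref{th:mainrecursion}), the generating series $\cW^Y_{m,n+1}(u)$ on the right involves only functions $H^{(g')}_{m',n'}$ with $\chi' < \chi$; these are rational by the induction hypothesis. Thus the RHS is a rational function in all $m+n+1$ variables.

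To extract rationality of each summand on the left, I would exploit the mismatched analytic behavior of the two terms in $z$. By construction $H^{(g)}_{m+1,n}(X_M; X(z); Y_N)$ is a power series in $X(z) = \gamma_1 z + O(z^2)$ with no constant term and hence vanishes to positive order at $z=0$, while $H^{(g)}_{m,n+1}(X_M; Y(z); Y_N)$ is a power series in $Y(z) = \gamma_2 z^{-1} + O(z^{-2})$ and hence vanishes at $z=\infty$. Partial-fraction decomposition of the rational RHS in $z$ yields a principal part at $0$, a principal part at $\infty$, and principal parts at a finite collection of other poles; the two correlators on the left must then match, uniquely, the pieces that are holomorphic at $z=0$ and at $z=\infty$ respectively. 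Since each piece of a partial fraction is itself rational, this produces rational extensions of both correlators on $\Sigma^{m+n+1}$, and \eqref{eq:mairecX} holds as an equality of rational functions. The symmetric relation \eqref{eq:mairecY} provides the analogous statement from the $Y$-side.

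Finally, for the holomorphicity in the multidisk and the location of poles, I would track the partial-fraction decomposition more carefully. The singular corrections in \eqref{eq:cT} produce simple poles of $\cW^Y_{m,n+1}$ on the diagonals $Y = Y_j$ for $j \in N$ but not on $X = X_i$ for $i \in M$, and the dual statement holds for $\cW^X_{m+1,n}$; combined with the induction hypothesis this implies that the only finite diagonal poles inherited by $H^{(g)}_{m+1,n}$ lie on $z_{m+1} = z_j$ with $j > m+1$, and dually for $H^{(g)}_{m,n+1}$. The remaining poles of the RHS are supported at the critical points of $X$ and $Y$, which, by the remark at the end of Section~\ref{sec:spectral}, migrate respectively to $z=\infty$ and $z=0$ as $(t,s)\to 0$; this yields the claimed holomorphicity on the multidisk and the localization of stray poles. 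The main obstacle I expect is precisely this bookkeeping of poles: one must verify that the partial-fraction split of the rational RHS is consistent with the pre-assigned analytic location of each correlator — i.e., that no spurious pole of the RHS is forced onto the wrong summand — and that the diagonal poles produced by the singular corrections match, with the correct residues, the diagonal poles of the correlators predicted by their combinatorial origin.
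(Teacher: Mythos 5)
Your overall strategy --- induction on $2g-2+m+n$, rationality of the finitely many terms on the right-hand side of \eqref{eq:mairecX} granted the induction hypothesis and the analytic assumptions on $(t,s,\hat\psi)$, and then a splitting of the resulting rational function $F(z_{m+1})$ into the two summands on the left --- is exactly the paper's. The gap is in the splitting step, and you have in fact put your finger on it in your closing paragraph without resolving it. The criterion you propose, namely that $H^{(g)}_{m+1,n}$ and $H^{(g)}_{m,n+1}$ are ``the pieces holomorphic at $z=0$ and at $z=\infty$ respectively,'' does not determine the decomposition: a term such as $z/(z-a)^2$ with $a\neq 0,\infty$ is holomorphic and vanishing at both $0$ and $\infty$, so it can be moved freely between the two summands. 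One must decide, for every finite pole $a$ of $F$, which summand absorbs it, and then prove that this assignment actually reproduces the two correlators.

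The paper closes this in two moves. First, it splits $F=F^++F^-+c$ along the unit circle by a Cauchy integral, so that $F^+$ is holomorphic in $|z|\le 1$ (absorbing the poles with $|a|>1$) and $F^-$ is holomorphic in $|z|\ge 1$ (absorbing those with $|a|<1$); with the normalization $F^+(0)=F^-(\infty)=0$ this decomposition is unique. Second --- and this is the missing idea --- it identifies $F^+=H^{(g)}_{m+1,n}$ and $F^-=H^{(g)}_{m,n+1}$ by passing to the formal ring $R$: the contour-integral representation shows that $F^{\pm}$ depend regularly on $(t,s)$ near zero, hence their expansions in $R$ contain only positive (resp.\ negative) powers of $z$; since the correlators are a priori elements of $R$ with only positive (resp.\ negative) powers of $z$, and the decomposition of an element of $R$ into positive and negative parts is unique, the identification follows. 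This same mechanism delivers the ``Moreover'' clause (holomorphicity in the multidisk and migration of the stray poles to $\infty$, resp.\ $0$, as $(t,s)\to 0$), which you treat as separate bookkeeping but which is really what makes the splitting well defined and correctly attributed in the first place.
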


\begin{proof}

We argue by induction in $g$ and $m+n$. Consider Eq.~\eqref{eq:mairecX} as an equality in the ring~$R$. By induction hypothesis, every term on the right hand side is rational in~$z$ coordinates. Since the right hand side contains finitely many terms, we conclude that the whole right hand side is rational as a function in $z_1,\dots,z_{m+n+1}$.

Let us look  more closely at the dependence of all the terms on the right hand side of~\eqref{eq:mairecX} in each particular variable $z_i$. Every term is holomorphic in~$z_i$ in the domain $|z_i|\le1$ for $i\in M$, and holomorphic in the domain $|z_i|\ge1$ for $i\in N$. Hence, the same holds true for the whole right hand side. The dependence in $z_{m+1}$ is more complicated. The terms might have poles both for $|z_{m+1}|<1$ and for $|z_{m+1}|>1$.

Let us denote by $F(z)$ the right hand side of~\eqref{eq:mairecX} regarded as a rational function in $z=z_{m+1}$ and represent it as $F(z)=F^+(z)+F^-(z)+c$ where $F^+$ is holomorphic in $|z|\le 1$ and $F^-$ is holomorphic in $|z|\ge1$, with the normalization $F^+(0)=F^-(\infty)=0$, and $c$ is a constant. More explicitly, we have
\begin{equation}
	\begin{aligned}
		F_+(z)&=\frac1{2\pi i}\int_{|\tilde z|=1}f(\tilde z)\frac{d\tilde z}{\tilde z-z}\frac{z}{\tilde z},\qquad |z|<1,\\
		F_-(z)&=-\frac1{2\pi i}\int_{|\tilde z|=1}f(\tilde z)\frac{d\tilde z}{\tilde z-z},\qquad |z|>1.
	\end{aligned}
\end{equation}
This integral representation shows that both $F^+$ and $F^-$ depend regularly in $t$ and $s$ as these parameters tend to zero. This implies that the Laurent expansion of $F^+$ ($F^-$) in the ring $R$ contain only positive (respectively, negative) powers of $z$. This implies the equalities $F^+=H^{(g)}_{m+1,n}$ and  $F^-=H^{(g)}_{m,n+1}$ as it is explained at the discussion after Theorem~\ref{th:mainrecursion}. This proves Theorem~\ref{th:rat}.
\end{proof}

Remark that the arguments above provide not only the proof of rationality of the functions $H^{(g)}_{m+1,n}$ and  $H^{(g)}_{m,n+1}$ but also an explicit inductive procedure for their computations: $H^{(g)}_{m+1,n}$ regarded as a rational function in $z=z_{m+1}$ absorbs the principal parts of the poles of the right hand side of~\eqref{eq:mairecX} situated in the domain $|z|>1$ (including those at $z_i$ for $i\in N$), while  $H^{(g)}_{m,n+1}$ absorbs the principal parts of the poles situated in the domain $|z|<1$ (including those at $z_i$ for $i\in M$).

It is also useful to note that the contour integral used in the proof above is the analytic analogue of the operator $\Res$ applied in the formal case of the ring $R$ in Sect~\ref{sec:formalfinal}.

\subsection{Possible poles and linear loop equations}

It is sometimes convenient to represent the equalities of Theorem~\ref{th:formalHz} as equalities between meromorphic differential forms rather than functions.
Consider the following operators acting in the space of meromorphic $1$-forms on~$\Sigma$:
\begin{align}
D_X:\omega\mapsto d\left(\frac{\omega}{dX/X}\right),\qquad
D_Y:\omega\mapsto d\left(\frac{\omega}{dY/Y}\right).
\end{align}
They are the counterparts of the corresponding operators $X\partial_X$ and $Y\partial_Y$ acting in the space of functions. Then~\eqref{eq:mainrelserHXz} can be rewritten in yet another equivalent form
\begin{multline}\label{eq:dXXPsum}
d H^{(g)}_{m+1,n}=-[\hbar^{2g}]\sum_{j=0}^\infty D_X^j
\Bigg(\sum_{r=0}^\infty [v^j]L_r(v,\Theta)[u^r]e^{-u\,\Theta}\cW^Y_{m,n+1}(u)\frac{dY}{Y}\\
-\delta_{m+n,0}[v^{j+1}]L_0(v,\Theta)d\Theta\Bigg).
\end{multline}
where the differential on the left hand side is taken with respect to the variable $X=X_{m+1}$.

The operator $D_X$ acting in the space of meromorphic differentials on $\Sigma$ has poles at zeroes of the form $dX/X$, that is, at the critical points of $X$.

\begin{definition} We denote by $\Xi^X$ the space of meromorphic differentials defined in a neighborhood of the zero locus of $dX/X$ on $\Sigma$ and spanned by the differentials of the form $D_X^k\omega$ where $k=0,1,2,\dots$ and $\omega$ is holomorphic.  We denote by $\Xi^Y$ the space of meromorphic differentials defined in a neighborhood of the zero locus of $dY/Y$ and spanned by the differentials of the form $D_Y^k\omega$ where $k=0,1,2,\dots$ and $\omega$ is holomorphic.
\end{definition}

The condition $\alpha\in\Xi^X$ implies restriction on the principal part of the poles of~$\alpha$. For example, assume that the $dX/X$ has a simple zero at the given point. Then we may choose a local holomorphic coordinate $\zeta$ at this point such that $dX/X=\zeta\,d\zeta$. Then $\Xi^X$ is spanned by the forms holomorphic at $\zeta=0$ and the forms $\frac{d\zeta}{\zeta^{2k}}$, $k>0$. In other words, for any form from the space $\Xi^X$, the principal part of its pole at a simple zero of $dX/X$ should be odd with respect to the deck transformation for the function $X$ regarded locally as ramified covering with the ramification of order two at the considered point.

The very form of~\eqref{eq:dXXPsum} along with the symmetry with respect to the $X$-$Y$-variables implies

\begin{corollary}\label{cor:LLE}
The differential of $H^{(g)}_{m,n}$ with respect to any $X$-variable belongs to $\Xi^X$ and its differential with respect to any $Y$-variable belongs to $\Xi^Y$:
\begin{equation}\label{eq:LLE}
\begin{aligned}
d_iH^{(g)}_{m,n}&\in\Xi^{X_i},\quad i=1,\dots,m,\\
d_jH^{(g)}_{m,n}&\in\Xi^{Y_j},\quad j=m+1,\dots,m+m.
\end{aligned}
\end{equation}
\end{corollary}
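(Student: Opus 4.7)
The plan is to read the claim directly off the already-proven differential form \eqref{eq:dXXPsum} of the recursion, combined with the rationality result of Theorem~\ref{th:rat}. First I would invoke the $\mathfrak{S}_m \times \mathfrak{S}_n$ permutation symmetry of $H^{(g)}_{m,n}$ in the $X$-variables and in the $Y$-variables separately, which is visible from the VEV definition \eqref{eq:hgmn}, to reduce the first line of \eqref{eq:LLE} to the case of the distinguished variable $X = X_{m+1}$ in the recursion (after relabeling $m \to m-1$). The equation \eqref{eq:dXXPsum} then presents the differential of $H^{(g)}_{m+1,n}$ in $X$ as a finite sum
\[
d H^{(g)}_{m+1,n} \;=\; -[\hbar^{2g}]\sum_{j=0}^{\infty} D_X^{\,j}(\omega_j),
\]
for explicit meromorphic $1$-forms $\omega_j$ on $\Sigma$. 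By the very definition of $\Xi^X$, it therefore suffices to verify that each $\omega_j$ is holomorphic in a neighborhood of the zero locus of $dX/X$.

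To do that I would check the ingredients entering $\omega_j$ one by one. The function $\Theta(z)$ is a Laurent polynomial and hence holomorphic on $\Sigma \setminus \{0,\infty\}$; by the remark after Proposition~\ref{prop:speccurv}, for sufficiently small $(t,s)$ the zeros of $dX/X$ lie away from $z = 0$ and $z = \infty$, so $\Theta$ and its derivatives entering $L_r(v,\Theta)$ via \eqref{eq:Lr} are regular there; the rationality assumption on $\hat\psi$ means any poles of $\partial_\theta^k\hat\psi(\theta)$ are a finite bad set in $\theta$, which for generic $(t,s)$ does not meet $\Theta(\text{critical points of }X)$. The form $dY/Y$ is globally meromorphic with poles described by \eqref{eq:Ypoles}, and for generic $(t,s)$ this pole locus is disjoint from the zero locus of $dX/X$. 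Finally, $\cW^Y_{m,n+1}(u)$ is a polynomial combination of lower-complexity correlators $H^{(g')}_{m',n'}$ (with $2g'-2+m'+n' < 2g-2+(m+1)+n$) and the explicit singular corrections appearing in \eqref{eq:cT}; by induction on $2g-2+m+n$ combined with Theorem~\ref{th:rat}, the continuations of these $H^{(g')}_{m',n'}$ are rational on $\Sigma^{m'+n'}$ with poles only at $z = 0$, $z = \infty$ and on the $X$--$Y$ diagonals, all avoiding the critical points of $X$ for small $(t,s)$. The singular corrections are rational in $Y$ and are likewise holomorphic at those points.

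The base of the induction is the unstable content \eqref{eq:H01}--\eqref{eq:H02}: $dH^{(0)}_{1,0}$ and $dH^{(0)}_{0,1}$ are proportional to $\Theta \, dX/X$ and $\Theta \, dY/Y$ minus explicit rational corrections, and the $2$-differentials $d_1 d_2 H^{(0)}_{2,0}$, $d_1 d_2 H^{(0)}_{1,1}$, $d_1 d_2 H^{(0)}_{0,2}$ are all equal to $dz_1\, dz_2/(z_1-z_2)^2$ up to explicit rational corrections; in each argument these forms are holomorphic at zeros of $dX/X$ (respectively $dY/Y$), hence lie in $\Xi^{X_i}$ (respectively $\Xi^{Y_j}$) trivially as holomorphic forms. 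The second line of \eqref{eq:LLE} is then obtained by the exact mirror argument, using the $X \leftrightarrow Y$ symmetric recursion \eqref{eq:mainrelserHYz}. I do not expect a substantive obstacle: the hard work (rationality, location of poles) has already been done in Theorem~\ref{th:rat}, and the explicit $\sum D_X^j(\cdot)$ shape of \eqref{eq:dXXPsum} is exactly what the definition of $\Xi^X$ asks for. The only delicate point is the genericity of $(t,s)$ that keeps the various pole loci disjoint from the critical locus of $X$, which is part of the standing assumptions of Section~\ref{sec:separ}.
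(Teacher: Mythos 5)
Your proof is correct and takes essentially the same route as the paper, which derives the corollary in a single line from the shape $-[\hbar^{2g}]\sum_{j} D_X^{\,j}(\cdots)$ of \eqref{eq:dXXPsum} together with the $X$--$Y$ symmetry. Your explicit holomorphy checks on the ingredients near the zero locus of $dX/X$, and the genericity caveat that the poles of $\psi'(\Theta(z))$ (the paper's ``unwanted poles'') stay away from that locus, simply spell out what the paper leaves implicit in the surrounding discussion of expected versus unwanted poles.
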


The relations of the corollary are called (or are equivalent to what is known as) the \emph{linear loop equations}.

Consider the differential $dH^{(g)}_{m+1,n}=d_{m+1}H^{(g)}_{m+1,n}$ as a $1$-form in $X=X_{m+1}$. We know a priori that it might have poles on the diagonals  $z=z_j$, $j=m+2,\dots,m+n+1$, where $z=z_{m+1}$, and all remaining poles belong to the domain $|z|>1$. Among those poles there are the zeroes of $dX/X$. We consider them as \emph{expected poles}. Eq.~\eqref{eq:dXXPsum} implies that there could be some other poles that we call \emph{unwanted poles}. These are the poles of the function $\psi'(\Theta(z))$, that is, preimages of the poles of $\psi'$ under the mapping~$\Theta$. To be exact, we consider those preimages only that belong to the domain $|z|>1$. In fact, if the parameters $(t,s)$ are chosen small enough then the image of the circle $|z|=1$ under the map $\Theta$ belongs to a small neighborhood of the origin in the $\theta$ plane and we may assume that $\psi$ is holomorphic in this neighborhood. It follows that the unit circle separates indeed the poles of $\psi'(\Theta(z))$. Moreover, as the $(t,s)$ parameters tend to zero, both expected and unwanted poles of $d_{m+1}H^{(g)}_{m+1,n}$ converge to infinity.

The linear loop equations at the expected poles are satisfied independently of the presence or absence of the unwanted poles. We list certain cases in Section \ref{sec:Ex} when the unwanted poles cancel out. But, in general, they are present and we have no control at the moment on their principal parts.

\subsection{Higher loop equations}

The following result is an immediate consequence of Theorem~\ref{th:formalWz}:

\begin{corollary}\label{cor:HLE}
For any $g$ and $r$ the coefficient of $\hbar^{2g}w^{r-1}$ of $\cW_{m+1,n}^X(w)\frac{dX}{X}$ considered as a differential form in $X=X_{m+1}$, belongs to the space $\Xi^X$,
\begin{equation}
[\hbar^{2g}w^{r-1}]\cW_{m+1,n}^X(w)\frac{dX}{X}\in\Xi^X
\end{equation}
\end{corollary}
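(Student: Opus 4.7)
My plan is to deduce Corollary~\ref{cor:HLE} directly from Theorem~\ref{th:formalWz}, whose right-hand side becomes manifestly a sum of $D_X^j$ applied to differentials that are holomorphic near the zeros of $dX/X$ once we multiply~\eqref{eq:mainrelserWXz} by $dX/X$ and convert $(X\partial_X)^j$ into $D_X^j$.

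First I would convert~\eqref{eq:mainrelserWXz} into differential form. The elementary identity $(X\partial_X F)\cdot\frac{dX}{X}=d_{m+1}F$, valid for any function $F$ (with $d_{m+1}$ the differential in $z=z_{m+1}$ alone), iterates to $(X\partial_X)^j F\cdot\frac{dX}{X} = D_X^j(F\cdot\frac{dX}{X})$, and $(X\partial_X\Theta)\cdot\frac{dX}{X}=d\Theta$. Multiplying~\eqref{eq:mainrelserWXz} by $dX/X$ therefore produces
\begin{equation*}
\cW^X_{m+1,n}(w)\frac{dX}{X} = -\sum_{j=0}^\infty D_X^j\bigl(\Omega_j(w)\bigr) + \delta_{m+n,0}\frac{e^{w\Theta}}{w}\frac{dX}{X},
\end{equation*}
with
\begin{equation*}
\Omega_j(w) = \sum_{r'=0}^\infty [v^j]\tilde L_{r'}(v,\Theta,w)[u^{r'}]e^{-u\Theta}\cW^Y_{m,n+1}(u)\frac{dY}{Y} - \delta_{m+n,0}[v^{j+1}]\tilde L_0(v,\Theta,w)\,d\Theta.
\end{equation*}
Extracting the coefficient of $\hbar^{2g}w^{r-1}$ produces a \emph{finite} sum in $j$ and $r'$: for fixed $g$, $[\hbar^{2g}]e^{-u\Theta}\cW^Y_{m,n+1}(u)$ is polynomial in $u$ (as noted after Theorem~\ref{th:formalHz}) and the $\hbar$-coefficients of $\tilde L_{r'}(v,\Theta,w)$ are polynomial in $v$ and $w$ by~\eqref{eq:tLr}. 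The boundary contribution is $\delta_{m+n,0}\delta_{g,0}\frac{\Theta^r}{r!}\frac{dX}{X}$; thanks to the factor $dX/X$ it vanishes at every zero of $dX/X$, hence is trivially in $\Xi^X$.

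Second, I would verify that $[\hbar^{2g}w^{r-1}]\Omega_j(w)$ is holomorphic in $z=z_{m+1}$ in a neighborhood of each zero $z_0$ of $dX/X$, since this is precisely what is needed to place $D_X^j$ applied to it into $\Xi^X$ by definition. Under the standing genericity assumptions, the ingredients of $\Omega_j$ are all holomorphic at $z_0$: $\Theta$ and $d\Theta$ are Laurent polynomials regular away from $\{0,\infty\}$; $dY/Y$ is regular away from its own zeros and poles, which generically avoid $z_0$; the coefficients of $\tilde L_{r'}(v,\Theta(z),w)$ are, by~\eqref{eq:tLr}, polynomials in $\psi'(\Theta(z))$ and the $\theta$-derivatives of $\hat\psi$ at $\theta=\Theta(z)$, all regular at $z_0$ provided $z_0$ avoids the ``unwanted pole'' locus of preimages under $\Theta$ of poles of $\psi'$ (disjointness of the zeros of $dX/X$ from this locus is the genericity condition recalled after Corollary~\ref{cor:LLE}); and finally $\cW^Y_{m,n+1}(u)$, considered coefficient-wise in $u$ and $\hbar$, is rational in $z_{m+1}$ with poles only on zeros of $dY/Y$, on the diagonals $z_{m+1}=z_i$, and at $z_{m+1}=0,\infty$, all generically disjoint from $z_0$.

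This last point for $\cW^Y_{m,n+1}$ is the only non-routine input and will be the main obstacle. I would establish it by induction on the complexity $2g-2+m+n+1$, applying Theorem~\ref{th:rat} and Corollary~\ref{cor:LLE} to each constituent $H^{(g')}_{m',n'+k}$ entering the expansion~\eqref{eq:cT}--\eqref{eq:cW}, noting that the $z_{m+1}$-dependence of these constituents is introduced only via the regular substitution $Y_{\bar i}=Y(z_{m+1})$, which creates no new singularities at $z_0$. Combining everything, $[\hbar^{2g}w^{r-1}]\cW^X_{m+1,n}(w)\frac{dX}{X}$ is presented as a finite sum of forms of the shape $D_X^j(\omega)$ with $\omega$ holomorphic near the zero locus of $dX/X$, plus a boundary term with the same property, so it lies in $\Xi^X$ by definition.
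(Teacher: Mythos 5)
Your proposal is correct and follows essentially the same route as the paper, which derives Corollary~\ref{cor:HLE} as an ``immediate consequence'' of Theorem~\ref{th:formalWz}: multiplying~\eqref{eq:mainrelserWXz} by $dX/X$ exhibits the left-hand side as a finite (for fixed $\hbar^{2g}w^{r-1}$) sum of $D_X^j$ applied to differentials holomorphic near the zero locus of $dX/X$, exactly as in the passage from~\eqref{eq:mainrelserHXz} to~\eqref{eq:dXXPsum}. The only caveat is a harmless imprecision: Theorem~\ref{th:rat} locates the non-diagonal poles of $\cW^Y_{m,n+1}$ in $z_{m+1}$ only inside the unit disk (not necessarily at zeros of $dY/Y$, which would require the projection property), but that already suffices since the zeros of $dX/X$ lie outside it.
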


The relation of this corollary is called the \emph{$r$-loop equation} for the correlator functions $H^{(g)}_{n,m}$. In order to write explicitly the loop equations for small~$r$, let us introduce the following notations. For two finite sets of indices $I$, $J$ we denote
\begin{equation}
DH^{X,(g)}_{I,k,J}(X_{\bar 1},\dots,X_{\bar k})=
d_{\bar1}\dots d_{\bar k}H^{(g)}_{|I|+k,|J|}(X_I,X_{\bar 1},\dots,X_{\bar k};Y_J)
\end{equation}
with the exceptions
\begin{align}
DH^{X,(0)}_{\varnothing,1,\varnothing}(X)&=\Theta\frac{dX}{X},\\
DH^{X,(0)}_{\{i\},1,\varnothing}(X)&=\frac{z_i}{z-z_i}\frac{d z}{z}.
\end{align}
With this notation, the \emph{linear}, \emph{quadratic}, and \emph{cubic} loop equations corresponding to $r=1$, $2$, and $3$, respectively, are
\begin{equation}
DH^{X,(g)}_{M,1,N}(X)\in\Xi^X
\end{equation}
\begin{equation}
\frac{X}{dX}\Bigg(DH^{X,(g-1)}_{M,2,N}(X,X)+
\sum_{\substack{g_1+g_2=g\\
I_1\sqcup I_2=M,~J_1\sqcup J_2=N}}
DH^{X,(g_1)}_{I_1,1,J_1}(X)\;DH^{X,(g_2)}_{I_2,1,J_2}(X)
\Bigg)\in\Xi^X
\end{equation}
\begin{multline}
\Bigl(\frac{X}{dX}\Bigr)^2\Bigg(DH^{X,(g-2)}_{M,3,N}(X,X,X)\\+
3\sum_{\substack{g_1+g_2=g-1\\
I               _1\sqcup I_2=M,~J_1\sqcup J_2=N}}
DH^{X,(g_1)}_{I_1,1,J_1}(X)\;DH^{X,(g_2)}_{I_2,2,J_2}(X,X)\\+
\sum_{\substack{g_1+g_2+g_3=g\\
I_1\sqcup I_2\sqcup I_3=M,~J_1\sqcup J_2\sqcup J_3=N}}
DH^{X,(g_1)}_{I_1,1,J_1}(X)\;DH^{X,(g_2)}_{I_2,1,J_2}(X)\;DH^{X,(g_3)}_{I_3,1,J_3}(X)
\Bigg)\\
+\frac14\bigl(D_X^2-1\bigr)DH^{X,(g-1)}_{M,1,N}
\in\Xi^X
\end{multline}
The last summand of the last expression itself belongs to $\Xi^X$. However, we include it to the cubic loop equation just in the way it appears in~$[\hbar^{2g}w^{2}]\cW_{m+1,n}^X(w)\frac{dX}{X}$. 

Note that the linear and quadratic loop equations imply the blobbed topological recursion for, separately, the $H^{g}_{m,0}$ functions, and the $H^{g}_{0,n}$ functions (under the assumptions of meromorphy and generality).

\section{Projection property and topological recursion}\label{sec:projprop}
\subsection{Projection property}
In this section we assume that
\begin{align} \label{eq:phi}
\phi(\theta)&=e^{\psi(\theta)}=\hat\phi(\theta)\big|_{\hbar=0}=R(\theta)e^{P(\theta)},\\ \label{eq:phihat}
\hat\phi(\theta)&=e^{\hat\psi(\theta)}=R(\theta)e^{\cS(\hbar\partial_{\theta})P(\theta)},
\end{align}
where $R$ is a rational function and $P$ is a polynomial. We also further assume the \emph{generality condition}, i.e. that all zeroes and poles of $R(\theta)$ and all zeroes of $P(\theta)$ are simple. 

We prove here the following theorem
\begin{theorem}\label{th:projpr}
	For $\hat{\phi}$ as in~\eqref{eq:phihat}, under the generality condition,	
	\begin{equation}
		H^{(g)}_{m,n}(X(z_1),\dots,X(z_m),Y(z_{m+1}),\dots,Y(z_{m+n}))
	\end{equation} has no poles in $z_i$ for $1\leq i\leq m$ apart from the poles at the zeroes of $dX(z_i)$ and apart from the diagonal poles at $z_i=z_j$ where $m+1\leq j\leq m+n$; and it has no poles in $z_j$ for $m+1\leq j\leq m+n$ apart from the poles at the zeroes of $dY(z_j)$ and apart from the diagonal poles at $z_j=z_i$ where $1\leq i\leq m$.
\end{theorem}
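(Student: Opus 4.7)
Our plan is to establish the projection property by strong induction on $2g-2+m+n$, using the main recursion of Theorem~\ref{th:mainrecursion} as the principal tool. The unstable cases with $2g-2+m+n\le 0$ are settled directly by Proposition~\ref{prop:unst}: the explicit formulas show that $H^{(0)}_{1,0}$ is a Laurent polynomial in $X_1$ and $H^{(0)}_{0,1}$ in $Y_1$, while $H^{(0)}_{2,0}$, $H^{(0)}_{1,1}$ and $H^{(0)}_{0,2}$ are rational with the only poles on the permitted diagonals. For the inductive step, recall from the discussion following Theorem~\ref{th:mainrecursion} that $H^{(g)}_{m+1,n}$ (respectively, $H^{(g)}_{m,n+1}$) is obtained as the polar part in $|z_{m+1}|>1$ (respectively, $|z_{m+1}|<1$) of the right-hand side of~\eqref{eq:mairecX}. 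Thus the theorem reduces to showing that this right-hand side, viewed as a rational function of $z=z_{m+1}$, carries no poles other than those allowed by the projection property: zeros of $dX/X$, zeros of $dY/Y$, and diagonals $z=z_i$ for $i\neq m+1$.

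By the inductive hypothesis, each $H^{(g')}_{m',n'}$ entering $\cW^Y_{m,n+1}(u)$ contributes in $z=z_{m+1}$ only poles at zeros of $dY/Y$ and at diagonals with other arguments; the outer operators $(X\partial_X)^{j-1}$ from~\eqref{eq:mairecX} may generate extra poles at zeros of $dX/X$, which are allowed. The only remaining candidates for unwanted poles are those stemming from the factors $L_r(v,\Theta(z))$. Under the assumption~\eqref{eq:phihat}, we have $\psi'(\theta)=R'(\theta)/R(\theta)+P'(\theta)$, whose singularities are simple poles at the zeros and poles of $R$, while the exponential factor in~\eqref{eq:Lr} equals $\exp\!\bigl(v\bigl(\cS(v\hbar\partial_\theta)/\cS(\hbar\partial_\theta)-1\bigr)P(\theta)\bigr)$, which is entire in $\theta$ since $P$ is polynomial. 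Hence the potentially unwanted poles of the right-hand side of~\eqref{eq:mairecX} lie exactly at the preimages $\Theta^{-1}(\theta_\ast)$ where $\theta_\ast$ is a zero or pole of $R$, and the projection property is equivalent to the vanishing of the residue at each such preimage.

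The heart of the argument, and its main obstacle, is to prove this cancellation. Our plan is to adapt the strategy of~\cite{bychkov2020topological}, which handled the case $t=0$; the additional complications here come from the nontrivial $t$-parameters, but these affect only the local expansion of $\Theta$ near $z=0$ and not the local structure near the preimages $\Theta^{-1}(\theta_\ast)$, so the existing cancellation mechanism should transfer. Fix such a preimage $z_\ast$ with $\Theta(z_\ast)=\theta_\ast$; by the generality condition the pole of $R'/R$ at $\theta_\ast$ is simple. Expanding the right-hand side of~\eqref{eq:mairecX} in the local coordinate $\theta-\theta_\ast$ and using the identity $L_r(v,\theta)=(\partial_\theta+v\psi'(\theta))^r E(v,\theta)$ with $E$ entire, one rearranges the double sum over $r$ and $j$ into a form where the contribution of the simple pole of $\psi'$ at $\theta_\ast$ telescopes to a $\theta$-derivative of an expression regular at $\theta_\ast$, whose residue at $z_\ast$ therefore vanishes. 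The telescoping relies crucially on the special product form~\eqref{eq:phihat}: for a generic $\hat\psi$ whose $P'$-term is replaced by a rational function the corresponding identity fails, which matches the authors' remark that such cases are outside the scope of the present methods. Once this local cancellation is established at every $z_\ast$, the projection property for $H^{(g)}_{m+1,n}$ and $H^{(g)}_{m,n+1}$ in the variable $z_{m+1}$ follows; the analogous statement for the $Y$-arguments is obtained by applying the dual recursion~\eqref{eq:mairecY}, and running the induction over all arguments yields the full claim.
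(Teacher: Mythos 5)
Your skeleton matches the paper's: induction on $(g,m+n)$, reduction via Theorem~\ref{th:mainrecursion} and the splitting-of-poles procedure to showing that the right-hand side of~\eqref{eq:mairecX} (and dually~\eqref{eq:mairecY}) has no unwanted poles in $z=z_{m+1}$, with the candidates located where $L_r(v,\Theta(z))$ is singular. However, the heart of the argument is missing, and two of your reductions are incorrect. First, the projection property is \emph{not} equivalent to the vanishing of the residue at each preimage $z_\ast\in\Theta^{-1}(\theta_\ast)$: since $H^{(g)}_{m+1,n}$ is recovered as the full polar part of the right-hand side in $|z|>1$, you need the entire principal part at $z_\ast$ to vanish, i.e.\ genuine regularity; a pole of order $\ge 2$ with zero residue would still break the statement. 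Correspondingly, your proposed mechanism --- ``telescoping to a $\theta$-derivative of a regular expression, whose residue vanishes'' --- is not what happens and would not suffice. The actual cancellation (Lemma~\ref{lem:Fam1PfSecondLemma}) is of a different nature: near a simple zero $A$ of $R$, the exponential factor expands as $\sum_k \hbar^{2k}\,v(v-1)\cdots(v-2k+1)\,p_k(v)/(\theta-A)^{2k}+\mathrm{reg}$, the operators $(\partial_\theta+v\psi'(\theta))$ extend the falling factorial, and after extracting $[v^j]$ and summing against $(X\partial_X)^j$ one gets $(X\partial_X-1)\cdots(X\partial_X-l+1)$ acting on $(z-B)^{-(l-1)}$ (the drop from $l$ to $l-1$ coming from the simple zero of $d\log Y/d\log X$ at $B$), which is regular because $X\partial_X=\bigl(-(z-B)+O((z-B)^2)\bigr)\partial_z$ there. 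Nothing in your sketch produces this falling-factorial/operator matching, which is the crucial point where the special form~\eqref{eq:phihat} enters.

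Second, your claim that the only candidate unwanted poles are the preimages of zeros and poles of $R$ is false: $\Theta$ is a Laurent polynomial with poles at $z=0$ and $z=\infty$, so even though $e^{v(\cS(v\hbar\partial_\theta)-1)P(\theta)}$ is entire in $\theta$, its composition with $\Theta$ can produce high-order poles at $z=\infty$ (for~\eqref{eq:mairecX}) and at $z=0$ (for~\eqref{eq:mairecY}). Ruling these out requires a separate degree count --- each application of $X\partial_X$ and the factor $d\log Y/d\log X$ lower the pole order by $e\deg P$, each $\partial_\theta$ by $e$, etc.\ --- which occupies Lemmas~\ref{lem:checksigma-infty} and~\ref{lem:Fam1-tau1-infty} of the paper and is entirely absent from your proposal. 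Finally, note that the dual recursion~\eqref{eq:mairecY} is needed already to control the poles of the \emph{same} variable $z_{m+1}$ inside the unit disk (which feed into $H^{(g)}_{m,n+1}$), not merely ``for the $Y$-arguments'' as you state; the symmetry in the $X$-arguments and separately in the $Y$-arguments then completes the induction.
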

The proof will go by induction in $g$ and $m+n$. Let us assume the induction hypothesis (we refer to it as IH below), i.e. that this statement holds for all $H^{g'}_{m',n'}$ such that either $g'<g$ or simultaneously $g'=g$ and $m'+n'<m+n$.

\begin{proposition}\label{prop:sumnopoles}
	Under the above assumption, the expression
	\begin{align}
		&H^{(g)}_{m+1,n}((X(z_1),\dots,X(z_{m+1}),Y(z_{m+2}),\dots,Y(z_{m+n+1})))\\\nonumber
		+&H^{(g)}_{m,n+1}((X(z_1),\dots,X(z_m),Y(z_{m+1}),\dots,Y(z_{m+n+1})))
	\end{align}
has no poles in $z=z_{m+1}$ apart from the poles at the zeroes of $dX(z)$ or the zeroes of $dY(z)$, or the diagonal poles at $z=z_i$, $i\neq m+1$.
\end{proposition}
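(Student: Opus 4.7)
The proof will proceed by induction on $(g, m+n)$, with the induction hypothesis being the full statement of Theorem~\ref{th:projpr} for every $H^{(g')}_{m',n'}$ with $g' < g$ or, when $g' = g$, $m' + n' \leq m + n$. The starting point is the identity~\eqref{eq:mairecX} of Theorem~\ref{th:mainrecursion}, which presents $H^{(g)}_{m+1,n} + H^{(g)}_{m,n+1}$ explicitly in terms of $\cW^Y_{m,n+1}(u)$, the functions $L_r(v,\Theta)$, and applications of $X\partial_X$.

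From definitions~\eqref{eq:cT}--\eqref{eq:cW} and the induction hypothesis, every correlator appearing inside $\cW^Y_{m,n+1}$ has --- as a function of $z = z_{m+1}$ through the substitution $Y = Y(z)$ --- poles only at zeros of $dY(z)$, on the diagonals $z = z_i$ for $i \in M \cup N$, and possibly at the boundary points $z \in \{0, \infty\}$. Thus the only potential $z$-poles of the right hand side of~\eqref{eq:mairecX} not already among the allowed ones of the proposition are: (a) points $z_0$ with $\Theta(z_0) = a$, where $a$ is a (by the genericity condition, simple) zero or pole of $R$, introduced by the singularities of $\psi'(\theta) = R'(\theta)/R(\theta) + P'(\theta)$ inside $L_r(v,\Theta)$ and inside the last explicit summand $\int[\hat\psi/\cS(\hbar\partial_\theta) - \psi]\,d\theta|_{\theta = \Theta}$; and (b) the boundary points $z = 0, \infty$, where $\Theta$ diverges. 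The poles occurring at zeros of $dX(z)$, zeros of $dY(z)$, and diagonals are already among the allowed ones, so the task is to rule out (a) and (b).

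The heart of the proof is the cancellation at a potential pole of type~(a). At such a $z_0$, compute the principal part in $z$ of the right hand side of~\eqref{eq:mairecX}. By the induction hypothesis the factor $\cW^Y_{m,n+1}$ is regular at $z_0$, so only the factors $L_r(v,\Theta(z))$ and the $\int$-term contribute singular terms. Expand $L_r(v,\theta)$ in Laurent series in $(\theta - a)$ using the explicit representation~\eqref{eq:phi}--\eqref{eq:phihat}: the structure $\hat\phi = R\cdot e^{\cS(\hbar\partial_\theta)P}$ means that every $\hbar^{2k}$-coefficient of the exponent $v[\cS(v\hbar\partial_\theta)/\cS(\hbar\partial_\theta)]\hat\psi(\theta) - v\psi(\theta)$ is an explicit finite combination of derivatives of $\log R$ and of the polynomial $P$. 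Then, organise the $j$-sum of $(X\partial_X)^{j-1}$ into a single residue in $Y$ via the Lagrange-inversion identities of Section~\ref{sec:Lagrange} (used in reverse: the argument of Proposition~\ref{prop:Ur} translates a series $\sum_j (X\partial_X)^j[\cdot]$ into a residue), and use the spectral-curve equation $XY\phi(\Theta)=1$ to identify the contour of integration. The resulting integrand, at $\theta = a$, combines the singular contributions from $L_r$ with those of the $\int$-term in exactly the way needed for the integrand to be regular at $\theta = a$; this is the point at which the explicit form~\eqref{eq:phi}--\eqref{eq:phihat} of $\hat\phi$ is crucial, since without it the $\hbar^2$-corrections of $\hat\psi$ could not be absorbed into the regular part.

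The boundary case (b) is handled separately. By Proposition~\ref{prop:unst}, the genus-zero content of $\cT^Y_{0,1}(u;Y)$ equals exactly $u\,\Theta(z)$; consequently, in the combination $e^{-u\,\Theta}\cW^Y_{m,n+1}(u)$ entering~\eqref{eq:mairecX}, the leading-order divergence of $\Theta$ at $z = 0$ already cancels, and the surviving $\hbar^2$-corrections are polynomial combinations of derivatives $(Y\partial_Y)^k H^{(g')}_{m',n'}$ which are regular at $z = 0$ by the induction hypothesis applied in the $Y$-variable. The symmetric argument at $z = \infty$ uses the formula~\eqref{eq:mairecY} and Proposition~\ref{prop:unst} for $\cT^X_{1,0}$. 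The principal obstacle is the cancellation of the unwanted poles of type~(a): it is at this step that the specific form~\eqref{eq:phi}--\eqref{eq:phihat} of $\hat\phi$ and the genericity assumption are essential, and the argument generalises the projection-property analysis of~\cite{bychkov2020topological} from the case $t = 0$ to arbitrary~$t$.
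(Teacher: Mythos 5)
Your overall skeleton (induction on $(g,m+n)$, starting from \eqref{eq:mairecX}, locating the potential unwanted poles at the $\Theta$-preimages of the zeroes and poles of $R$ and at the boundary points $0,\infty$) matches the paper. But the mechanism you propose for killing the type-(a) poles is not the right one, and this is the heart of the proof. First, the cancellation cannot be ``between the singular contributions from $L_r$ and those of the $\int$-term'': that integral term carries a factor $\delta_{m+n,0}$ in \eqref{eq:mairecX}, so for all $m+n>0$ it is simply absent, and the regularity of the $L_r$-part must be established on its own. Second, converting the $j$-sum back into a residue in $Y$ via the Lagrange-inversion identities does not help; that residue form is the formal-series starting point, and it says nothing about analyticity at a specific interior point $B$ with $\Theta(B)=A$. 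The actual engine (Lemmas \ref{lem:Fam1PfSecondLemma} and \ref{lem:Fam1PfSecondLemma2} in the paper, adapting \cite[Lemma 4.1]{bychkov2020topological}) is a falling-factorial identity: near a simple zero/pole $A$ of $R$ the principal part of $L_r(v,\theta)$ is a combination of terms $v(v-1)\cdots(v-l+1)\,\tilde p(v)/(\theta-A)^l$; extracting $[v^j]$ and summing against $(X\partial_X)^{j-1}$, and using that $\frac{d\log Y}{d\log X}$ has a simple zero at $B$ while $X\partial_X=\bigl(-(z-B)+O((z-B)^2)\bigr)\partial_z$ there, one gets $(X\partial_X-1)\cdots(X\partial_X-l+1)$ acting on $(z-B)^{-(l-1)}$, which is regular. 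Without this structural matching between the Pochhammer symbols in $v$ and the local form of $X\partial_X$, the claimed regularity does not follow.

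The boundary analysis is also essentially missing. You only invoke the cancellation of $\Theta$ inside $e^{-u\Theta}\cW^Y_{m,n+1}(u)$, which handles one benign factor; the real danger at $z\to\infty$ comes from $L_r(v,\Theta(z))$ itself, whose exponent contains $v\bigl(\cS(v\hbar\partial_\theta)-1\bigr)P(\theta)\big|_{\theta=\Theta(z)}$ and hence grows like $z^{e\deg P}$ per power of $v$. Controlling this requires the pole-order bookkeeping of Lemmas \ref{lem:checksigma-infty} and \ref{lem:Fam1-tau1-infty}: each factor of $v$ costs a pole of order $e\deg P$, each application of $X\partial_X$ and the prefactor $\frac{d\log Y}{d\log X}$ gain a zero of order $e\deg P$, each $\partial_\theta$ lowers the order by $e$, and the net count (using that $\cS$ is even, so the first nontrivial term of $e^{v(\cS(v\hbar\partial_\theta)-1)P}$ is $\tfrac{1}{24}v^3\hbar^2P''$) comes out nonnegative. (Also, a small orientation slip: \eqref{eq:mairecX} governs the region $|z|>1$ including $z=\infty$, and \eqref{eq:mairecY} the region $|z|<1$ including $z=0$, not the other way around.) As written, your argument asserts the two crucial cancellations rather than establishing them.
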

In order to proceed with the proof of Proposition~\ref{prop:sumnopoles}, let us prove several technical statements first.

Recall equation \eqref{eq:mairecX}:
\begin{align}
	&H^{(g)}_{m+1,n}+H^{(g)}_{m,n+1}=\\ \nonumber
	&\phantom{=}-[\hbar^{2g}]\sum_{j=1}^\infty (Y\partial_Y)^{j-1} \Bigg(\frac{Y}{dY}\frac{dX}{X}\sum_{r=0}^\infty [v^j]L_r(v,\Theta)[u^r]e^{-u\,\Theta}\cW^X_{m+1,n}(u)\\ \nonumber
	&\phantom{=}-\delta_{m+n,0}[v^{j+1}]L_0(v,\Theta)Y\partial_Y\Theta\Bigg)\\ \nonumber
	&\phantom{=}+\delta_{m+n,0}\int[\hbar^{2g}]\left.\left(\dfrac{1}{\cS(\hbar\partial_\theta)}\hat\psi(\theta)-\psi(\theta)\right)\right|_{\theta=\Theta}\partial_Y\Theta\, dY+{\rm const}.
\end{align}
where $X=X(z)$ and $Y=Y(z)$.

Recall formula \eqref{eq:Lr} and substitute the expressions \eqref{eq:phi} and \eqref{eq:phihat} for $\psi$ and $\hat{\psi}$ respectively:
\begin{align}\label{eq:Lrsubs}
	L_r(v,\theta)
	&=(\partial_\theta+v\,\psi'(\theta))^re^{v\frac{\cS(v\hbar\partial_\theta)}{\cS(\hbar\partial_\theta)}\hat\psi(\theta)-v\psi(\theta)}\\ \nonumber
	&=(\partial_\theta+v\,\psi'(\theta))^re^{v\left(\frac{\cS(v\hbar\partial_\theta)}{\cS(\hbar\partial_\theta)}-1\right)\log R(\theta)}e^{v\left(\cS(v\hbar\partial_\theta)-1\right)P(\theta)}.
\end{align}

Denote
\begin{align}\label{eq:taugmndef}
	\sigma^{(g)}_{m,n} &:= [\hbar^{2g}]\sum_{j=1}^\infty (X\partial_X)^{j-1} \frac{X}{dX}\frac{dY}{Y}\sum_{r=0}^\infty [v^j]L_r(v,\Theta)[u^r]e^{-u\,\Theta}\cW^Y_{m,n+1}(u)\\ \nonumber
	&=[\hbar^{2g}]\sum_{\substack{j\geq 1\\r \geq 0}} (X\partial_X)^{j-1} \frac{d\log Y}{d \log X}\\ \nonumber
	&\phantom{=}\times
	[v^j]\left((\partial_\theta+v\,\psi'(\theta))^re^{v\left(\frac{\cS(v\hbar\partial_\theta)}{\cS(\hbar\partial_\theta)}-1\right)\log R(\theta)}e^{v\left(\cS(v\hbar\partial_\theta)-1\right)P(\theta)}\right)\Bigg|_{\theta=\Theta(z)}[u^r]e^{-u\,\Theta}\cW^Y_{m,n+1}(u)
\end{align}
and
\begin{align}\label{eq:taugmndef2}
	\check\sigma^{(g)}_0 &:= [\hbar^{2g}]\sum_{j=1}^\infty (X\partial_X)^{j-1} [v^{j+1}]L_0(v,\Theta)X\partial_X\Theta\\ \nonumber
	&=[\hbar^{2g}]\sum_{j=1}^\infty (X\partial_X)^{j-1}
	[v^{j+1}]\left(e^{v\left(\frac{\cS(v\hbar\partial_\theta)}{\cS(\hbar\partial_\theta)}-1\right)\log R(\theta)}e^{v\left(\cS(v\hbar\partial_\theta)-1\right)P(\theta)}\right)\Bigg|_{\theta=\Theta(z)}X\partial_X\Theta
\end{align}
and
\begin{align}
	\tilde\sigma^{(g)}_{00} &:=\int[\hbar^{2g}]\left.\left(\dfrac{1}{\cS(\hbar\partial_\theta)}\hat\psi(\theta)-\psi(\theta)\right)\right|_{\theta=\Theta}\partial_Y\Theta\, dY\\ \nonumber
	&=\int[\hbar^{2g}]\left.\left(\left(\dfrac{1}{\cS(\hbar\partial_\theta)}-1\right)\log R(\theta)\right)\right|_{\theta=\Theta}\partial_Y\Theta\, dY\\ \nonumber
	&=\left([u^{2g}]\dfrac{1}{\cS(u)}\right)\left.\left(\partial_\theta^{2g-1}\log R(\theta)\right)\right|_{\theta=\Theta(z)}.
\end{align}
Then we have
\begin{equation}
	H^{(g)}_{m+1,n}+H^{(g)}_{m,n+1}=\sigma^{(g)}_{m,n}-\delta_{m+n,0}\left(\check\sigma^{(g)}_0+\tilde\sigma^{(g)}_{00}\right)+{\rm const}.
\end{equation}


Since $\Theta(z)$ has the form \eqref{eq:thetaform}, we see that, apart from the ``allowed'' poles, $\sigma^{(g)}_{m,n}$ and $\check\sigma^{(g)}_{0}$ (and thus $H^{(g)}_{m+1,n}+H^{(g)}_{m,n+1}$) can only possibly have poles at $z=0$, $z=\infty$ or at the zeroes or poles of $R(\Theta(z))$.

\begin{lemma} \label{lem:Fam1PfSecondLemma} $\sigma^{(g)}_{m,n}$ is regular at the zeroes and poles of $R(\Theta(z))$ which are not equal to $\infty$ and lie outside the unit circle on the $z$-plane.
\end{lemma}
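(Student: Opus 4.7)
By the generality assumption, $\theta_0:=\Theta(z_0)$ is a simple zero or pole of $R$ with $\Theta'(z_0)\ne 0$, and $z_0$ is neither a critical point of $X$ or $Y$ nor a zero or pole of $X$ or $Y$. Consequently the operator $(X\partial_X)^{j-1}$ and the prefactor $\tfrac{X}{dX}\tfrac{dY}{Y}$ in \eqref{eq:taugmndef} are regular at $z_0$, and by the induction hypothesis every $H^{(g'')}_{m'',n''}$ with $2g''+m''+n''<2g+m+n+1$ appearing in $e^{-u\Theta}\cW^Y_{m,n+1}(u)$ is regular at $z_0$. Hence the sole potential source of a singularity of $\sigma^{(g)}_{m,n}$ at $z_0$ is the factor $[v^j]L_r(v,\Theta(z))$. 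Locally I write $R(\theta)=(\theta-\theta_0)^\epsilon\tilde R(\theta)$ with $\epsilon\in\{\pm 1\}$ and $\tilde R(\theta_0)\in\C^\times$, so that $\psi(\theta)=\epsilon\log(\theta-\theta_0)+\psi_{\mathrm{reg}}(\theta)$ and $\hat\psi(\theta)=\epsilon\log(\theta-\theta_0)+\hat\psi_{\mathrm{reg}}(\theta)$, and the potential poles of $[v^j]L_r(v,\Theta(z))$ at $z_0$ originate from the $v\psi'(\Theta)=v\epsilon/(\Theta-\theta_0)+(\text{regular})$ factor in $(\partial_\theta+v\psi')^r$ together with the $\hbar^{2k}$-corrections from $\bigl(\cS(v\hbar\partial_\theta)/\cS(\hbar\partial_\theta)-1\bigr)\epsilon\log(\theta-\theta_0)$ in the exponent of $L_r$.

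The plan is to invoke Proposition~\ref{prop:Ur} applied to $X\partial_X\sigma^{(g)}_{m,n}+A_0=\sum_{j\ge 0}(X\partial_X)^j A_j$, where $A_j:=\tfrac{X}{dX}\tfrac{dY}{Y}\sum_r[v^j]L_r(v,\Theta)f_r$ and $f_r:=[u^r]e^{-u\Theta}\cW^Y_{m,n+1}(u)$: this identity rewrites the right-hand side as $-\sum_{p\in\Z}X(z)^p c_p$ with $c_p:=\Res\bigl(\tfrac{dY}{Y}X^{-p}\sum_r L_r(p,\Theta)f_r\bigr)$. Using the spectral-curve relation $X^{-p}L_r(p,\Theta)=Y^p\partial_\theta^r\hat\phi_p(\Theta)$ valid for integer $p$, the residue integrand becomes $Y(z')^{p-1}dY(z')\cdot\partial_\theta^r\hat\phi_p(\Theta(z'))\cdot f_r(z')$, in which $\hat\phi_p(\theta)=\prod_i R(\theta+i\hbar)\cdot e^{\cS(\hbar\partial_\theta)\sum_i P(\theta+i\hbar)}$ has at most a single factor $R(\theta+i\hbar)$ (the $i=0$ one, present when $p$ is odd) vanishing or diverging at $\theta_0$; all other shifted factors are regular and non-vanishing at $\theta_0$ for generic formal $\hbar$. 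Hence $\partial_\theta^r\hat\phi_p(\Theta(z'))$ has at most a bounded-order singularity at $z'=z_0$ controlled uniformly in $p$ by the polynomial-in-$p$ degree of the differential operator.

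The residue $\Res$ is the coefficient of $\tfrac{dz'}{z'}$ in the Laurent expansion at $z'=0$, which in the analytic setting of Section~\ref{sec:rat} is a small-contour integral $\tfrac{1}{2\pi i}\oint_{|z'|=\rho}$ with $\rho<1$; this contour does not enclose $z_0$ (as $|z_0|>1$), so the singularity of the integrand at $z'=z_0$ does not enter $c_p$, and the Lagrange sum $-\sum_p X(z)^p c_p$ represents a rational function on $\Sigma$ regular at $z=z_0$. Since $A_0=\tfrac{X}{dX}\tfrac{dY}{Y}[u^0]e^{-u\Theta}\cW^Y_{m,n+1}(u)$ is regular at $z_0$ by the induction hypothesis and $X\partial_X$ is regular and invertible modulo constants at $z_0$, the claimed regularity of $\sigma^{(g)}_{m,n}$ at $z_0$ follows. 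The principal obstacle is the rigorous justification that the Laurent series $-\sum_p X(z)^p c_p$, a priori only known to converge in a small annulus around $z=0$, analytically continues to a rational function regular at $z_0$; this requires combining the rationality of Theorem~\ref{th:rat} with an order-by-order $\hbar^{2g}$ analysis exploiting the polynomial-in-$v$ structure of $[\hbar^{2g}]L_r(v,\theta)$ from \eqref{eq:Lr} to truncate the effective sum over $p$ to a finite range whose contributions combine into a rational function with the advertised analytic properties.
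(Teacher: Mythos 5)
Your setup is correct: you rightly isolate $[v^j]L_r(v,\Theta)$ as the only potential source of a pole at a point $z_0$ with $\Theta(z_0)=\theta_0$ a simple zero or pole of $R$, and the induction hypothesis does make the $\cW^Y_{m,n+1}$ factor and the prefactors harmless there. But the core of your argument is a non-sequitur. You claim that because the residue defining $c_p$ is computed on a contour $|z'|=\rho<1$ that does not enclose $z_0$, "the singularity of the integrand at $z'=z_0$ does not enter $c_p$, and the Lagrange sum $-\sum_p X(z)^p c_p$ represents a rational function regular at $z=z_0$." This does not follow: the location of the poles of the analytic continuation of a power series is encoded in the growth of its coefficients, not in whether the defining contour encloses the would-be pole (compare $\sum_n z^n$, whose coefficients are residues on a tiny circle around $0$, yet whose sum has a pole at $z=1$). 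Appealing to Theorem~\ref{th:rat} cannot rescue this either: that theorem only gives rationality and explicitly tolerates the "unwanted poles" at preimages of the singularities of $\psi'$ — ruling those out is exactly the content of the lemma, so you would be assuming what you need to prove. Your own closing sentence concedes that the decisive step (analytic continuation and regularity of $-\sum_p X(z)^p c_p$ at $z_0$) is left open.

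What is missing is the actual cancellation mechanism, which the paper's proof supplies by a local computation at $z=B:=z_0$. Using \cite[Lemma~4.1]{bychkov2020topological}, the singular part of $e^{v(\frac{\cS(v\hbar\partial_\theta)}{\cS(\hbar\partial_\theta)}-1)\log(\theta-A)}$ is a sum of terms $\hbar^{2k}\,v(v-1)\cdots(v-2k+1)\,p_k(v)/(\theta-A)^{2k}$, and after applying $(\partial_\theta+v\psi'(\theta))^r$ one gets terms of the form $v(v-1)\cdots(v-l+1)\,\tilde p_{k,l}(v)/(\theta-A)^{l}$. The factor $d\log Y/d\log X$ has a simple zero at $B$ (since $dX/X$ has a simple pole there by \eqref{eq:Xpoles} while $dY/Y$ is regular), which lowers each pole order by one; then summing over $j$ converts the falling factorial in $v$ into the operator $q_{r,k,l}(X\partial_X)\,(X\partial_X-1)\cdots(X\partial_X-l+1)$ acting on $1/(z-B)^{l-1}$, and the local expansion $X\partial_X=(-(z-B)+O((z-B)^2))\partial_z$ makes this regular at $B$ by an easy induction on $l$. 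None of this structure — the falling-factorial coefficients, the simple zero of $d\log Y/d\log X$, and the matching of $(v-1)\cdots(v-l+1)$ with $(X\partial_X-1)\cdots(X\partial_X-l+1)$ killing the pole — appears in your proposal, and without it the regularity at $z_0$ is not established.
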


\begin{proof} The proof is similar to the proof of \cite[Lemma~4.4]{bychkov2020topological}.
	
	Let $B$ be a zero of $R(\Theta(z))$, $B\notin \{0,\infty\}$, and furthermore let $|B|>1$. The proof for the case of a pole (not coinciding with $0$ and $\infty$) of $R(\Theta(z))$ is analogous.
	
	 Note that in this case we can write $\sigma^{(g)}_{m,n}$ 
	as
	\begin{align}\label{eq:WgnIreg}
		&\sigma^{(g)}_{m,n} = [\hbar^{2g}]\sum_{\substack{j\geq 1\\r \geq 0}} \left(X\partial_X\right)^{j-1}\Bigg(\frac{d\log Y}{d \log X}\\ \nonumber
		&\phantom{=}\times[v^j]\left((\partial_\theta+v\,\psi'(\theta))^re^{v\left(\frac{\cS(v\hbar\partial_\theta)}{\cS(\hbar\partial_\theta)}-1\right)\log R(\theta)}e^{v\left(\cS(v\hbar\partial_\theta)-1\right)P(\theta)}\right)\Bigg|_{\theta=\Theta(z)} \cdot
		\mathrm{reg}_r\Bigg),
	\end{align}
	where 
	$\mathrm{reg}_r$ is some expression regular in $z$ at $z=B$.
	
	From 
	the conditions of generality 
	there exists exactly one root $A$ of the numerator of $R(\theta)$ 
	such that $B$ is a root of equation $\Theta(z)=A$. Then note that
	\begin{equation}
		e^{v\left(\frac{\cS(v\hbar\partial_\theta)}{\cS(\hbar\partial_\theta)}-1\right)\log R(\theta)}e^{v\left(\cS(v\hbar\partial_\theta)-1\right)P(\theta)} = e^{v\left(\frac{\cS(v\,\hbar\,\partial_\theta)}{\cS(\hbar\,\partial_\theta)}-1\right)\log\left(\theta-A\right)} \cdot \mathrm{reg},
	\end{equation}
	where $\mathrm{reg}$ is regular in $\theta$ at $A$, and the pole in $z$ at $B$ in the whole expression for 
	$\sigma^{(g)}_{m,n}$
	can only come from the $e^{v\left(\frac{\cS(v\,\hbar\,\partial_\theta)}{\cS(\hbar\,\partial_\theta)}-1\right)\log\left(\theta-A\right)}$ part (after the substitution $\theta=\Theta(z)$). According to \cite[Lemma~4.1]{bychkov2020topological}, we can then rewrite
	\begin{align}\label{eq:Lv1}	
		& (\partial_\theta+v\,\psi'(\theta))^re^{v\left(\frac{\cS(v\hbar\partial_\theta)}{\cS(\hbar\partial_\theta)}-1\right)\log R(\theta)}e^{v\left(\cS(v\hbar\partial_\theta)-1\right)P(\theta)} \\ \notag
		& = (\partial_\theta+v\,\psi'(\theta))^r\left(\sum_{k=1}^\infty\hbar^{2k}v(v-1)(v-2)\cdots(v-2k+1) \dfrac{p_{k}(v)}{(\theta-A)^{2k}}+\widetilde{\mathrm{reg}}\right),
	\end{align}
	where $p_k(v)$ are some polynomials in $v$ and $\widetilde{\mathrm{reg}}$ is regular in $\theta$ at $A$. Note that
	\begin{equation}
		\psi'(\theta) = \dfrac{1}{\theta-A}+\mathrm{reg}^\psi_A,
	\end{equation}
	where $\mathrm{reg}^\psi_A$ is regular in $\theta$ at $A$. Thus
	\begin{equation}
		(\partial_\theta+v\,\psi'(\theta))\dfrac{1}{(\theta-A)^{2k}}= (v-2k) \dfrac{1}{(\theta-A)^{2k+1}} + O\left(\dfrac{1}{(\theta-A)^{2k}}\right).
	\end{equation}
	This means that we can rewrite \eqref{eq:Lv1} as
	\begin{equation}
		\sum_{k=1}^\infty\hbar^{2k}\sum_{l=1}^\infty v(v-1)(v-2)\cdots(v-l+1) \dfrac{\tilde{p}_{k,l}(v)}{(\theta-A)^{l}}+\widetilde{\widetilde{\mathrm{reg}}},
	\end{equation}
	where $\tilde{p}_{k,l}(v)$ are some polynomials in $v$ and $\widetilde{\widetilde{\mathrm{reg}}}$ is regular in $\theta$ at $A$. Now let us plug $\theta=\Theta(z)$ into this expression and substitute it into \eqref{eq:WgnIreg}. Note that for $z\rightarrow B$
	\begin{equation}
		\dfrac{1}{(\Theta(z)-A)^l} = \dfrac{C}{(z_1-B)^l}+O\left(\dfrac{1}{(z_1-B)^{l-1}}\right)
	\end{equation}
	for some constant $C$.
	
	Since $|B|>1$, $d\log X(z)$ has a simple pole at $B$, while $d \log Y(z)$ is regular at $B$, according to \eqref{eq:Xpoles} and \eqref{eq:Ypoles} respectively. Thus, $d \log Y/d\log X$ has a simple zero at $z_1=B$.
	
	This means 
	that Equation~\eqref{eq:WgnIreg} can be rewritten as
	\begin{align}
		&\sigma^{(g)}_{m,n}
		= 
		[\hbar^{2g}]\sum_{r=0}^\infty\sum_{j=1}^\infty \left(X\partial_X\right)^{j-1}
		[v^j]\, \left(v\sum_{k=2}^\infty\hbar^{k}\sum_{l=1}^\infty (v-1)(v-2)\cdots(v-l+1) \dfrac{q_{r,k,l}(v)}{(z-B)^{l-1}}+\widetilde{\mathrm{reg}}_r\right),
	\end{align}
	where $q_{r,k,l}$ are some expressions polynomial in $v$
	and $\widetilde{\mathrm{reg}}_r$ is regular in $z$ at $B$.
	Taking the sum over $j$ we can rewrite this expression as
	\begin{equation}\label{eq:WgnID}
		\sigma^{(g)}_{m,n} =
		[\hbar^{2g}]\sum_{r=0}^\infty \left(\sum_{k=2}^\infty\hbar^{k}\sum_{l=1}^\infty q_{r,k,l}(X\partial_X)\cdot (X\partial_X-1)(X\partial_X-2)\cdots(X\partial_X-l+1) \dfrac{1}{(z-B)^{l-1}}+\widetilde{\mathrm{reg}}_r\right).
	\end{equation}
	%
	Now note that, since $X\partial_X=(d\log X/dz)^{-1}\partial_z$ and $|B|>1$, then, again according to \eqref{eq:Xpoles},
	\begin{align}\label{eq:XDXser}
		X\partial_X = \left(-(z-B) + O\left((z-B)^2\right)\right)	 \partial_{z}
	\end{align}
	at $z\rightarrow B$, and, therefore, by an easy inductive argument,
	\begin{align}
		(X\partial_X-1)(X\partial_X-2)\cdots(X\partial_X-l+1) \dfrac{1}{(z-B)^{l-1}}
	\end{align}
	is regular at $z\rightarrow B$ for any $l\geq 1$. This implies that \eqref{eq:WgnID} is regular at $z\rightarrow B$.
\end{proof}

\begin{lemma} \label{lem:Fam1PfSecondLemma2} $\check\sigma^{(g)}_{0}+\tilde\sigma^{(g)}_{00}$ is regular at the zeroes and poles of $R(\Theta(z))$ which are not equal to $\infty$ and lie outside the unit circle on the $z$-plane.
\end{lemma}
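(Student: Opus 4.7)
The plan is to mirror the proof of Lemma~\ref{lem:Fam1PfSecondLemma}, with the factor $X\partial_X\Theta$ playing the role that $d\log Y/d\log X$ played there: both vanish to first order at any zero or pole $B$ of $\phi(\Theta(z))$ with $|B|>1$. Let $B$ be such a point with $\Theta(B)=A$ a simple zero of $R$ (the case of a pole is analogous). Writing $R(\theta)=(\theta-A)\tilde R(\theta)$ with $\tilde R(A)\neq0$ and applying \cite[Lemma~4.1]{bychkov2020topological} to the factor $e^{v(\cS(v\hbar\partial_\theta)/\cS(\hbar\partial_\theta)-1)\log(\theta-A)}$, the singular part of $L_0(v,\theta)$ near $\theta=A$ has the structure
\[
L_0(v,\theta) = \Bigl(1+\sum_{k\geq1}\hbar^{2k}v(v-1)\cdots(v-2k+1)\frac{p_k(v)}{(\theta-A)^{2k}}\Bigr)\cdot f(v,\theta,\hbar),
\]
where $p_k$ are specific polynomials and $f$ is regular and nonvanishing at $\theta=A$ with $f(0,\theta,\hbar)=1$.

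The key observation is that $\tilde\sigma^{(g)}_{00}$ represents precisely the ``missing $j=0$'' term of the formal sum $\sum_{j\geq0}(X\partial_X)^{j-1}[\hbar^{2g}v^{j+1}]L_0(v,\Theta(z))\,X\partial_X\Theta$ with $(X\partial_X)^{-1}$ interpreted as an antiderivative. Indeed, $[v^1]L_0(v,\theta)=(1/\cS(\hbar\partial_\theta)-1)\log R(\theta)$ and the identity $X\partial_X\Theta\cdot dX/X=d\Theta$ together imply that the antiderivative of $[\hbar^{2g}v^1]L_0(v,\Theta)\,X\partial_X\Theta$ with respect to $\log X$ equals $[u^{2g}](1/\cS(u))\cdot\partial_\theta^{2g-1}\log R(\Theta)=\tilde\sigma^{(g)}_{00}$, modulo a $z$-independent constant. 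Thus $\check\sigma^{(g)}_0+\tilde\sigma^{(g)}_{00}$ is precisely the completed formal sum $\sum_{j\geq0}(X\partial_X)^{j-1}[\hbar^{2g}v^{j+1}]L_0(v,\Theta)\,X\partial_X\Theta$.

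Substituting $\theta=\Theta(z)$ and using $X\partial_X\Theta\sim\epsilon\,\Theta'(B)(z-B)$ with $\epsilon=\pm1$, the singular contributions at $z=B$ take the form $\sum_{k,l}\hbar^{2k}v(v-1)\cdots(v-2k+1)\tilde q_{k,l}(v)(z-B)^{-(l-1)}$ for $1\leq l\leq 2k$. Applying the operator identity $\sum_{j\geq0}[v^{j+1}]Q(v)(X\partial_X)^{j-1}=Q(X\partial_X)/(X\partial_X)^2$ for $Q$ divisible by $v$, the falling-factorial factor $v(v-1)\cdots(v-2k+1)$ turns into $(X\partial_X-1)(X\partial_X-2)\cdots(X\partial_X-2k+1)$ modulo an antiderivative; acting on $(z-B)^{-(l-1)}$ with $X\partial_X\sim\epsilon(z-B)\partial_z$ by \eqref{eq:XDXser}, this falling factorial annihilates the singular terms exactly as in the proof of Lemma~\ref{lem:Fam1PfSecondLemma}, leaving a regular function at $B$. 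The main obstacle is the rigorous treatment of the formal antiderivative $(X\partial_X)^{-1}$, which is only defined modulo $z$-independent constants; this is handled by verifying the cancellation order by order in $\hbar$ on the explicit singular expansion. As a sanity check, for $g=1$ one computes $p_1(v)=-(v+1)/24$, and the leading singularities $\tilde\sigma^{(1)}_{00}\sim-1/(24\,\Theta'(B)(z-B))$ and $\check\sigma^{(1)}_0\sim\epsilon^2/(24\,\Theta'(B)(z-B))$ cancel exactly since $\epsilon^2=1$.
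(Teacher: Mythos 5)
Your argument is correct and essentially the paper's: both reduce the statement to the mechanism of Lemma~\ref{lem:Fam1PfSecondLemma}, with the simple zero of $X\partial_X\Theta$ at $B$ replacing that of $d\log Y/d\log X$, and your identification of $\tilde\sigma^{(g)}_{00}$ as the antiderivative of the missing $j=0$ term is exactly how the two pieces fit together. The only difference is cosmetic: where you verify the cancellation involving the formal antiderivative $(X\partial_X)^{-1}$ order by order in $\hbar$, the paper simply applies $X\partial_X$ once to $\check\sigma^{(g)}_{0}+\tilde\sigma^{(g)}_{00}$ --- legitimate because $X\partial_X\sim -(z-B)\partial_z$ preserves the order of any pole at $B$ --- which turns the antiderivative into an honest $j=0$ summand of $\sum_{j\ge0}(X\partial_X)^{j}[v^{j+1}]L_0(v,\Theta)\,X\partial_X\Theta$ and removes the bookkeeping issue entirely.
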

\begin{proof}	
	As in the proof of~Lemma~\ref{lem:Fam1PfSecondLemma}, let $B$ be a zero of $R(\Theta(z))$, $B\notin \{0,\infty\}$, and furthermore let $|B|>1$. The proof for the case of a pole (not coinciding with $0$ and $\infty$) of $R(\Theta(z))$ is analogous.
	
	Note that the expression $\check\sigma^{(g)}_{0}+\tilde\sigma^{(g)}_{00}$ has a pole at $z=B$ if and only if $X\partial_X(\check\sigma^{(g)}_{0}+\tilde\sigma^{(g)}_{00})$ has a pole there. Indeed, recall~\eqref{eq:XDXser}:
	\begin{align}
		X\partial_X = \left(-(z-B) + O\left((z-B)^2\right)\right)	 \partial_{z}\qquad \text{for } z\rightarrow B,
	\end{align}
	which directly implies  that the operator $X\partial_X$
	 preserves the degree of the pole at $B$ for any function.  We have:
	\begin{align}
		X\partial_X(\check\sigma^{(g)}_{0}+\tilde\sigma^{(g)}_{00}) &= [\hbar^{2g}]\sum_{j=0}^\infty (X\partial_X)^{j} [v^{j+1}]L_0(v,\Theta)X\partial_X\Theta\\ \nonumber
		&=[\hbar^{2g}]\sum_{j=0}^\infty (X\partial_X)^{j}
		[v^{j+1}]\left(e^{v\left(\frac{\cS(v\hbar\partial_\theta)}{\cS(\hbar\partial_\theta)}-1\right)\log R(\theta)}e^{v\left(\cS(v\hbar\partial_\theta)-1\right)P(\theta)}\right)\Bigg|_{\theta=\Theta(z)}X\partial_X\Theta.
	\end{align}
	
	From that point the proof becomes analogous to the proof of Lemma~\ref{lem:Fam1PfSecondLemma}.
	Just note that while the term $d\log Y/d \log X$ (which had a simple zero at $z\rightarrow B$) is absent, we have an extra $X\partial_X$ which accounts for the same effect.
\end{proof}

\begin{lemma} \label{lem:checksigma-infty}
	$\check\sigma^{(g)}_{0}+\tilde\sigma^{(g)}_{00}$ is regular at $z\rightarrow\infty$.
\end{lemma}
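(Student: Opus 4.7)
The strategy is to show that $\tilde\sigma^{(g)}_{00}$ and $\check\sigma^{(g)}_0$ are individually regular at $z = \infty$. I focus on the generic case where $p := \deg P \ge 1$ and $e \ge 1$; the degenerate cases ($P$ constant or linear, or $e = 0$) follow from the same principle with simpler bounds because the polynomial growth of $P$-derivatives is absent or absent at second order.

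First, I simplify $\tilde\sigma^{(g)}_{00}$. Substituting $\psi(\theta) = \log R(\theta) + P(\theta)$ and $\hat\psi(\theta) = \log R(\theta) + \cS(\hbar\partial_\theta)P(\theta)$ gives $\tfrac{1}{\cS(\hbar\partial_\theta)}\hat\psi(\theta) - \psi(\theta) = \bigl(\tfrac{1}{\cS(\hbar\partial_\theta)} - 1\bigr)\log R(\theta)$: the polynomial $P$-contributions cancel. Extracting $[\hbar^{2g}]$ and integrating against $\partial_Y\Theta\,dY = d\Theta$ yields $\tilde\sigma^{(g)}_{00} = c_g\,\partial_\theta^{2g-1}\log R(\theta)\big|_{\theta=\Theta(z)}$ with $c_0 = 0$ and $c_g = [u^{2g}]\tfrac{1}{\cS(u)}$ for $g\ge 1$. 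Since $R$ is rational, $\partial_\theta^{2g-1}\log R$ is a rational function of $\theta$ decaying as $\theta^{-(2g-1)}$ at $\theta = \infty$; and since $\Theta(z)\to\infty$ as $z\to\infty$, the function $\tilde\sigma^{(g)}_{00}$ vanishes at infinity, hence is regular there.

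For $\check\sigma^{(g)}_0$, I use \eqref{eq:Lrsubs} to write $L_0(v,\theta) = \exp(A(v,\hbar,\partial_\theta)\log R(\theta) + B(v,\hbar,\partial_\theta)P(\theta))$, with $A = v(\cS(v\hbar\partial_\theta)/\cS(\hbar\partial_\theta) - 1)$ and $B = v(\cS(v\hbar\partial_\theta) - 1)$, both of order $O(\hbar^2\partial^2)$. Expanding the exponential as a sum over products of factors and bookkeeping the orders: each $BP$-factor at $\hbar^{2k}$ contributes $P^{(2k)}(\Theta)\sim\Theta^{p-2k}$, while each $A\log R$-factor at $\hbar^{2l}$ contributes $(\log R)^{(2l)}(\Theta)\sim\Theta^{-2l}$. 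Thus a configuration with $n_B$ factors of $BP$ and $n_A$ factors of $A\log R$ (with total $\hbar$-order $2g$) has $\Theta$-degree bounded by $n_B p - 2g$; the $v$-power constraint then forces $n_B \le \lfloor(j+1)/3\rfloor$ for the coefficient $[v^{j+1}]$.

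Finally, I analyze $X\partial_X$ at $z = \infty$. From $d\log X = -d\log Y - \psi'(\Theta)\,d\Theta$ and $\Theta \sim \alpha_e z^e$, the dominant term of $d\log X/dz$ at $z\to\infty$ is of order $z^{ep-1}$, so $X\partial_X$ shifts the leading $z$-degree of any rational function by $-ep$, and in particular $X\partial_X\Theta \sim z^{-e(p-1)}$. Hence each term $(X\partial_X)^{j-1}\bigl([v^{j+1}][\hbar^{2g}]L_0(v,\Theta)\cdot X\partial_X\Theta\bigr)$ in $\check\sigma^{(g)}_0$ has leading $z$-degree at infinity at most $e\bigl(\lfloor(j+1)/3\rfloor p - 2g - (p-1) - (j-1)p\bigr) = e\bigl((\lfloor(j+1)/3\rfloor - j)p + 1 - 2g\bigr)$. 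Since $\lfloor(j+1)/3\rfloor \le j$ for $j\ge 1$, this is at most $e(1-2g) \le -e < 0$ for $g \ge 1$, so every term, and hence $\check\sigma^{(g)}_0$ itself, vanishes at infinity. Combined with the regularity of $\tilde\sigma^{(g)}_{00}$, this proves the lemma. The main technical obstacle is the combinatorial bound on $n_B$ subject to both the $v$- and $\hbar$-power constraints in the expansion of $L_0$.
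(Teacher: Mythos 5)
Your proof is correct and follows essentially the same route as the paper: a pole-order count at $z\to\infty$ based on the facts that each application of $X\partial_X$ (equivalently, each power of $v$) lowers the degree by $e\deg P$, that the $P$-dependent part of $L_0$ contributes at least $v^3$ per power of $\hbar^2$, and that the $\log R$ part only decreases degrees. The only difference is bookkeeping: the paper encodes the $X\partial_X$ count via the substitution $v=z^{-e\deg P}$ and checks the single dominant term $\tfrac{1}{24}v^2\hbar^2P''(\Theta)\Theta$, whereas you run a uniform bound over all configurations $(n_A,n_B)$ in the expansion of the exponential.
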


\begin{proof}
	It is clear that $\tilde\sigma^{(g)}_{00}$ is regular at $z\rightarrow\infty$. Let us prove that $\check\sigma^{(g)}_{0}$ is regular there as well.
	
	We recall from \eqref{eq:taugmndef} that $\check\sigma^{(g)}_{0}$ is equal to the coefficient of $\hbar^{2g}$ in the expansion of
	\begin{align}\label{eq:sigmacheck2}
		\sum_{j=1}^\infty (X\partial_X)^{j-1}
		[v^{j}]\left(\dfrac{1}{v}\,e^{v\left(\frac{\cS(v\hbar\partial_\theta)}{\cS(\hbar\partial_\theta)}-1\right)\log R(\theta)}e^{v\left(\cS(v\hbar\partial_\theta)-1\right)P(\theta)}\right)\Bigg|_{\theta=\Theta(z)}X\partial_X\Theta
	\end{align}
	and count the order of pole of this expression at $z\to \infty$.
	
	We begin with a few observations:
	\begin{itemize}
		\item The factor $e^{v\left(\frac{\cS(v\hbar\partial_\theta)}{\cS(\hbar\partial_\theta)}-1\right)\log R(\theta)}|_{\theta=\Theta(z)}$ has no pole at $z\to \infty$. 
		\item Note that $d\log\phi/dz=R'(\Theta(z))\Theta'(z)/R(\Theta(z)) + P'(\Theta(z))\Theta'(z)$ has a pole of order $(e\deg P -1)$ at $z\rightarrow\infty$. Thus, taking into account \eqref{eq:dlogxdlogY} and the fact that $Y(z)$ has a simple zero at $z\rightarrow \infty$, we can conclude that each application of $X\partial_X=(d\log X/dz)^{-1}\partial_{z}$ decreases the degree of the pole at $z\to \infty$ by $e\deg P$. The total effect of the operator $(X\partial_X)^{j-1}$ and of the separate occurrence of $X\partial_X$ is the decrease of the order of pole by $je\deg P$.
	\end{itemize}
	Hence, the order of pole of~\eqref{eq:sigmacheck2} at $z\to \infty$ is equal to the order of pole at $z\to \infty$ of the following expression:
	\begin{align} \label{eq:infinity-expr2-orderpole-B}
		& \Bigg(\dfrac{1}{v}
		 e^{v\left(\cS(v\,\hbar\,\partial_\theta)-1\right)P(\theta)
		}\Big|_{\theta=\Theta(z)}\Theta\Bigg)\Bigg|'_{v=z^{-e\deg P}},
	\end{align}	
	where by $|'$ we mean that we only select the terms with $\deg v \geq 2$ before the substitution $v=z^{-e\deg P}$.
	Note also that
	\begin{itemize}
		\item Since $\Theta(z)$ has a pole of order $e$ at $z\to\infty$, each application of $\partial_\theta$ decreases the order of pole in the resulting expression by $e$ (until the total number of $\partial_\theta$ exceeds $\deg P$).
	\end{itemize}	
	
	Note that $\cS(z)$ is an even series starting with $1$, and the first terms of the expansion of $e^{v\left(\cS(v\,\hbar\,\partial_\theta)-1\right)P(\theta)}$ in $v$ have the following form:
	\begin{equation}\label{eq:expexp}
		1+\dfrac{1}{24}v^3\hbar^2\partial_\theta^2 P(\theta),
	\end{equation}
	with all the other terms having only lower degree of the pole at $z\rightarrow \infty$. Since  $\Theta(z)$ has a pole of order $e$ at $z\to\infty$, after the substitution $v=z^{-e\deg P}$ the term
	\begin{equation}
		\dfrac{1}{24}v^2\hbar^2 P''(\Theta)\, \Theta
	\end{equation}
becomes regular at $z\rightarrow \infty$, and all the other terms as well (we disregard  the term $1$ from~\eqref{eq:expexp} due to the $\deg v \geq 2$ requirement).
		

	So,  \eqref{eq:infinity-expr2-orderpole-B} is regular at $z\to\infty$, and, therefore, $\check\sigma^{(g)}_{0}$ is regular at $z\to\infty$ as well.
\end{proof}

\begin{lemma} \label{lem:Fam1-tau1-infty}
	$\sigma^{(g)}_{m,n}$ is regular at $z\rightarrow\infty$.
\end{lemma}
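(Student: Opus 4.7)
The plan is to follow the template of the proof of Lemma~\ref{lem:checksigma-infty}, adapting the pole-order bookkeeping to the three extra structural elements present in the definition~\eqref{eq:taugmndef} of $\sigma^{(g)}_{m,n}$: the prefactor $d\log Y/d\log X$, the operator $(\partial_\theta+v\,\psi'(\theta))^r$ sitting inside $L_r(v,\Theta)$, and the factor $[u^r]\bigl(e^{-u\Theta}\cW^Y_{m,n+1}(u)\bigr)$.

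First, I would use the IH to control the $\cW^Y$-factor. From the graph expansion~\eqref{eq:cWexp}, for each fixed $(g',r)$ the coefficient $[\hbar^{2g'}u^r]\bigl(e^{-u\Theta}\cW^Y_{m,n+1}(u)\bigr)$ is a finite polynomial combination of the $H^{(g'')}_{m'',n''+1}$'s and their derivatives, with all internal $Y$-arguments substituted to $Y(z)$. By the IH these $H$-functions are holomorphic at $z=\infty$: the point $z=\infty$ is not a zero of $dY$ and is not a diagonal point $z=z_i$ with $i\in M\cup N$. The only potentially singular piece inside $\cT^Y_{0,1}$ is the correction $u\sum_k s_k/Y^k$ of~\eqref{eq:cT}; by Proposition~\ref{prop:unst} its leading-$\hbar$ part is cancelled exactly by the prefactor $e^{-u\Theta}$, and the remainder is $O(\hbar^2)$, polynomial in $u$ of bounded degree, and has rational coefficients in $z$ regular at $z=\infty$. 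In particular, only finitely many values of $r$ give a nonzero contribution at each fixed $\hbar^{2g}$.

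Second, I would perform the pole-order count. Substituting~\eqref{eq:Lrsubs} into~\eqref{eq:taugmndef}, the factor $e^{v\left(\cS(v\hbar\partial_\theta)/\cS(\hbar\partial_\theta)-1\right)\log R(\theta)}\big|_{\theta=\Theta(z)}$ is bounded as $z\to\infty$, since every coefficient of its $v^j\hbar^{2k}$-expansion is a polynomial in derivatives of $\log R$ evaluated at $\theta=\Theta(z)$, which are bounded at $\theta=\infty$. The remaining factor $(\partial_\theta+v\,\psi'(\theta))^r\,e^{v(\cS(v\hbar\partial_\theta)-1)P(\theta)}\big|_{\theta=\Theta(z)}$ contributes, at fixed $v^j\hbar^{2g}$, a polynomial expression in $\Theta$, $P^{(\ell)}(\Theta)$ and $\psi'(\Theta)$; using $\Theta\sim z^e$ and $P^{(\ell)}(\Theta),\psi'(\Theta)\sim z^{e(\deg P-\ell)}$, one gets an explicit upper bound on the pole order at $z=\infty$ in terms of $e$, $\deg P$, $j$, $r$. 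Then, exactly as in the proof of Lemma~\ref{lem:checksigma-infty}, each of the $j-1$ operators $X\partial_X$ lowers the order of pole at $z=\infty$ by $e\deg P$. The genuinely new ingredient is the prefactor $d\log Y/d\log X$: from~\eqref{eq:dlogxdlogY} it equals $-1-\psi'(\Theta)\,d\Theta/d\log X$, and a direct computation using the leading asymptotics of $X$, $Y$, $\Theta$ at $z\to\infty$ shows that it vanishes there to an order sufficient to absorb the growth $v^r\psi'(\Theta)^r$ introduced by $(\partial_\theta+v\psi'(\theta))^r$.

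The main obstacle will be the three-way balance between $j$ (the number of $X\partial_X$'s), $r$ (the power of $\psi'(\Theta)$), and the degree bound on $[u^r]\bigl(e^{-u\Theta}\cW^Y_{m,n+1}(u)\bigr)$ in $u$ at fixed $\hbar$-order. Once one establishes, uniformly over all $(j,r)$ with nonzero contribution, that the net pole order at $z=\infty$ is nonpositive, the conclusion follows as in Lemma~\ref{lem:checksigma-infty}. The edge case $j=1$, $r=0$ is handled by the identity $[v^1]e^{v(\cS(v\hbar\partial_\theta)-1)P(\theta)}=0$ (since the exponent is $O(v^3)$), and the remaining $j=1$, $r\ge 1$ cases are absorbed by the zero of $d\log Y/d\log X$ at $z=\infty$.
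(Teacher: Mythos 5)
Your plan follows the same route as the paper's proof: the same reduction modeled on Lemma~\ref{lem:checksigma-infty}, the same preliminary observations (regularity at $z=\infty$ of the factor $[u^r]e^{-u\Theta}\cW^Y_{m,n+1}(u)$ and of the $\log R$-part of $L_r$, the decrease of the pole order by $e\deg P$ under each $X\partial_X$, the compensating vanishing of $d\log Y/d\log X$ at $z=\infty$), and the same asymptotics $\Theta\sim z^e$, $P^{(\ell)}(\Theta)\sim z^{e(\deg P-\ell)}$, $\psi'(\Theta)\sim z^{e(\deg P-1)}$. The genuine gap is that the one step which actually constitutes the proof --- the verification, uniform over all $(j,r)$, that the net pole order is nonpositive --- is exactly the step you set aside as ``the main obstacle'' and never carry out. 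As written, the proposal is a correct inventory of ingredients plus a statement of what remains to be shown, not a proof.

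The paper closes this in essentially two lines. Since the combined effect of the prefactor $d\log Y/d\log X$ (zero of order $e\deg P$) and of $(X\partial_X)^{j-1}$ is a decrease of the pole order by $je\deg P$, the pole order of the entire $j$-sum equals that of $\sum_{r}\bigl((\partial_\theta+v\,\psi'(\theta))^{r}\,e^{v(\cS(v\hbar\partial_\theta)-1)P(\theta)}\big|_{\theta=\Theta(z)}\bigr)$ with $v$ formally replaced by $z^{-e\deg P}$ and only the terms of degree $\ge 1$ in $v$ retained (this substitution is what packages your ``three-way balance'' into a single weight). With this weight, each single application of $\partial_\theta+v\,\psi'(\theta)$ lowers the pole order by exactly $e$: the term $\partial_\theta$ lowers it by $e$, and the term $v\,\psi'(\theta)$ lowers it by $e\deg P-e(\deg P-1)=e$. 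Since the exponent $v(\cS(v\hbar\partial_\theta)-1)P(\theta)=O(v^3)$ already contributes nonpositive pole order in every $v^{\ge1}$-term, regularity follows uniformly in $r$, with no case analysis in $j$ needed (your $j=1$ edge cases are subsumed; note also that for $j=1$, $r=0$ the coefficient $[v^1]L_0$ is not zero --- the $\log R$-exponent contributes $(\tfrac{1}{\cS(\hbar\partial_\theta)}-1)\log R(\theta)|_{\theta=\Theta}$ --- but this is regular at $z=\infty$ and is killed by the zero of $d\log Y/d\log X$, so your conclusion there still holds even though your stated reason is incomplete).
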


\begin{proof}
	
	The proof is similar to the proof of Lemma~\ref{lem:checksigma-infty}. 
	
	We recall from \eqref{eq:taugmndef} that $\sigma^{(g)}_{m,n}$ is equal to the coefficient of $\hbar^{2g}$ in the expansion of
	\begin{align}\label{eq:taugmn2}
		&\sum_{\substack{j\geq 1\\r \geq 0}} (X\partial_X)^{j-1} \frac{d\log Y}{d \log X}\\ \nonumber
		&\phantom{=}\times
		[v^j]\left((\partial_\theta+v\,\psi'(\theta))^re^{v\left(\frac{\cS(v\hbar\partial_\theta)}{\cS(\hbar\partial_\theta)}-1\right)\log R(\theta)}e^{v\left(\cS(v\hbar\partial_\theta)-1\right)P(\theta)}\right)\Bigg|_{\theta=\Theta(z)}[u^r]e^{-u\,\Theta}\cW^Y_{m,n+1}(u)
	\end{align}
	and count the order of pole of this expression at $z\to \infty$.
	
	We begin with a few observations:
	\begin{itemize}
		\item Similar to what happened in the previous Lemma, the factor $e^{-u\,\Theta}\cW^Y_{m,n+1}(u)$ has no pole at $z\rightarrow \infty$.
		\item The factor $e^{v\left(\frac{\cS(v\hbar\partial_\theta)}{\cS(\hbar\partial_\theta)}-1\right)\log R(\theta)}|_{\theta=\Theta(z)}$ has no pole at $z\to \infty$ and cannot acquire one when acted upon by further operators $\partial_\theta$. This follows from the generality condition.
		\item Note that $d\log\phi/dz=R'(\Theta(z))\Theta'(z)/R(\Theta(z)) + P'(\Theta(z))\Theta'(z)$ has a pole of order $(e\deg P -1)$ at $z\rightarrow\infty$. Thus, taking into account \eqref{eq:dlogxdlogY} and the fact that $Y(z)$ has a simple zero at $z\rightarrow \infty$, we can conclude that the factor $d \log Y/d \log X$ has zero of order $e\deg P$ at $z\to \infty$ and each application of $X\partial_X=(d\log X/dz)^{-1}\partial_{z}$ decreases the degree of the pole at $z\to \infty$ by $e\deg P$. The total effect of the factor $d \log Y/d \log X$ and of $(X\partial_X)^{j-1}$ is the decrease of the order of pole by $je\deg P$.
	\end{itemize}
Hence, the order of pole of~\eqref{eq:taugmn2} at $z\to \infty$ is equal to the order of pole at $z\to \infty$ of the following expression:
\begin{align} \label{eq:infinity-expr2-orderpole-B2}
& \sum_{r=0}^\infty\Bigg(
(\partial_\theta+v\,\psi'(\theta))^r e^{v\left(\cS(v\,\hbar\,\partial_\theta)-1\right)P(\theta)
}\Big|_{\theta=\Theta(z)}\Bigg)\Bigg|'_{v=z^{-e\deg P}},
\end{align}	
where by $|'$ we mean that we only select the terms with $\deg v \geq 1$ before the substitution $v=z^{-e\deg P}$.
Note also that
\begin{itemize}
\item Since $\Theta(z)$ has a pole of order $e$ at $z\to\infty$, each $\partial_\theta$ decreases the order of pole in the resulting expression by $e$.
\item Multiplication by $\psi'(\theta)$ increases the order of pole by $e(\deg P-1)$.
\end{itemize}	
Taking into account these two observations and that each $v$ factor decreases the order of pole by $e\deg P$, we see that each application of the operator $\partial_\theta + v\psi'(\theta)$ decreases the order of pole in the resulting expression by $e$.

Since $P(\theta)$ comes together with at least one power of $v$, and any further application of $\partial_\theta$ or $\partial_\theta + v\psi'(\theta)$ only decreases the order of the pole, we see that this expression is regular at $z\rightarrow \infty$.



\end{proof}	


Let us proceed to the proof of Proposition~\ref{prop:sumnopoles}.

\begin{proof}[Proof of Proposition~\ref{prop:sumnopoles}]
	Lemmas~\ref{lem:Fam1PfSecondLemma}--\ref{lem:Fam1-tau1-infty}, 
	proved with the help of Equation~\eqref{eq:mairecX}, imply that the expression
	\begin{equation}
		H^{(g)}_{m+1,n}(z_1,\dots,z_{m+n+1})+H^{(g)}_{m,n+1}(z_1,\dots,z_{m+n+1})
	\end{equation}
has no ``unwanted'' poles outside the unit disk.

In a completely analogous way, using Equation~\eqref{eq:mairecY} instead of Equation~\eqref{eq:mairecX}, one can prove that there are no ``unwanted'' poles \emph{inside} the unit disk.

 This completes the proof of the proposition.	
\end{proof}

Finally we are ready to prove Theorem~\ref{th:projpr}.

\begin{proof}[Proof of Theorem~\ref{th:projpr}]
	The reasoning given in Section~\ref{sec:separ} implies that if (as given by Proposition~\ref{prop:sumnopoles})
	\begin{align}
		&H^{(g)}_{m+1,n}(X(z_1),\dots,X(z_{m+1}),Y(z_{m+2}),\dots,Y(z_{m+n+1}))\\\nonumber
		+&H^{(g)}_{m,n+1}(X(z_1),\dots,X(z_{m}),Y(z_{m+1}),\dots,Y(z_{m+n+1}))
	\end{align}
has no ``unwanted'' poles in $z=z_{m+1}$ and can only have diagonal poles and poles at the zeroes of $dX(z)$ and $dY(z)$, then $H^{(g)}_{m+1,n}$ can only have poles at the zeroes of $dX(z)$ and diagonal poles at $z=z_j$ for $j\geq m+2$; similarly for $H^{(g)}_{m,n+1}$. Due to the symmetry between the  $X$-arguments and (separately) the $Y$-arguments of $H^g_{m+1,n}$ (and $H^g_{m+1,n}$ as well), we have thus proved the induction step of Theorem~\ref{th:projpr}.


This completes the proof of the theorem.	
\end{proof}
\subsection{Topological recursion}
Projection property together with the linear and quadratic loop equations imply the \emph{spectral curve topological recursion} for, separately, the set of functions $H^{(g)}_{m,0}$ and the set of functions $H^{(g)}_{0,n}$: 

\begin{theorem}
	For $\hat{\phi}$ as in~\eqref{eq:phihat}, under the generality condition, the functions $H^{(g)}_{m,0}$ satisfy the spectral curve topological recursion on the spectral curve
	\begin{align}		
	x(z)=1/X(z),\quad y(z)=-X(z)\,\Theta(z),
\end{align}
i.e. with $\omega^{(0)}_{1}=y\, dx=\Theta\tfrac{dX}{X}$ and the functions $H^{(g)}_{0,n}$ satisfy the spectral curve topological recursion on the spectral curve
\begin{align}	
x(z)=1/Y(z),\quad y(z)=-Y(z)\,\Theta(z)
\end{align}
i.e. with $\omega^{(0)}_{1}=y\, dx=\Theta\tfrac{dY}{Y}$.
\end{theorem}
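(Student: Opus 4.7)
The strategy is to combine three ingredients that together are known to imply the spectral curve topological recursion in the Chekhov--Eynard--Orantin form: the projection property, the linear loop equation, and the quadratic loop equation. The first is the content of Theorem~\ref{th:projpr}, while the last two are instances ($r=1,2$) of the loop equations stated in Corollary~\ref{cor:HLE}. The remainder of the argument is to identify the spectral curve data, check genericity of the ramification, match the unstable differentials, and then invoke the standard reconstruction principle of Borot--Shadrin (see~\cite{BorotSha-Blobbed,BEO}) which uniquely recovers a system of differentials satisfying linear and quadratic loop equations together with the projection property from the prescribed $\omega^{(0)}_1$ and $\omega^{(0)}_2$.

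For the ``$X$-side'' recursion I would first verify the input data. With $x(z)=1/X(z)$ and $y(z)=-X(z)\Theta(z)$, the differential $dx$ shares its zeros (away from $\{X=0,\infty\}$) with $dX/X$, and under the generality condition of Section~\ref{sec:projprop} these zeros are simple; $y$ is regular there and $dy$ does not vanish, so the ramification points $p_1,\ldots,p_N$ of $x$ are simple and satisfy the usual assumptions for topological recursion. For the initial data one observes $\omega^{(0)}_1=y\,dx=\Theta\,dX/X$ (as prescribed) and, by Proposition~\ref{prop:unst}, $\omega^{(0)}_2=d_1d_2H^{(0)}_{2,0}=\tfrac{dz_1\,dz_2}{(z_1-z_2)^2}-\tfrac{dX_1\,dX_2}{(X_1-X_2)^2}$, which differs from the Bergman kernel $B=dz_1dz_2/(z_1-z_2)^2$ by a bidifferential holomorphic in $X_1,X_2$; as recalled in Section~1.1.2 of the introduction, this choice does not affect the recursion and the pair $(\omega^{(0)}_1,\omega^{(0)}_2)$ is equivalent to the standard one for topological recursion purposes.

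Next I would assemble the three ingredients. The projection property for $\omega^{(g)}_{m,0}$ in each argument $z_i$ is obtained directly from Theorem~\ref{th:projpr}: with $n=0$ there are no $Y$-variables and hence no off-diagonal unwanted poles, so the only poles of $\omega^{(g)}_{m,0}$ in $z_i$ lie at the zeros of $dX(z_i)$ (with the diagonals $z_i=z_j$ controlled, for the unstable $(0,2)$ case, by the Bergman kernel). The linear loop equation is the specialization of Corollary~\ref{cor:HLE} to $r=1$ and all arguments of $X$-type: $d_iH^{(g)}_{m,0}\in\Xi^{X_i}$, i.e.\ the principal part at a simple zero of $dX/X$ is anti-invariant under the local deck involution. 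Similarly $r=2$ gives the quadratic loop equation, expressing the symmetric combination of $\omega^{(g)}_{m+1,0}(z,z,z_{\llbracket m\rrbracket})$ and products $\omega^{(g_1)}_{|I_1|+1,0}(z,\cdot)\omega^{(g_2)}_{|I_2|+1,0}(z,\cdot)$ as a $z$-regular function at each $p_i$, up to the $D_X^2$-exact term already lying in $\Xi^X$. Given these three properties, the classical reconstruction argument (for each $z_i$ the residue formula at $p_1,\ldots,p_N$ reproduces $\omega^{(g)}_{m,0}$ from the principal parts dictated by the linear and quadratic loop equations) yields precisely the recursion~\eqref{eq:toprec}.

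Finally, the claim for $H^{(g)}_{0,n}$ is obtained by the verbatim symmetric argument under the $X\leftrightarrow Y$ exchange: the symmetric statement in Theorem~\ref{th:projpr} and Corollary~\ref{cor:HLE} (written with $Y$-arguments and $D_Y$) gives projection property, linear and quadratic loop equations at the zeros of $dY/Y$, and the same Borot--Shadrin reconstruction applies with the spectral curve $x(z)=1/Y(z)$, $y(z)=-Y(z)\Theta(z)$, $\omega^{(0)}_1=\Theta\,dY/Y$. The main conceptual point is already established: the subtle step is the projection property, carried out in Theorem~\ref{th:projpr} via the cancellation of unwanted poles at the preimages of zeros and poles of $R(\Theta)$ outside the unit circle; once it is in hand together with the loop equations, the topological recursion itself follows from the standard framework with no further enumerative or analytic input.
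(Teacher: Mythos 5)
Your proposal is correct and follows the same route as the paper: the theorem is stated there as an immediate consequence of the projection property (Theorem~\ref{th:projpr}) and the linear and quadratic loop equations (Corollary~\ref{cor:HLE}), combined with the standard Borot--Shadrin reconstruction principle, which is exactly what you assemble. Your additional checks of the unstable data via Proposition~\ref{prop:unst} and of the simplicity of the ramification points are consistent with the remarks the paper makes in its introduction and in Sections 4--5.
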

\begin{remark}
	The generality condition can actually be lifted, but one has to replace the topological recursion with the \emph{Bouchard-Eynard recursion} of~\cite{BE13}. This can be done in a completely analogous way to how it is discussed in \cite[Section~5.1]{bychkov2020topological}.
\end{remark}

\section{Specializations and examples}
\label{sec:Ex}

\subsection{Specialization \texorpdfstring{$t=0$}{t=0}}\label{sec:t0}
The above computations are applied for arbitrary choices of $(t,s,\hat\psi)$ (both in the formal and analytic settings). We wish now to compare our general results with some special cases known in the literature. In the case $t=0$ the equation of the spectral curve reduces to the following:
\begin{equation}
	Y=z^{-1},\qquad \Theta(z)=\sum_{i=1}^e s_i z^i,\qquad X(z) \phi(\Theta(z))=z.
\end{equation}
This is exactly the spectral curve that was considered in \cite{ACEH} in the case when $\phi$ is a polynomial and in~\cite{bychkov2021explicit,bychkov2020topological} for more general~$\phi$.

Remarkably, the correlator functions $H_{0,n}^{(g)}$ in this case are known explicitly: they are all identically equal to zero! The only nonzero terms that contribute to the inductive formula~\eqref{eq:mairecX} for $m=0$ are the singular corrections in~\eqref{eq:cT}. Applying the induction starting from the known $H_{0,n}^{(g)}=0$, we obtain subsequently $H_{1,n-1}^{(g)}$, $H_{2,n-2}^{(g)}$, and so on, and finally $H_{n,0}^{(g)}$. This computation of $H_{n,0}^{(g)}$ with all steps collected into a single expression coincides with that of~\cite{bychkov2021explicit} where it is formulated as a summation over the set of all connected graphs with $n$ labeled vertices. We just haven't mentioned in~\cite{bychkov2021explicit} that these `intermediate' functions obtained in the course of computations have the meaning of correlator functions $H^{(g)}_{m,n}$ with both $m$ and~$n$ positive.

Remark that even for this case $t=0$ the relation of Theorem~\ref{th:mainrecursion} provides an alternative (and even probably more efficient) way of computing these correlator functions. Namely, let us apply just one step of induction in the inverse direction, in order to express $H^{(g)}_{n,0}+H^{(g)}_{n-1,1}$ by~\eqref{eq:mairecY} in terms of the functions~$H^{(g')}_{n',0}$ with smaller values of $(g',n')$. Then, applying the procedure of separating poles described at the end of Sect.~\ref{sec:separ} we find $H^{(g)}_{n-1,1}$ and also $H^{(g)}_{n,0}$ as a byproduct.

Let us provide an explicit formula for (the derivative of) $H^{(g)}_{m,n}$ for this $t=0$ case. We have

\begin{proposition} \label{prop:t0} Consider the case $t=0$. Recall that in this case
\begin{equation}
		Y=z^{-1},\qquad \Theta(z)=\sum_{i=1}^e s_i z^i,\qquad X(z) \phi(\Theta(z))=z.
	\end{equation}

	Let
	\begin{equation}
	W^{(g)}_{m,n}:=\prod_{i=1}^m \left(X_i\partial_{X_i}\right)\prod_{j=m+1}^n \left(Y_j\partial_{Y_j}\right)H^{(g)}_{m,n}.	
\end{equation}
	
	Then for $2g-2+m+n>0,\,m+n>1$ we have
	\begin{align}\label{eq:mainprop}
		&W^{(g)}_{m,n}(X(z_1),\dots,X(z_m),Y(z_{m+1}),\dots,Y(z_{m+n}))\\ \nonumber &\phantom{==}=[\hbar^{2g
		}]
		U_m\dots U_1
		\sum_{\gamma \in \Gamma_{m,n}}\prod_{\{v_k,v_\ell\}\in E_\gamma} w_{k,\ell}\prod_{\{v_i,\widetilde{v}_j\}\in \widetilde{E}_\gamma} \widetilde{w}_{i,j},
	\end{align}
	where the sum is over all connected simple graphs $\gamma$ on $m$ labeled vertices $v_1,\dots,v_m$ with $n$ additional leaves (i.e. 1-valent vertices) $\widetilde{v}_1,\dots,\widetilde{v}_n$; $E_\gamma$ is the set of normal edges (i.e. connecting the vertices $v_1,\dots,v_m$), $\widetilde{E}_\gamma$ is the set of edges for which one of the endpoints is an additional leaf,
	\begin{align}\label{eq:wkl}
		w_{k,\ell}&:=e^{
			u_ku_\ell\cS(u_k\hbar\,z_k\partial_{z_k})\cS(u_\ell\hbar\,z_\ell\partial_{z_\ell})
			\frac{z_k z_\ell}{(z_k-z_\ell)^2}}-1,\\
		\widetilde{w}_{i,j}&:=u_i\cS(u_i\hbar\,z_i\partial_{z_i})	\frac{z_i z_j}{(z_i-z_j)^2},
	\end{align}
	and $U_{i}$ is the operator acting on a function~$f$ in~$u_i$ and $z_i$ by
	\begin{align}\label{eq:Uihbar}
		U_{i} f&=
		\sum_{j,r=0}^\infty \left(X_i\partial_{X_i}\right)^j\left(\frac{L^j_{r,i}
		}{Q_i}
		[u_i^r] \frac{e^{u_i(\cS(u_i\,\hbar\,z_i\partial_{z_i})-1)\Theta(z_i)}}{u_i
			\,\cS(u_i\,\hbar)}f(u_i,z_i)\right),
	\end{align}
	where
	\begin{align}
		\label{eq:Lr2}
		L^j_{r,i} &:= \left.\left([v^j]e^{-v\,\psi(\theta)}\partial_\theta^r e^{v\frac{\cS(v\,\hbar\,\partial_\theta)}{\cS(\hbar\,\partial_\theta)}\hat \psi(\theta)}\right)\right|_{\theta=\Theta(z_i)}\\ \nonumber
		&\phantom{:}=\left.\left([v^j]\left(\partial_\theta+v\psi'(\theta)\right)^r e^{v\left(\frac{\cS(v\hbar\partial_\theta)}{\cS(\hbar\partial_\theta)}\hat{\psi}(\theta)-\psi(\theta)\right)}\right)\right|_{\theta=\Theta(z_i)}
	\end{align}
	and $Q_i=Q(z_i)$, where
	\begin{equation}
		Q(z):=-\frac{Y(z)}{dY(z)}\frac{dX(z)}{X(z)}=1-z\, \Theta'(z)\,\psi'(\Theta(z)).
	\end{equation}
\end{proposition}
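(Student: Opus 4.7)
The plan is to argue by induction on $m$, exploiting two features special to the $t=0$ setting. First, in the Schur expansion of the tau function $Z(t,s) = \sum_\lambda \prod_{(i,j)\in\lambda} \hat\phi(\hbar(j-i))\, s_\lambda(t/\hbar)\, s_\lambda(s/\hbar)$, setting $t=0$ makes $s_\lambda(t/\hbar) = \delta_{\lambda,\emptyset}$, so $Z(0,s) = 1$ and hence $F(0,s) = 0$. All pure $s$-derivatives of $F$ therefore vanish, giving $H^{(g)}_{0,n} \equiv 0$ for all $n \geq 1$ and $g \geq 0$ in the $t=0$ case; this handles the base case $m=0$ of~\eqref{eq:mainprop}, since for $n \geq 1$ there is no connected simple graph on zero main vertices and $n$ leaves, so the right-hand side is an empty sum.

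For the inductive step, assume the formula holds for all $W^{(g')}_{m',n'}$ with $m' < m$. Applying Theorem~\ref{th:formalHz} with $m_{\mathrm{thm}} = m - 1$ expresses $X_m\partial_{X_m}H^{(g)}_{m,n}$ as $[\hbar^{2g}]$ times $U^{X_m}$ acting on $e^{-u\Theta(z_m)}\cW^Y_{m-1,n+1}(u)$; differentiating in the remaining $X$- and $Y$-variables then yields $W^{(g)}_{m,n}$. In the $t=0$ case the identities $X\phi(\Theta) = z$ and $Y = z^{-1}$ yield $\tfrac{X}{dX}\tfrac{dY}{Y} = 1/Q(z_m)$, the coefficient $[v^j]L_r(v,\Theta)|_{\theta=\Theta(z_m)}$ equals $L^j_{r,m}$, and the leading factor $e^{\cT^Y_{0,1}}/(u\cS(u\hbar))$ of $\cW^Y_{m-1,n+1}$ reduces, using $H^{(g')}_{0,\cdot}\equiv 0$ together with the singular correction $\sum_k s_k/Y^k$ in~\eqref{eq:cT} and the evenness of $\cS$, to $\exp(u\cS(u\hbar z_m\partial_{z_m})\Theta(z_m))/(u\cS(u\hbar))$. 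These three observations combine to rewrite $U^{X_m}$ followed by $e^{-u\Theta(z_m)}\cdot(\text{leading factor})$ as precisely the operator $U_m$ of~\eqref{eq:Uihbar}. The remaining factor $\prod_\alpha \cT^Y_{|I_\alpha|,|J_\alpha|+1}$ of $\cW^Y_{m-1,n+1}$ is then unfolded via~\eqref{eq:cWexp}: for blocks with $|I_\alpha|\ge 1$, the $k$-fold $Y$-derivative inside $\cT^Y$ inherits, by the inductive hypothesis, the graph sum for $W^{(g')}_{|I_\alpha|,|J_\alpha|+k}$, and the $k$ auxiliary $Y$-leaves (identified with $v_m$) become the $k$ parallel edges from $v_m$ to the vertices of $I_\alpha$; for blocks with $|I_\alpha|=0$, $|J_\alpha|=1$, $k=1$, the singular correction $Y_j/(Y_{\bar 1}-Y_j)$ produces the leaf weight $\widetilde w_{m,j}$ after applying the $\cS$-operator.

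The main obstacle is the combinatorial reorganization that merges these pieces. I will need to verify that, after summing over the multiplicity $k\ge 1$ of parallel edges from $v_m$ to a given $v_i$ and over the unordered partition $\{K_\alpha\}$ of $[1,m-1]\sqcup[m+1,m+n]$ into nonempty blocks, the symmetric $\cS$-operator structure $\cS(u_i\hbar z_i\partial_{z_i})\cS(u_m\hbar z_m\partial_{z_m})$ and the kernel $\tfrac{z_i z_m}{(z_i-z_m)^2}$ emerge from the $Y$-derivatives of $H^{(0)}_{1,1}(z_i,z_m)=-\log(1-z_i/z_m)$ (the only unstable two-point block coupling one $X$- and one $Y$-variable, provided by Proposition~\ref{prop:unst}), and that the resulting sum exactly reorganizes into the sum over all connected simple graphs on $\{v_1,\ldots,v_m\}$ with $n$ leaves carrying the asserted weights, the $-1$ in $w_{i,m}=\exp(\cdots)-1$ enforcing the constraint $k\ge 1$. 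This step parallels and extends to $n>0$ the combinatorial identity used in~\cite[Section~5]{bychkov2021explicit} to derive the $n=0$ formula; the new content is the appearance of the $Y$-leaves $\widetilde w_{i,j}$ from the singular corrections in~\eqref{eq:cT} rather than from inductive sub-graphs.
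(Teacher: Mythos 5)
Your proposal follows essentially the same route the paper takes: the paper's own justification of Proposition~\ref{prop:t0} is precisely the observation that $H^{(g)}_{0,n}\equiv 0$ at $t=0$ (so the recursion of Theorem~\ref{th:formalHz} builds up $H^{(g)}_{1,n}$, $H^{(g)}_{2,n}$, \dots\ one $X$-variable at a time, with only the singular corrections of~\eqref{eq:cT} seeding the leaf weights), together with the remark that collecting all steps reproduces the graph-sum of~\cite{bychkov2021explicit}. The combinatorial reassembly you flag as needing verification (parallel-edge resummation giving the $-1$ in $w_{k,\ell}$, the $\hbar^{2(k-1)}$ genus bookkeeping, the emergence of the symmetric $\cS$-kernel from $H^{(0)}_{1,1}$) is exactly the step the paper also leaves to the cited reference, so your account is, if anything, more explicit than the paper's own.
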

For $m=1,\, n=0$ there is an additional term in the formula, but we do not write this case here for brevity, as it is completely covered by \cite{bychkov2021explicit,bychkov2020topological}.
It is also straightforward to write an explicit formula for the $H^{(g)}_{m,n}$ function itself (rather than $W^{(g)}_{m,n}$), but it is bulkier and we do not list it here for brevity. 

It is also useful to consider an even deeper specialization $t=0$, $s=0$. In this case $\Theta=0$ and the above formulas get simplified. The spectral curve becomes $Y=z^{-1}$, $X=z$. All $H_{0,n}^{(g)}$ and $H_{m,0}^{(g)}$ vanish, but $H_{m,n}^{(g)}$ where $m$ and $n$ are both nonzero are interesting, since they correspond to the weighted double Hurwitz numbers with all preimages of $0$ and $\infty$ being marked (which is equivalent to absence of any markings at all), and this way we get a rather simple formula for their generating functions.

\subsection{Specialization \texorpdfstring{$\phi(\theta)=1+\theta$}{phi(theta)=1+theta}}
Specialization $\phi(\theta)=1+\theta$ or $\psi(\theta)=\log(\phi(\theta))=\log(1+\theta)$ corresponds to enumeration of maps/bicolored maps, see discussion in Sect.~\ref{sec:separ}. In the case of just maps we have further specialization $s=(0,1,0,\dots)$ i.e. $s_k=\delta_{k,2}$. One of the ways to write the equation of spectral curve for this case that can be found in the literature (see, for example,~\cite{CEO,EynardBook}) is
\begin{equation}
	\begin{aligned}
		x(z)&=a+c\Bigl(z+\frac1z\Bigr),\\
		y(z)&=\left[x(z)-\sum_{k=1}^dt_kx(z)^{k-1}\right]_{\le0},
	\end{aligned}
\end{equation}
where $\left[\;\cdot\;\right]_{\le0}$ denotes the polar part of the corresponding Laurent polynomial in~$z$. The constants~$a$ and~$c$ are functions in~$t$ parameters determined by two implicit algebraic equations that can be written in the form
\begin{equation}
	[z^0]\,y(z)=0,\qquad [z^{-1}]\,y(z)=\frac1{c},\qquad a|_{t=0}=0,\quad c_{t=0}=1.
\end{equation}

The generating series enumerating maps is the power expansion of the corresponding correlator functions in the local coordinate $X=x^{-1}$ at the point $z=0$ on the spectral curve. In this case, that is for $\phi(\theta)=1+\theta$, $s_k=\delta_{2,k}$ the above equations produce the same spectral curve as the one introduced in Sect.~\ref{sec:spectral} with the identifications
\begin{equation}
	X=\frac1{x},\qquad Y=\frac1{x-y},\qquad \Theta=\frac{1}{X Y}-1=x\,(x-y)-1.
\end{equation}
It is an exercise to check that all requirements of Definition~\ref{def:spectralcurve} are satisfied. Remark that with this identification we have
\begin{equation}
	\Theta\frac{dX}{X}=y\,dx+\Bigl(\frac1x-x\Bigr) dx=dH^{(0)}_{1,0}-\sum_{k-1}^d t_kx^{k-1}dx.
\end{equation}
We conclude that $\Theta\frac{dX}{X}$, $y\,dx$, and $dH^{(0)}_{1,0}$ are three possible equivalent choices for the form~$\omega^{(0)}_1$ of topological recursion, since they differ by meromorphic differentials in~$x$.

\medskip
In the case of bicolored maps the parameters $s_k$ are chosen arbitrarily. The equation of the spectral curve for this case can be found in~\cite{EynardBook}. With a small change of notation it can be formulated by saying that $x=\frac1X$ and $y=\frac1Y$ are Laurent polynomials of the form
\begin{equation}
	\begin{aligned}
		x(z)&=\gamma z^{-1}+\sum_{k=0}^{e-1}\alpha_k z^k,\\
		y(z)&=\gamma z+\sum_{k=1}^{d-1}\beta_k z^{-k}
	\end{aligned}
\end{equation}
uniquely determined by the requirements
\begin{equation}
	\begin{aligned}
		-\sum_{k=1}^{d}t_k x(z)^k&=-x(z)\,y(z)+1+O(z),\\
		-\sum_{k=1}^{e}s_k y(z)^k&=-x(z)\,y(z)+1+O(z^{-1}).
	\end{aligned}
\end{equation}
We see that these conditions become equivalent to those of Definition~\ref{def:spectralcurve} if we set $x\,y=\Theta-1$ which exactly corresponds to the case $\phi(\theta)=1+\theta$. Remark that our choice of the form $\omega^{(0)}_{1}=\Theta\frac{dX}{X}$ of topological recursion becomes in this case
\begin{equation}
	\Theta\frac{dX}{X}=-(1+x\,y)\frac{dx}{x}=-y\,dx-\frac{dx}{x}.
\end{equation}
The second summand does nor affect the recursion and the first summand differs from the conventional form $y\,dx$ by a sign. Therefore, the recursions for both choices for $\omega^{(0)}_{1}$ produce the same differentials, up to a sign convention.

\begin{remark}
	In most of the sources devoted to topological recursion for (bicolored) maps the authors make an additional restriction $t_1=t_2=0$ forbidding $1$-gons and $2$-gons for the maps. This restriction comes from the matrix models where these terms are usually absorbed via a suitable change of variables in the matrix integrals.  Nevertheless, the combinatorial problem itself is well posed with the presence of $1$ and $2$-gons, the equation of the spectral curve is resolved independently of the values of $t_1$ and $t_2$ and the topological recursion holds. In fact, with the approach of the present paper based on the VEVs and operator formalism the role of $t_1$ and $t_2$ variables does not differ from the role of any other $t$-variable.
\end{remark}

\subsection{Computations for small \texorpdfstring{$g$}{g} and \texorpdfstring{$n$}{n}}

\subsubsection{\texorpdfstring{$(0,3)$}{(0,3)}-case}

In the case $g=0$, $m+n+1=3$ both~\eqref{eq:mairecX} and~\eqref{eq:mairecY} applied to  functions $H^{(0)}_{3,0}(X(z_1),X(z_2),X(z_3))$, \dots, $H^{(0)}_{0,3}(Y(z_1),Y(z_2),Y(z_3))$ give
\begin{align*}
H^{(0)}_{0,3}+H^{(0)}_{1,2}&=f(z_1)\tfrac{z_1 z_1}{(z_2-z_1) (z_3-z_1)}+c_1(z_1,z_3),\\
H^{(0)}_{1,2}+H^{(0)}_{2,1}&=-f(z_2)\tfrac{z_1 z_2}{(z_2-z_1) (z_2-z_3)}+c_2(z_1,z_3),\\
H^{(0)}_{2,1}+H^{(0)}_{3,0}&=f(z_3)\tfrac{z_1 z_2}{(z_3-z_1) (z_3-z_2)}+c_3(z_1,z_2),
\end{align*}
where
\begin{equation}
f(z)=-\psi'(\Theta(z))\tfrac{X(z)}{z\partial_zX(z)}\tfrac{Y(z)}{z\partial_zY(z)}.
\end{equation}
Let us represent
\begin{equation}
f(z)=f^+(z)+f^-(z)+c
\end{equation}
where $f^+$ is holomorphic in $|z|<1$, $f^-$ is holomorphic in $|z|>1$, and $f^+(0)=f^-(\infty)=0$. Then, applying the procedure of splitting of poles to either of the above three formulas we obtain, explicitly,
\begin{align*}
H^{(0)}_{3,0}&=
      f^+(z_1)\tfrac{z_2 z_3}{(z_1-z_2) (z_1-z_3)}
     +f^+(z_2)\tfrac{z_3 z_1}{(z_2-z_3) (z_2-z_1)}
     +f^+(z_3)\tfrac{z_1 z_2}{(z_3-z_1) (z_3-z_2)},\\[1ex]
H^{(0)}_{2,1}&=
     -f^+(z_1)\tfrac{z_1 z_2}{(z_1-z_2) (z_1-z_3)}
     -f^+(z_2)\tfrac{z_1 z_2}{(z_2-z_3) (z_2-z_1)}
 +(f^-(z_3)+c)\tfrac{z_1 z_2}{(z_3-z_1) (z_3-z_2)},\\[1ex]
H^{(0)}_{1,2}&=
  (f^++c)(z_1)\tfrac{z_1^2  }{(z_1-z_2) (z_1-z_3)}
     -f^-(z_2)\tfrac{z_1 z_2}{(z_2-z_3) (z_2-z_1)}
     -f^-(z_3)\tfrac{z_1 z_3}{(z_3-z_1) (z_3-z_2)},\\[1ex]
H^{(0)}_{0,3}&=
      f^-(z_1)\tfrac{z_1^2  }{(z_1-z_2) (z_1-z_3)}
     +f^-(z_2)\tfrac{z_2^2  }{(z_2-z_3) (z_2-z_1)}
     +f^-(z_3)\tfrac{z_3^2  }{(z_3-z_1) (z_3-z_2)}.
\end{align*}

\subsubsection{\texorpdfstring{$(1,1)$}{(1,1)}-case}
Let $\hat\psi(\theta)=\psi(\theta)+\psi_1(\theta)\hbar^2+O(\hbar^4)$. Denote
\begin{multline}
W^{X,(0)}_{2,0}(X,X)=X'\partial_{X'}(X\partial_X H^{(0}_{2,0})\Bigm|_{X'=X}=\\
\left(\tfrac{X(z)}{dX(z)}\right)^2\lim_{z'\to z}\left(\tfrac{dz\,dz'}{(z-z')^2}-\tfrac{dX(z)\,dX(z')}{(X(z)-X(z'))^2}\right)
=\tfrac{X(z)^2 \left(3 X''(z)^2-2 X^{(3)}(z) X'(z)\right)}{12 X'(z)^4}.
\end{multline}
Then, in the notations of Theorem~\ref{th:mainrecursion}, we have
\begin{equation}
e^{-u\,\Theta}\cW^X_{1,0}(u,X)=\frac1{u}+\left(\frac{u}{2}W^{X,(0)}_{0,2}(X,X)-\frac{u}{24}+\frac{u^2}{24}(X\partial_X)^2\Theta\right)\hbar^2+O(\hbar^4).
\end{equation}
We have also
\begin{equation}
\begin{aligned}
L_0(v,\theta)&=1+v\Bigl(\frac{v^2-1}{24}\psi''(\theta)+\psi_1(\theta)\Bigr)\,\hbar^2+O(\hbar^4),\\
L_1(v,\theta)&=v\,\psi'(\theta)+O(\hbar^2),\\
L_2(v,\theta)&=v\,\psi''(\theta)+v^2\psi'(\theta)^2+O(\hbar^2).
\end{aligned}
\end{equation}

Therefore, applying~\eqref{eq:mairecY}, for example, we obtain in this case
\begin{multline}\label{eq:omega11}
H^{(0)}_{1,0}+H^{(0)}_{0,1}=
\sum_{j=0}^2 (Y\partial_Y)^{j}[v^j] \biggl(\tfrac{Y}{dY}\tfrac{dX}{X}\Bigl(
v \psi '(\Theta ) \bigl(\tfrac1{24}-\tfrac1{2}W^{X,(0)}_{2,0}(X,X)\bigr)\\
-\tfrac1{24}\bigl(v^2 \psi '(\Theta )^2+v \psi ''(\Theta )\bigr)(X\partial_X)^2\Theta\Bigr)
+  \tfrac{v^2-1}{24} \psi '(\Theta)+\int\psi_1(\theta)d\theta\Bigm|_{\theta=\Theta}\biggr)
\end{multline}
As usual, the two summands on the left hand side are identified by the separating poles procedure.

\subsection{One more example}

Along with the cases $t=0$ and $\phi(\theta)=1+\theta$ known in the literature and discussed above let us consider just the next simplest (apparently new) case when $\phi$ is quadratic and only $t_1$ and $s_1$ are different from zero. Namely, let
\begin{equation}
\begin{aligned}
{\mathbf t}&=(t,0,0,\dots),\\
{\mathbf s}&=(s,0,0,\dots),\\
\phi(\theta)&=1+c\,\theta+\theta^2.
\end{aligned}
\end{equation}
Then we obtain, explicitly,
\begin{equation}
\begin{aligned}
\frac1X&=\tfrac{1}{1-s t}(z^{-1}+c s+s^2 z),\\
\frac1Y&=\tfrac{1}{1-s t}(z+c t+t^2z^{-1}),\\
\Theta&=\tfrac{1}{1-s t}(s\,z+c s t+t\,z^{-1})
\end{aligned}
\end{equation}
(our choice for the normalization of the coordinate $z$ differs here from~\eqref{eq:Xzcoef}). Indeed, one checks by direct substitution that with this choice of $\Theta$ the Laurent polynomial $\phi(\Theta)$ factorizes as~\eqref{eq:XYTheta}, and, moreover,
\begin{equation}
\Theta=t X^{-1}+s z=s Y^{-1}+t z^{-1}
\end{equation}
which agrees with~\eqref{eq:ThetaX}--\eqref{eq:ThetaY}. We have
\begin{equation}
\begin{aligned}
X\partial_X&=\frac{1+c s z+s^2 z^2}{(1-s z) (1+s z)}z\partial_z,\\
Y\partial_Y&=-\frac{z^2+c t z+t^2}{(z-t) (t+z)}z\partial_z.
\end{aligned}
\end{equation}
The first operator has poles at $z=\pm s^{-1}$. These poles are zeroes of the form $\frac{dX}{X}$ and they converge to infinity as the $(s,t)$-parameters tend to zero.
The second operator has poles at $z=\pm t$. These poles are zeroes of the form $\frac{dY}{Y}$ and they converge to zero as the $(s,t)$-parameters tend to zero. This observation allows one to single out the summands on the left hand sides of~\eqref{eq:mairecX} and~\eqref{eq:mairecY}. Applying computations for the cases $(g,m+n)=(0,3)$ and $(1,1)$ above we find explicitly, for example,
\begin{align}
H^{(0)}_{3,0}&
=\frac{(1+c/2) s^3 z_1 z_2 z_3}{(1-s z_1) (1-s z_2) (1-s z_3)}
+\frac{(1-c/2) s^3 z_1 z_2 z_3}{(1+s z_1) (1+s z_2) (1+s z_3)},\\
H^{(0)}_{0,3}&
=\frac{(1+c/2) t^3}{(z_1-t) (z_2-t) (z_3-t)}
+\frac{(1-c/2) t^3}{(t+z_1) (t+z_2) (t+z_3)},\\
H^{(1)}_{1,0}(X)&
=\frac{1}{24} \left(X\partial_X\Bigl(\frac{s z}{1-s z}+\frac{s z}{1+s z}\Bigr)
-\Bigl(\frac{(1+c) s z}{1-s z}+\frac{(1-c) s z}{1+s z}\Bigr)\right),\\
H^{(1)}_{0,1}(Y)&
=\frac{1}{24} \left(Y\partial_Y\Bigl(\frac{t}{z-t}+\frac{t}{z+t}\Bigr)
-\Bigl(\frac{(1+c) t}{z-t}+\frac{(1-c) t}{z+t}\Bigr)\right).
\end{align}
Remark that the topological recursion is applicable as well and leads to the same answers.

\bibliographystyle{alphaurl}
\bibliography{FullySimple}
\end{document}